\def\boldsymbol#1{#1}
\begin{document}
%
% fixme stuff
%
\FXRegisterAuthor{sm}{asm}{SM}%Stefan
\FXRegisterAuthor{hu}{ahu}{HU}%Henning
\FXRegisterAuthor{ja}{aja}{JA}%Jirka

%
% Begin frontmatter
%
\title[A Categorical Approach to Syntactic Monoids]{A Categorical Approach to Syntactic Monoids}

% Author 1
\author[J.~Ad\'amek]{Ji\v{r}\'\i\ Ad\'amek{\rsuper{a}}}
\author[S.~Milius]{Stefan Milius{\rsuper{b}}}
\thanks{Stefan Milius acknowledges support by the Deutsche Forschungsgemeinschaft (DFG) under project MI~717/5-1}
\author[H.~Urbat]{Henning Urbat{\rsuper{c}}}
\thanks{Ji\v{r}\'{i} Ad\'amek and Henning Urbat acknowledge
    support by the Deutsche Forschungsgemeinschaft (DFG) under project AD
    187/2-1}

\address{{\lsuper{a}}Institut f\"ur Theoretische Informatik, Technische Universit\"at Braunschweig, Germany}
\email{adamek@iti.cs.tu-bs.de}

% Author 2
\address{{\lsuper{b,c}}Lehrstuhl f\"ur Theoretische Informatik, Friedrich-Alexander-Universit\"at Erlangen-N\"urnberg, Germany}
\email{mail@stefan-milius.eu, henning.urbat@fau.de}

\keywords{Syntactic monoid, transition monoid, algebraic automata theory, duality, coalgebra, algebra, symmetric monoidal closed category, commutative variety}
\subjclass{F.4.3 Formal Languages}

\begin{abstract}
The syntactic monoid of a language is generalized to the level of a symmetric 
monoidal closed category $\D$. This allows for a uniform treatment of several 
notions of syntactic algebras known in the literature, including the syntactic 
monoids of Rabin and Scott ($\D=$ sets), the syntactic ordered monoids of Pin 
($\D =$ posets), the syntactic semirings of Pol\'ak ($\D=$ semilattices), and 
the syntactic associative algebras of Reutenauer ($\D$ = vector spaces).  
Assuming that $\D$ is a commutative variety of algebras or ordered algebras, we prove 
that the 
syntactic $\D$-monoid of a language $L$ can be constructed as a quotient of a 
free $\D$-monoid modulo the syntactic congruence of $L$, and that it is 
isomorphic to the transition $\D$-monoid of the minimal automaton for $L$ in 
$\D$. Furthermore, in the case where the variety $\D$ is locally finite, we 
characterize the regular languages as precisely the languages with finite 
syntactic $\D$-monoids.
\end{abstract}

\maketitle

%
% end of frontmatter - start of paper
%

\section{Introduction}

One of the successes of the theory of coalgebras is that ideas from automata theory can be developed at a level of abstraction where they apply uniformly to many different types of systems. In fact, classical deterministic automata are a standard example of coalgebras for an endofunctor. And that automata theory can be studied with coalgebraic methods rests on the observation that formal languages form the final coalgebra.

The present paper contributes to a new category-theoretic view of \emph{algebraic} automata theory. In this theory one starts with an elegant machine-independent notion of language recognition: a language $L\seq X^*$ is \emph{recognized} by a monoid morphism $e: X^* \to M$ if it is the preimage under $e$ of some subset of $M$. Regular languages are then characterized as precisely the languages recognized by finite monoids. A key concept, introduced by Rabin and Scott~\cite{rs59} (and earlier in unpublished work of Myhill), is the \emph{syntactic monoid} of a language $L$. It serves as a canonical algebraic recognizer of $L$, namely the smallest $X$-generated monoid recognizing $L$. Two standard methods to construct the syntactic monoid are: 
\begin{enumerate}[label=(\arabic*)]
\item as a quotient of the free monoid $X^*$ modulo the \emph{syntactic congruence} of $L$, which is a two-sided version of the well-known Myhill-Nerode equivalence, and 
\item as the \emph{transition monoid} of the minimal automaton for $L$. 
\end{enumerate}
In addition to syntactic monoids there are several related notions of 
syntactic algebras for (weighted) languages in the literature, most 
prominently the syntactic ordered monoids of Pin \cite{pin95}, the syntactic 
idempotent semirings of Pol\'ak~\cite{polak01} and the syntactic associative 
algebras of Reutenauer~\cite{reu80}, all of which admit constructions similar 
to (1) and (2). A crucial observation is that monoids, ordered monoids, 
idempotent semirings and associative algebras are precisely the monoid objects 
in the categories of sets, posets, semilattices and vector spaces, 
respectively. Moreover, all these categories are symmetric monoidal closed 
w.r.t. their usual tensor product. 

The main goal of our paper is to develop a theory of algebraic recognition in 
a general symmetric monoidal closed category $\D=(\D,\t,I)$, covering all the above syntactic algebras  uniformly. Following Goguen 
\cite{goguen75}, a \emph{language} in $\D$ is a morphism $L: \xs\ra Y$ where 
$X$ is a fixed object of inputs, $Y$ is a fixed object of outputs, and $\xs$ 
denotes the free $\D$-monoid on $X$. And a \emph{$\D$-automaton} is given by 
the picture below: it consists of an object of states $Q$, a morphism $i$ 
representing the initial state, an output morphism $f$, and a transition 
morphism $\delta$ which may be presented in its curried form $\lambda\delta$.
\begin{equation}
  \label{eq:daut}
  \vcenter{
    \xymatrix{
      & X \t Q \ar[d]^\delta &\\
      I \ar[r]^i & Q\ar[r]^f \ar[d]^{\lambda\delta} & Y\\
      & [X,Q] &
    }
  }
\end{equation}
As observed by Rutten \cite{rutten}, this means that an automaton is at the same time an \emph{algebra} $I+X\t Q\xra{[i,\delta]} Q$ for the functor $FQ= I+ X\t Q$, and a \emph{coalgebra} $Q\xra{\langle f,\lambda \delta\rangle } Y\times [X,Q]$ for the functor $TQ = Y\times[X,Q]$. It turns out that much of the  classical (co-)algebraic theory of automata in the category of sets extends to this level of generality. Thus Goguen \cite{goguen75} demonstrated that the initial algebra for $F$ coincides with the free $\D$-monoid $\xs$, and that every language is accepted by a unique minimal $\D$-automaton. We will add to this picture the observation that the final coalgebra for $T$ is carried by the object of languages $[\xs,Y]$, see Proposition \ref{prop:fincoalg}.

In Section \ref{sec:syn} we introduce the central concept of our paper, the 
\emph{syntactic $\D$-monoid} of a language $L:\xs\ra Y$, which by definition 
is the smallest $X$-generated $\D$-monoid recognizing $L$. In Corollary \ref{cor:syntmonexists} we give a natural condition on a monoidal category 
that ensures the existence of a syntactic $\D$-monoid for every 
language $L$. Furthermore, assuming that $\D$ is a commutative variety of 
algebras or ordered algebras, we will show that the above constructions (1) 
and (2) for the classical syntactic monoid adapt to our general setting: the 
syntactic $\D$-monoid is (1) the quotient of $\xs$ modulo the syntactic 
congruence of $L$ (Theorem~\ref{prop:synmon}), and (2) the transition 
$\D$-monoid 
of the minimal $\D$-automaton for $L$ (Theorem~\ref{thm:tran}). As special 
instances we recover syntactic monoids ($\D=$ sets), 
syntactic ordered monoids ($\D=$ posets), syntactic semirings 
($\D=$ semilattices) and syntactic associative algebras ($\D=$ vector spaces). In addition, our 
categorical setting yields 
new types of syntactic algebras ``for free''. For example,  we will identify 
monoids with zero as the algebraic structures representing partial automata 
(the case $\D=$ pointed sets) and obtain the \emph{syntactic monoid with 
zero} for a given language. Similarly, by taking as $\D$ the variety of 
algebras with an involutive unary operation we obtain \emph{syntactic 
involution monoids}.

Most of the results of our paper apply to arbitrary languages. In Section \ref{sec:rat} we will investigate \emph{$\D$-regular languages}, that is, languages accepted by $\D$-automata with a finitely presentable object of states. Under suitable assumptions on $\D$, we will prove that a language is $\D$-regular iff its syntactic $\D$-monoid is carried by a finitely presentable object (Theorem~\ref{thm:reg}). We will also derive a dual characterization of the syntactic $\D$-monoid which is new even in the ``classical'' case $\D=$ sets: if $\D$ is a locally finite variety, i.e. all finitely generated algebras are finite, and if moreover some other locally finite variety $\C$ is dual to $\D$ on the level of finite objects, the syntactic $\D$-monoid of $L$ dualizes to the local variety of languages in $\C$ generated by the reversed language of $L$.

\medskip
\textbf{Related work.} 
This paper is a reworked full version of the conference paper \cite{amu15}. 
Apart from including full proofs, it has two new contributions. First, we 
present in Section \ref{sec:existsynmon} a purely categorical existence 
criterion for syntactic monoids in an abstract symmetric monoidal closed category, whereas in 
\cite{amu15} only a set-theoretic construction was given. Secondly, we treat the 
case of ordered algebras explicitly, covering new examples such as the 
syntactic ordered monoids of Pin \cite{pin95}.

 Another categorical 
approach to (classical) syntactic monoids appears in the work of 
Ballester-Bolinches, Cosme-Llopez and Rutten \cite{bcr14}. These authors 
consider automata in the category of sets specified by \emph{equations} or 
dually by \emph{coequations}, which leads to a construction of the automaton 
underlying the syntactic monoid of a language. The fact that it forms the 
transition monoid of a minimal automaton is also interpreted in that setting. 
In the present paper we take a more general approach by 
studying algebraic recognition in an arbitrary symmetric monoidal closed 
category $\D$. One important source of inspiration for our categorical setting 
was the work of Goguen \cite{goguen75}.

In the recent papers \cite{ammu14,ammu15} we presented a categorical view of \emph{varieties of languages}, another central topic of algebraic automata theory. Building on the duality-based approach of Gehrke, Grigorieff and Pin \cite{ggp08}, we generalized Eilenberg's variety theorem and its local version to the level of an abstract (pre-)duality between algebraic categories. The idea to replace monoids by monoid objects in a commutative variety $\D$ originates in this work.

Another categorical setting for algebraic language theory  can be found in Boj\'anczyk 
\cite{boj15}. He considers, in lieu of monoids in commutative varieties, 
Eilenberg-Moore algebras for an arbitrary monad on sorted sets, and defines 
syntactic congruences in this more general setting. Our Theorem 
\ref{prop:synmon} is a special case of \cite[Theorem 3.1]{boj15}.

\section{Preliminaries}
\label{sec:pre}

Throughout this paper we investigate deterministic automata in a symmetric monoidal closed
category. In all concrete applications, this category will be a commutative 
variety 
of algebras or ordered algebras.  Recall that 
for a finitary signature~$\Sigma$ a \emph{variety of
  $\Sigma$-algebras} is a class of
$\Sigma$-algebras specified by equations $s = t$ between $\Sigma$-terms.  
Equivalently, by
Birkhoff's HSP theorem, it is a class of $\Sigma$-algebras closed
under products, subalgebras and quotients (= homomorphic images). Similarly,
\emph{ordered $\Sigma$-algebras} are posets equipped with order-preserving
$\Sigma$-operations, and their morphisms are order-preserving 
$\Sigma$-homomorphisms. A
\emph{quotient} of an ordered algebra $B$ is represented by a surjective morphism 
$e\colon
B \epito A$, and a \emph{subalgebra} of~$B$ is represented by an 
\emph{order-reflecting}
morphism~$m\colon A
\rightarrowtail B$, i.e.\ $mx \leq
my$ iff $x \leq y$.  A \emph{variety of ordered $\Sigma$-algebras} is a class 
of ordered $\Sigma$-algebras
specified by inequations $s \leq t$ between $\Sigma$-terms. Equivalently, by 
Bloom's HSP
theorem \cite{blo76}, it is a class of ordered $\Sigma$-algebras closed under 
products, 
subalgebras and quotients.

A variety $\D$ of algebras or ordered algebras is called \emph{commutative} (a.k.a. \emph{entropic}) 
if, for any two algebras $A$ and $B$ in $\D$, the set 
$[A,B]$ of all homomorphisms from $A$ to $B$ forms a subalgebra $[A,B]\monoto 
B^{\under{A}}$ of the product of $\under{A}$ copies of $B$. Equivalently, the 
corresponding monad over $\Set$ (resp. $\Pos$, the category of posets) is 
commutative in the sense of Kock \cite{kock70}.

\begin{example}\label{ex:entvar}
In our applications, we shall work with the following commutative varieties:
\begin{enumerate}[label=(\arabic*)]
\item  $\Set$ is a commutative variety (over the empty signature) with $[A,B] = B^A$. 

\item $\Pos$ is a commutative variety of ordered algebras (over the empty signature). Here $[A,B]$ is the poset of monotone functions from $A$ to $B$, ordered pointwise.

\item A \emph{pointed set} $(A,\bot)$ is a set $A$ together with a chosen point $\bot\in A$. The category $\PSet$ of pointed sets and point-preserving functions is a commutative variety. The point of $[(A,\bot_A),(B,\bot_B)]$ is the constant function with value $\bot_B$.

\item An \emph{involution algebra} is a set with an involutive unary operation $x\mapsto \tl x$, i.e.~$\tl{\tl x} = x$.  We call $\tl x$ the \emph{complement} of $x$. Morphisms are functions $f$ preserving complements, i.e. $f(\tl x) = \widetilde{f(x)}$. The variety $\Inv$ of involution algebras is commutative. Indeed, the set $[A,B]$ of all homomorphisms is an involution algebra with pointwise complementation.

\item All other examples we treat in our paper are varieties of modules over a semiring. Recall that a \emph{semiring} $\S=(S,+,\o,0,1)$ consists of a commutative monoid $(S,+,0)$ and a monoid $(S,\o,1)$ subject the following equational laws:
\[ 0s = s0 = 0,\quad r(s+t) = rs + rt\quad\text{and}\quad (r+s)t = rt +st.\]
A \emph{module} over a semiring $\S$ is a commutative monoid $(M,+,0)$ together with a scalar multiplication $\o: S\times M\ra M$ (written simply as juxtaposition $(s,x) \mapsto sx$, as usual) such that the following laws hold:
\[
\begin{array}{r@{\ }c@{\ }l@{\qquad}r@{\ }c@{\ }l@{\qquad}r@{\ }c@{\ }l}
  (r+s)x & = & rx+sx, & r(x+y) & = & rx + ry, & (rs)x & = & r(sx), \\
  0x & = & 0, & 1x & = & 1, & r0 & = & 0.
\end{array}
\]
We denote by $\SMod{\S}$ the category of $\S$-modules and module homomorphisms (i.e.~$\S$-linear maps). This is a commutative variety; here $[A,B]$ is the set of all module homomorphisms with pointwise module structure. Three interesting special cases of $\SMod{\S}$ are:
\begin{enumerate}[label=(\alph*)]
\item $\S=\{0,1\}$, the boolean semiring with $1+1=1$: the category $\JSL$ of join-semilattices with the least element $0$, and homomorphisms preserving joins and $0$;

\item $\S=\Int$: the category $\Ab$ of abelian groups and group homomorphisms;

\item  $\S=\K$ (a field): the category $\Vect{\K}$ of vector spaces over $\K$ and linear maps.
\end{enumerate}
\end{enumerate}
\end{example}

\begin{notation}
For any variety $\D$ of algebras or ordered algebras, we denote by $\Psi: 
\Set\ra\D$ the left adjoint to the forgetful functor 
$\under{\mathord{-}}:\D\ra\Set$. Thus $\Psi X_0$ is the free object of $\D$ on 
the set $X_0$. \takeout{We may assume that $X$ is a subset of $\under{\Psi X}$ 
and the universal map is the inclusion $X\monoto \under{\Psi X}$. (If $\D$ 
contains at least one algebra with more than one elements, then such a choice 
of $\Psi$ is always possible because all universal maps are injective.)}
\end{notation}

\begin{example}
\begin{enumerate}[label=(\arabic*)]
\item For $\D=\Set$ or $\Pos$ we have $\Psi X_0 = X_0$ (discretely ordered). 
\item For $\D=\PSet$ the free pointed set on $X_0$ is $\Psi X_0 = X_0 + \{\bot\}$.
\item For $\D=\Inv$ the free involution algebra on $X_0$ is $\Psi X_0 = X_0 + \tl{X_0}$ where $\tl {X_0}$ is a copy of $X_0$ (whose elements are denoted $\tl x$ for $x\in X_0$). The involution swaps the copies of $X_0$, and the universal arrow $X_0\ra X_0+\tl{X_0}$ is the left coproduct injection.

\item For $\D=\SMod{\S}$ the free module $\Psi X_0$ is the submodule of $\S^{X_0}$ on all functions $X_0\ra \S$ with finite support. Equivalently, $\Psi X_0$ consists of formal linear combinations $\sum_{i=1}^n s_ix_i$ with $s_i\in \S$ and $x_i\in X_0$. In particular, $\Psi X_0 = \Pow_f X_0$ (finite subsets of $X_0$) for $\D = \JSL$, and $\Psi X_0$ is the vector space with basis $X_0$ for $\D=\Vect{\K}$.
\end{enumerate}
\end{example}

\begin{definition}\label{def:tensorproduct}
Given objects $A$, $B$ and $C$ of a variety $\D$, a \emph{bimorphism} from 
$A$, $B$ to 
$C$ is a function $f: \under{A}\times\under{B}\ra\under{C}$ such that the maps 
$f(a,\mathord{-}): \under{B}\ra \under{C}$ and $f(\mathord{-},b): \under{A}\ra 
\under{C}$ carry morphisms of $\D$ for every $a\in\under{A}$ and $b\in 
\under{B}$. A \emph{tensor product} of $A$ and $B$ is a universal bimorphism 
$\eta_{A,B}: \under{A}\times \under{B} \ra \under{A\t B}$, which means that 
for every bimorphism $f: \under{A}\times \under{B} \ra \under{C}$ there is a 
unique morphism $f': A\t B \ra C$ in $\D$ with $f'\o \eta_{A,B} = f$.
\end{definition}

\begin{theorem}[Banaschewski and Nelson \cite{bn76}]
Every commutative variety $\D$ of algebras or ordered algebras has tensor 
products, making
$(\D,\t,I)$ with $I=\Psi 1$ a symmetric monoidal closed category. That is,
we have the following bijective correspondence of morphisms, natural in 
$A,B,C\in \D$:
\[\begin{array}{rl}%
   f: & A\t B \ra C\\
    \hline
    \lambda f:& A\ra [B,C]
\end{array}
\]
\end{theorem}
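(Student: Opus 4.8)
The plan is to construct the tensor product explicitly and then read off both the closed and the monoidal structure from its universal property, with the commutativity hypothesis entering exactly where the internal hom is matched against bimorphisms. First I would build a universal bimorphism. For $A,B\in\D$ form the free algebra $\Psi(\under{A}\times\under{B})$ with universal insertion $\iota\colon\under{A}\times\under{B}\ra\under{\Psi(\under{A}\times\under{B})}$, and let $R$ be the congruence generated, for each operation $\sigma$ of the signature, by identifying $\sigma(\iota(a_1,b),\dots,\iota(a_n,b))$ with $\iota(\sigma(a_1,\dots,a_n),b)$ and symmetrically in the second variable. Set $A\t B:=\Psi(\under{A}\times\under{B})/R$ and let $\eta_{A,B}\colon\under{A}\times\under{B}\ra\under{A\t B}$ be $\iota$ followed by the quotient map. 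Then $\eta_{A,B}$ is a bimorphism by construction, and for any bimorphism $f\colon\under{A}\times\under{B}\ra\under{C}$ its free extension $\Psi(\under{A}\times\under{B})\ra C$ collapses $R$ (precisely because $f$ is a bimorphism), hence factors uniquely as $f'\o\eta_{A,B}=f$; uniqueness holds since the image of $\eta_{A,B}$ generates $A\t B$. This establishes the universal property of Definition~\ref{def:tensorproduct}. (Abstractly one could instead note that $[B,\mathord{-}]$ preserves limits and is accessible, so the adjoint functor theorem for locally presentable categories supplies a left adjoint $\mathord{-}\t B$; in the ordered case congruences are replaced by compatible preorders and ``bimorphism'' is read as ``separately monotone homomorphism'', the argument being otherwise unchanged.)

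Next I would obtain the closed structure, where commutativity is indispensable. Send a bimorphism $f\colon\under{A}\times\under{B}\ra\under{C}$ to the map $a\mapsto f(a,\mathord{-})$. Each $f(a,\mathord{-})$ is a homomorphism $B\ra C$, so this map lands in $[B,C]$; and the requirement that every $f(\mathord{-},b)$ be a homomorphism says \emph{exactly} that $a\mapsto f(a,\mathord{-})$ is itself a homomorphism $A\ra[B,C]$, since by commutativity $[B,C]\monoto C^{\under{B}}$ is a subalgebra under the pointwise operations. This gives a natural bijection between bimorphisms from $A,B$ to $C$ and homomorphisms $A\ra[B,C]$, with inverse $g\mapsto((a,b)\mapsto g(a)(b))$. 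Composing with the universal property yields, naturally in $A,B,C$,
\[
  \D(A\t B,C)\;\cong\;\D(A,[B,C]),
\]
which is the asserted correspondence $f\mapsto\lambda f$.

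Finally I would check the monoidal axioms. Symmetry follows by swapping the two arguments of a bimorphism, giving $A\t B\cong B\t A$. For the unit $I=\Psi 1$ the natural isomorphism $\D(\Psi 1,M)\cong\under{M}$ gives $\D(I\t A,C)\cong\D(I,[A,C])\cong\under{[A,C]}=\D(A,C)$, whence $I\t A\cong A$ and symmetrically. For associativity I would identify both $\D((A\t B)\t C,D)$ and $\D(A\t(B\t C),D)$ with the trimorphisms $\under{A}\times\under{B}\times\under{C}\ra\under{D}$ (homomorphic in each variable separately) by iterating the previous bijection, and then invoke Yoneda. The main obstacle I anticipate is not any single step but the coherence bookkeeping: one must verify that the associator is natural and that the pentagon, the triangle and the symmetry hexagon commute. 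Since every structural isomorphism arises, via Yoneda, from an evident bijection on the underlying sets of multimorphisms, coherence reduces to the manifestly commuting coherence of reassociating cartesian products of underlying sets; carrying this reduction out in full detail is the most laborious part of the argument.
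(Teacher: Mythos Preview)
The paper does not give its own proof of this theorem: it is stated with a citation to Banaschewski and Nelson~\cite{bn76} and used as a black box thereafter. So there is no in-paper argument to compare against.

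That said, your sketch is essentially the standard construction and is correct in outline. Building $A\t B$ as a quotient of $\Psi(\under{A}\times\under{B})$ by the congruence forcing $\eta_{A,B}$ to be a bimorphism is exactly how one produces the universal bimorphism; the identification of bimorphisms $\under{A}\times\under{B}\ra\under{C}$ with homomorphisms $A\ra[B,C]$ via currying, using commutativity to ensure that $[B,C]$ is a subalgebra of $C^{\under{B}}$, is precisely the point where the commutativity hypothesis enters; and deducing the unit, symmetry and associativity isomorphisms via Yoneda from the corresponding bijections on (bi/tri)morphisms is the cleanest route to coherence. Your remark that the ordered case goes through with congruences replaced by compatible preorders is also correct. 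The only caveat is the one you already flag: the coherence diagrams require some bookkeeping, but since all structural isomorphisms are induced by evident set-level bijections on multimorphisms, this reduces to coherence for cartesian products of sets and is routine.
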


\begin{remark}\label{rem:tensorprod}
We will use the following properties of the tensor product, see \cite{bn76}:
\begin{enumerate}
\item $- \t B$ and $A \t -$ are left-adjoints and hence 
preserve all colimits.
\item The morphism $\lambda f$ is obtained by currying the bimorphism $f' = 
f\o \eta_{A,B}: \under{A}\times\under{B}\to \under{C}$ corresponding to $f$, that is, $\lambda 
f(a)(b) = f'(a,b)$ for $a\in\under{A}$ and $b\in\under{B}$.
\item The evaluation morphism $\ev = \ev_{A,B}: [A,B]\t A \to B$,
i.e. the counit of the adjunction $- \t A \dashv [A,-]$, is the morphism in 
$\D$ corresponding to the bimorphism 
\[ \ev': \under{[A,B]}\times \under{A} \to \under{B},\quad (h,a)\mapsto 
h(a).\]
\item  The right unit isomorphism $i: A\cong A\t I$ is given by 
\[ i~=~ (\under{A} \xra{\langle{A,c_1}\rangle} 
\under{A}\times\under{I}\xra{\eta} \under{A\t 
I}) \]
where $c_1: \under{A}\to \under{I}$ is the constant map that sends $a\in A$ to 
the generator 
of $I=\Psi 1$.
\end{enumerate} 
\end{remark}

\begin{assumptions}\label{ass:global}
For the rest of this paper let $\D$ be a commutative variety of algebras or 
ordered algebras, equipped with the monoidal structure $(\D,\otimes,I)$ of the above theorem. Furthermore, we fix an 
object $X$ (of inputs) and an object $Y$ (of outputs) in $\D$.
\end{assumptions}

\begin{remark}
Despite the above algebraic assumptions we will state many of our 
definitions, theorems and proofs  for an arbitrary symmetric monoidal closed 
category
$(\D,\otimes,I)$ subject to 
additional properties. 
\end{remark}

\begin{remark}
  \label{rem:bullet}
Recall that a \emph{monoid} $(M,m,i)$ in a monoidal category $(\D,\otimes,I)$ 
(with tensor product $\otimes: \D\times\D\ra\D$ and tensor unit $I\in \D$) is 
an object $M$ equipped with a multiplication $m: M\t M\ra M$ and unit $i: I\ra 
M$ satisfying the usual associative and unit laws. Due to $\t$ and $I = \Psi 
1$ representing bimorphisms, this categorical definition is equivalent to the 
following algebraic one in our setting: a \emph{$\D$-monoid} is a triple 
$(M,\bullet, i)$ where $M$ is an object of $\D$ and $(\under{M},\bullet, i)$ 
is a monoid (in the usual sense) with $\bullet: \under{M}\times\under{M}\ra\under{M}$ a 
bimorphism of $\D$. A \emph{morphism} $h: (M,\bullet, i)\ra (M',\bullet', i')$ 
of $\D$-monoids is a morphism $h: M\ra M'$ of $\D$ such that $\under{h}: 
\under{M}\ra\under{M'}$ is a monoid morphism. We denote by $\DMon$ 
the category of $\D$-monoids and their homomorphisms. In the following we will 
freely work with $\D$-monoids in both categorical and algebraic disguise.
\end{remark}

\begin{example}\label{ex:dmon}
\begin{enumerate}[label=(\arabic*)]
\item In $\Set$ the tensor product is the cartesian product, $I=\{\ast\}$, and $\Set$-monoids are ordinary monoids.
\item In $\Pos$ the tensor product is the cartesian product, $I=\{\ast\}$, and $\Pos$-monoids are \emph{ordered monoids}, that is, posets equipped with a monoid structure whose multiplication $\bullet$ is monotone.
\item In $\PSet$ we have $I=\{\bot,\ast\}$, and the tensor product of pointed sets $(A,\bot_A)$ and $(B,\bot_B)$ is $A\t B = (A\setminus\{\bot_A\})\times (B\setminus\{\bot_B\}) + \{\bot\}$. $\PSet$-monoids are precisely monoids with zero. Indeed, given a $\PSet$-monoid structure on $(A,\bot)$ we have $x\bullet \bot= \bot = \bot\bullet x$ for all $x$ because $\bullet$ is a bimorphism, i.e.~$\bot$ is a zero element. Morphisms of $\Mon{\PSet}$ are zero-preserving monoid morphisms.
\item  An $\Inv$-monoid (also called an \emph{involution monoid}) is a monoid equipped with an  involution $x\mapsto \tl x$ such that $x\bullet \tl{y}= \tl{x}\bullet y = \tl{x\bullet y}$. For example, for any set $A$ the power set $\Pow A$ naturally carries the structure of an involution monoid: the involution takes complements, $\tl S = A\setminus S$, and the monoid multiplication is the symmetric difference $S\oplus T = (S\setminus T)\cup (T\setminus S)$. 
\item $\JSL$-monoids are precisely idempotent semirings (with $0$ and $1$). Indeed, a $\JSL$-monoid on a semilattice (i.e.~a commutative idempotent monoid) $(D,+,0)$ is given by a unit $1$ and a monoid multiplication that, being a bimorphism, distributes over $+$ and $0$.
\item More generally, a $\SMod{\S}$-monoid is precisely an associative algebra over $\S$: it consists of an $\S$-module together with a unit $1$ and a monoid multiplication that distributes over $+$ and $0$ and moreover preserves scalar multiplication in both components.
\end{enumerate}
\end{example}

\begin{notation}
We denote by $X^\cnum{n}$ ($n<\omega$) the $n$-fold tensor power of $X$, recursively defined by
$X^\cnum{0} = I$ and $X^\ecnum{n+1} = X \t X^\cnum{n}$.
\end{notation}

\begin{proposition}[see Mac Lane \cite{maclane}]\label{prop:freemon}
Let $\D$ be a symmetric monoidal closed category with countable coproducts. Then the forgetful functor $\DMon\ra\D$ has a left adjoint assigning to every object $X$ the free $\D$-monoid $\xs = \coprod_{n<\omega} X^\cnum{n}$. The monoid structure $(\xs,m_X,i_X)$ is given by the  coproduct injection $i_X: I=X^{\t 0} \to \xs$ and $m_X: \xs\t \xs\ra \xs$, where $\xs\t \xs \cong \coprod_{n,k<\omega} X^{\cnum{n}} \t X^{\cnum{k}}$ and $m_X$ has as its $(n,k)$-component the $(n+k)$-th coproduct injection. The universal arrow $\eta_X: X\ra \xs$ is the first coproduct injection.
\end{proposition}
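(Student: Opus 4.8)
The plan is to treat this as the standard construction of a free monoid in a monoidal category, with the single extra input that the tensor distributes over the countable coproduct defining $\xs$. The crucial fact is Remark~\ref{rem:tensorprod}(1): since $\D$ is symmetric monoidal closed, both $A\t-$ and $-\t B$ are left adjoints and hence preserve the countable coproducts $\xs=\coprod_{n<\omega}X^\cnum{n}$. Applying this in each variable yields the displayed isomorphism $\xs\t\xs\cong\coprod_{n,k<\omega}X^\cnum{n}\t X^\cnum{k}$, and I would use it repeatedly to define and compare morphisms out of $\xs$ and out of $\xs\t\xs$ componentwise. Establishing the universal property of $\eta_X$ for each $X$ is what produces the claimed left adjoint, with functoriality and naturality then following formally.

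First I would make the monoid structure well defined. Writing $\kappa_n\colon X^\cnum{n}\to\xs$ for the $n$-th coproduct injection, I need a coherent identification $X^\cnum{n}\t X^\cnum{k}\cong X^\cnum{n+k}$. This follows by induction on $n$ from the recursive definition $X^\cnum{n+1}=X\t X^\cnum{n}$: the base case $I\t X^\cnum{k}\cong X^\cnum{k}$ is the left unit isomorphism, and the step uses the associator to rewrite $(X\t X^\cnum{n})\t X^\cnum{k}\cong X\t(X^\cnum{n}\t X^\cnum{k})\cong X\t X^\cnum{n+k}=X^\cnum{(n+1)+k}$. The $(n,k)$-component of $m_X$ is then $X^\cnum{n}\t X^\cnum{k}\cong X^\cnum{n+k}\xra{\kappa_{n+k}}\xs$, and $i_X=\kappa_0$. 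To check the associative and unit laws for $(\xs,m_X,i_X)$ I would pass to components: on the $(n,k,l)$-summand of $\xs\t\xs\t\xs$ both bracketings land in $X^\cnum{(n+k)+l}=X^\cnum{n+(k+l)}$ followed by $\kappa$, and the unit laws reduce to $0+n=n=n+0$.

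Next I would verify the universal property. Given a $\D$-monoid $(M,m,i)$ and a morphism $f\colon X\to M$, define $\hat f_n\colon X^\cnum{n}\to M$ by $\hat f_0=i$ and $\hat f_{n+1}=\bigl(X\t X^\cnum{n}\xra{f\t\hat f_n}M\t M\xra{m}M\bigr)$, and set $\hat f=[\hat f_n]_{n<\omega}\colon\xs\to M$. Since $\eta_X=\kappa_1$ up to the unit isomorphism $X\cong X\t I=X^\cnum{1}$, the right unit law of $M$ together with naturality of the unitor gives $\hat f\o\eta_X=m\o(f\t i)=f$. That $\hat f$ is a monoid morphism splits into the unit condition, which is immediate ($\hat f\o i_X=\hat f_0=i$), and $\hat f\o m_X=m\o(\hat f\t\hat f)$, which I would check on the $(n,k)$-component by induction on $n$, the step invoking associativity of $m$. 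Uniqueness is the mirror image: any monoid morphism $g$ with $g\o\eta_X=f$ satisfies $g\o\kappa_0=i=\hat f_0$, and since $\kappa_{n+1}$ factors through $m_X\o(\kappa_1\t\kappa_n)$ via the coherence isomorphism, the monoid-morphism property yields $g\o\kappa_{n+1}=m\o((g\o\kappa_1)\t(g\o\kappa_n))=\hat f_{n+1}$ by induction; hence $g=\hat f$.

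I expect the only real friction to be coherence bookkeeping rather than conceptual difficulty. All of the inductions above are driven by the arithmetic of coproduct indices, but each step silently rewrites associators and unit isomorphisms, and one must ensure that the two sides of every law produce \emph{equal} canonical maps $X^\cnum{n}\t\cdots\to\xs$, not merely isomorphic ones. Mac Lane's coherence theorem \cite{maclane} removes this worry by making all such iterated canonical isomorphisms unique, so that I may suppress them and argue purely at the level of the indices $n,k,l$ and their sums.
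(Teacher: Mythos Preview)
The paper does not give its own proof of this proposition; it simply cites Mac Lane and then, in the Remark immediately following, records the inductive formula for the extension $f^+$, which coincides verbatim with your $\hat f$. Your argument is the standard construction and is correct, so there is nothing to compare against beyond noting that your $\hat f_n$ matches the paper's $f^+_n$.
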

\begin{remark}
  For any $\D$-monoid $(M,m,i)$ and any morphism $f: X \to M$ the extension to the unique $\D$-monoid morphism $f^+: \xs \to M$ with $f^+ \o \eta_X = f$ is defined inductively on the components of the coproduct $\coprod_{n<\omega} X^\cnum{n}$ as follows: $f^+ = [f^+_n]_{n < \omega}$, where $f_0^+ = i: X^\cnum{0} = I \to M$ and
\[
f^+_{n+1} = (\xymatrix@1{X^\cnum{(n+1)} = X \t X^\cnum{n} \ar[rr]^-{f \t f^+_n} && M \t M \ar[r]^-m & M}).
\]
\end{remark}
In our setting where $\D$ is commutative variety we have the following construction of free $\D$-monoids on free $\D$-objects:

\begin{proposition}\label{prop:freemon2}
The free $\D$-monoid on $X=\Psi X_0$ is $\xs = \Psi X_0^*$. Its monoid multiplication extends the concatenation of words in $X_0^*$, and its unit is the empty word $\epsilon$.
\end{proposition}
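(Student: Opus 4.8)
The plan is to invoke Proposition~\ref{prop:freemon}, which identifies the free $\D$-monoid on $X$ with the countable coproduct $\xs = \coprod_{n<\omega} X^{\cnum{n}}$, and then to compute this coproduct explicitly for $X = \Psi X_0$. The whole argument rests on a single structural fact: the free-algebra functor $\Psi\colon \Set\to\D$ is \emph{strong monoidal} from $(\Set,\times,1)$ to $(\D,\t,I)$. Once this is established, each tensor power $X^{\cnum{n}}$ is free on the set of length-$n$ words over $X_0$, and since $\Psi$ preserves coproducts the whole free monoid collapses to $\Psi X_0^*$.

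First I would prove the key isomorphism $\Psi X_0 \t \Psi Y_0 \cong \Psi(X_0\times Y_0)$, natural in the sets $X_0,Y_0$ (together with $\Psi 1 = I$, which holds by definition). This follows from Yoneda by chaining the available adjunctions: for any $C\in\D$,
\[ \D(\Psi X_0\t\Psi Y_0, C)\cong \D(\Psi X_0,[\Psi Y_0,C])\cong \Set\big(X_0,\under{[\Psi Y_0,C]}\big)\cong \Set\big(X_0,\Set(Y_0,\under{C})\big), \]
using the tensor--hom adjunction, the freeness adjunction $\Psi\dashv\under{\mathord{-}}$, and the fact that $\under{[\Psi Y_0,C]} = \D(\Psi Y_0,C)\cong \Set(Y_0,\under{C})$. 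The right-hand side is naturally $\Set(X_0\times Y_0,\under{C})\cong \D(\Psi(X_0\times Y_0),C)$, so the two representing objects coincide. This is the main step, and I expect it to be the principal obstacle: it is exactly where the symmetric monoidal closed structure of the commutative variety enters, i.e.\ where one uses that $\t$ represents bimorphisms (Banaschewski--Nelson) so that the tensor--hom adjunction is available.

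With the key isomorphism in hand, the tensor powers follow by a routine induction: $X^{\cnum{0}} = I = \Psi 1 = \Psi(X_0^0)$, and $X^{\cnum{n+1}} = X\t X^{\cnum{n}}\cong \Psi X_0\t\Psi(X_0^n)\cong \Psi(X_0\times X_0^n)=\Psi(X_0^{n+1})$. Since $\Psi$ is a left adjoint it preserves the countable coproduct, so
\[ \xs = \coprod_{n<\omega} X^{\cnum{n}} \cong \coprod_{n<\omega}\Psi(X_0^n)\cong \Psi\Big(\coprod_{n<\omega} X_0^n\Big) = \Psi X_0^*, \]
because the disjoint union $\coprod_{n<\omega} X_0^n$ of the sets of length-$n$ words is precisely the set $X_0^*$ of all words over $X_0$.

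Finally I would check that this isomorphism transports the monoid structure of Proposition~\ref{prop:freemon} to the expected one on $\Psi X_0^*$. The unit $i_X\colon I = X^{\cnum{0}}\to\xs$ is the zeroth coproduct injection, which under the identifications above is the inclusion of the empty word $\epsilon$. The $(n,k)$-component of $m_X$ is the $(n+k)$-th coproduct injection $X^{\cnum{n}}\t X^{\cnum{k}}\to\xs$; tracing it through the strong-monoidal isomorphisms $\Psi(X_0^n)\t\Psi(X_0^k)\cong\Psi(X_0^n\times X_0^k)\cong\Psi(X_0^{n+k})$ shows that on generators it pairs an $n$-tuple with a $k$-tuple, i.e.\ concatenates the two words. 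Hence $m_X$ is the $\D$-monoid multiplication freely extending concatenation on $X_0^*$, completing the identification. This bookkeeping is the only delicate part beyond the strong-monoidality lemma, but it is entirely forced by naturality of the coherence isomorphisms.
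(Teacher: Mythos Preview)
Your proposal is correct and follows essentially the same approach as the paper: both establish that $\Psi$ is strong monoidal via a Yoneda argument chaining the tensor--hom and free--forgetful adjunctions, then use that $\Psi$ preserves coproducts to identify $\coprod_n X^{\cnum{n}}$ with $\Psi X_0^*$. Your account is in fact slightly more thorough, as you explicitly verify that the transported monoid structure has the claimed unit and multiplication, whereas the paper's proof only exhibits the underlying object isomorphism and leaves that part implicit.
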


\begin{proof}
A constructive proof can be found in \cite{ammu14}. Here we sketch a more conceptual argument, using the universal property of the tensor product. Observe that the free-algebra functor $\Psi: \Set\ra\D$ is strongly monoidal, i.e.  it preserves the unit and tensor product up to natural isomorphism. Indeed, we have $\Psi 1 = I$ by definition. To see that $\Psi(A\times B) \cong \Psi A \t \Psi B$ for all sets $A$ and $B$, consider the following bijections (natural in $D\in\D$):
\begin{align*}
 \D(\Psi(A\times B), D) &\cong \Set(A\times B, \under{D})\\
 &\cong \Set(A,\under{D}^B) \\
 &\cong \Set(A, |[\Psi B, D]|)\\
 &\cong \D(\Psi A, [\Psi B, D]) \\
 &\cong \D(\Psi A\t \Psi B, D).
\end{align*}
This implies $\Psi(A\times B) \cong \Psi A \t \Psi B$ by the Yoneda lemma. Using the fact that $\Psi$ preserves coproducts, being a left adjoint, we conclude
\[ \xs \cong \coprod_{n<\omega} X^{\cnum{n}} \cong \coprod_{n<\omega} \Psi X_0^n \cong \Psi(\coprod_{n<\omega} X_0^n) = \Psi X_0^*.\tag*{\qEd}\]
\def\popQED{}
\end{proof}

\begin{example}
\begin{enumerate}[label=(\arabic*)]
\item In $\Set$ and $\Pos$ we have $\xs = X^*$ (discretely ordered). 
\item In $\PSet$ with $X=\Psi X_0 = X_0 + \{\bot\}$ we get $\xs = X_0^* +\{\bot\}$. The product $x\bullet y$ is concatenation for $x,y\in X_0^*$, and otherwise $\bot$.

\item In $\Inv$ with $X = \Psi X_0 = X_0 + \tl{X_0}$ we have $\xs = X_0^* + \tl{X_0^*}$. The multiplication restricted to $X_0^*$ is concatenation, and is otherwise determined by $\tl u\bullet v = \tl{uv} = u \bullet \tl v$ for $u,v\in X_0^*$.

\item In $\JSL$ with $X=\Psi X_0 = \Pow_f X_0$ we have $\xs = \Pow_f X_0^*$, the semiring of all finite languages over $X_0$. Its addition is union and its multiplication is the concatentation of languages.

\item More generally, in $\SMod{\S}$ with $X=\Psi X_0$ we get $\xs = \Psi X_0^* = \S[X_0]$, the module of all polynomials over $\S$ in finitely many
variables from $X_0$, i.e., $\sum_{i=1}^n c(w_i)w_i$ with  $w_i\in X_0^*$ and $c(w_i)\in \S$. Hence the elements of $\S[X_0]$ are functions $c: X_0^*\ra \S$ with finite support, i.e.~finite weighted languages. The $\S$-algebraic structure of $\S[X_0]$ is given by the usual addition, scalar multiplication and product of polynomials.
\end{enumerate}
\end{example}

\begin{definition}[Goguen \cite{goguen75}]
\label{def:aut}
A \emph{$\D$-automaton} $(Q,\delta,i,f)$ consists of an object $Q$ (of states) and morphisms $\delta: X\t Q \ra Q$, $i: I\ra Q$ and $f: Q\ra Y$; see Diagram~(\ref{eq:daut}). An \emph{automata homomorphism} $h: (Q,\delta,i,f) \ra (Q',\delta',i',f')$ is a morphism $h: Q\ra Q'$ preserving transitions as well as initial states and outputs, i.e.~making the following diagrams commute:
\[  
\xymatrix{
X\t Q \ar[d]_{X\t h}\ar[rr]^\delta && Q \ar[d]^h \\
X\t Q' \ar[rr]_{\delta'}  && Q' 
}
\quad\quad\quad
\xymatrix{
I \ar[drr]_{i'} \ar[rr]^i && Q \ar[d]^h \ar[rr]^f  && Y \\
 && Q'\ar[urr]_{f'}  &&
}
\]
\end{definition}

The above definition makes sense in any monoidal category $\D$. In our algebraic setting, since $I=\Psi 1$, the morphism $i$ chooses an \emph{initial state} in $\under{Q}$. Moreover, if  $X=\Psi X_0$ for some set $X_0$ (of inputs), the morphism $\delta$ amounts to a choice of endomorphisms $\delta_a: Q\ra Q$ for $a\in X_0$, representing transitions. This follows from the bijections
\[\begin{array}{ll}%
    \Psi X_0 \t Q \ra Q & \text{in $\D$}\\
    \hline
    \Psi X_0 \ra [Q,Q]& \text{in $\D$}\\
    \hline
     X_0 \ra \D(Q,Q) & \text{in $\Set$}%
\end{array}
\]
We will occasionally write $q\xra{a} q'$ for $\delta_a(q)=q'$ if $\delta$ is clear from the context.
\begin{example}\label{ex:automata}
\begin{enumerate}[label=(\arabic*)]
\item The classical deterministic automata are the case $\D=\Set$ and $Y=\{0,1\}$. Here $f: Q\ra \{0,1\}$ defines the set $F=f^{-1}[1]\seq Q$ of final states. For a general set $Y$ we get deterministic Moore automata with outputs in $Y$.

\item For $\D=\Pos$ and $Y=\{0<1\}$ we get \emph{ordered} deterministic automata. That is, automata with a partially ordered set $Q$ of states, monotone transition maps $\delta_a$, and an upper set $F=f^{-1}[1]\seq Q$ of final states.

\item The setting $\D=\PSet$ with $X = X_0 + \{\bot\}$ and $Y=\{\bot,1\}$ gives \emph{partial} deterministic automata. Indeed, the state object $(Q,\bot)$ has transitions $\delta_a: (Q,\bot)\ra (Q,\bot)$ for $a\in X_0$ preserving $\bot$, that is, $\bot$ is a sink state. Equivalently, we may consider $\delta_a$ as a partial transition map on the state set $Q\setminus\{\bot\}$. The morphism $f: (Q,\bot)\ra \{\bot,1\}$ again determines a set of final states $F=f^{-1}[1]$ (in particular, $\bot$ is non-final). And the morphism $i:\{\bot,\ast\}\ra (Q,\bot)$ determines a partial initial state: either $i(\ast)$ lies in $Q\setminus \{\bot\}$, or no initial state is defined.

\item In $\D=\Inv$ let us choose $X=X_0+\tl{X_0}$ and $Y=\{0,1\}$ with $\tl 0 = 1$. An $\Inv$-automaton is a deterministic automaton with complementary states $x\mapsto \tl x$ such that (i) for every transition $p\xra{a} q$ there is a complementary transition $\tl p \xra{a} \tl{q}$ and (ii) a state $q$ is final iff $\tl q$ is non-final.

\item For $\D=\JSL$ with $X=\Pow_f X_0$ and $Y=\{0,1\}$ (the two-chain) an automaton consists of a semilattice $Q$ of states, transitions $\delta_a: Q\ra Q$ for $a\in X_0$ preserving finite joins (including $0$), an initial state $i\in Q$ and a homomorphism $f: Q\ra\{0,1\}$ which defines a \emph{prime upset} $F=f^{-1}[1]\seq Q$ of final states. Primality means that a finite join of states is final iff one of the states is. In particular, $0$ is non-final. For example, the classical determinization of a nondeterministic automaton in $\Set$ via the powerset construction yields a $\JSL$-automaton, where the semilattice structure is given by finite union.  

\item More generally, automata in $\D=\SMod{\S}$ with $X=\Psi X_0$ and $Y=\S$ are \emph{$\S$-weighted automata}. Such an automaton consists of an $\S$-module $Q$ of states, linear transitions $\delta_a: Q\ra Q$ for $a\in X_0$, an initial state $i\in Q$ and a linear output map $f: Q\ra \S$. 

\end{enumerate}
\end{example}

\begin{remark}\label{rem:algcoalg}
\begin{enumerate}[label=(\arabic*)]
\item An \emph{algebra} for an endofunctor $F$ of $\D$ is a pair $(Q,\alpha)$ consisting of an object $Q$ and a morphism $\alpha: FQ\ra Q$. A \emph{homomorphism} $h: (Q,\alpha)\ra(Q',\alpha')$ of $F$-algebras is a morphism $h: Q\ra Q'$ with $h\o \alpha = \alpha'\o Fh$. Throughout this paper we work with the endofunctor 
\[FQ = I + X\t Q.\]
 Its algebras are denoted as triples $(Q,\delta,i)$ with $\delta: X\t Q\ra Q$ and $i: I\ra Q$. Hence $\D$-automata are precisely %\emph{copointed} $F$-algebras, that is, 
$F$-algebras equipped with an output morphism $f: Q\ra Y$. Moreover, automata homomorphisms are precisely $F$-algebra homomorphisms preserving outputs.

\item Analogously, a \emph{coalgebra} for an endofunctor $T$ of $\D$ is a pair $(Q,\gamma)$ consisting of an object $Q$ and a morphism $\gamma: Q\ra TQ$. A \emph{homomorphism} $h: (Q,\gamma)\to(Q',\gamma')$ of $T$-coalgebras is a morphism $h: Q\to Q'$ with $Th\o \gamma = \gamma'\o h$. Throughout this paper we work with the endofunctor
\[TQ = Y\times [X,Q].\] Its coalgebras are denoted as triples $(Q,\tau,f)$ with $\tau: Q\ra [X,Q]$ and $f:Q\ra Y$. Hence $\D$-automata are precisely \emph{pointed} $T$-coalgebras, i.e.~$T$-coalgebras equipped with a morphism $i: I\ra Q$. Indeed, given a pointed coalgebra $I \xra{i} Q \xra{\langle f,\tau\rangle} Y\times[X,Q]$, 
the morphism $Q\xra{\tau} [X,Q]$ is the curried form of a morphism $Q\t X \xra{\cong} X\t Q \xra{\delta} Q$. Hence automata homomorphisms are precisely the $T$-coalgebra homomorphisms preserving initial states.
\end{enumerate}
\end{remark}

\begin{definition}
  \label{def:asso}
Given a $\D$-monoid $(M,m,i)$ and a morphism $e: X\ra M$ of $\D$, the \emph{$F$-algebra associated to $M$ and $e$} has carrier $M$ and structure 
\[
  \delta~\equiv~(X\t M \xra{e\t M} M\t M \xra{m} M) \quad\text{and}\quad i: I\to M.
\] 
\end{definition}
In particular, the $F$-algebra associated to the free monoid $\xs$ (and its universal arrow $\eta_X$) is
\[
  \delta_X~\equiv~(X\t \xs \xra{\eta_X\t \xs} \xs\t \xs \xra{m_X} \xs) \quad\text{and}\quad i_X: I\to \xs.
\]  

\begin{example} 
In $\Set$ every monoid $M$ together with an ``input'' map $e: X\ra M$ determines an $F$-algebra with initial state $i$ and transitions $\delta_a =  e(a)\bullet \mathord{-}$ for all $a\in X$. The $F$-algebra associated to $X^*$ is the automaton of words: its initial state is $\epsilon$ and the transitions are given by $w\xra{a} aw$ for $a\in X$.
\end{example}

\begin{proposition}[Goguen \cite{goguen75}]\label{prop:initalg}
For any symmetric monoidal closed category $\D$ with countable coproducts, $\xs$ is the initial algebra for $F$.
\end{proposition}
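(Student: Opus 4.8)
The plan is to prove initiality directly from the coproduct presentation $\xs=\coprod_{n<\omega}X^{\cnum{n}}$ of Proposition~\ref{prop:freemon}, constructing the unique $F$-algebra homomorphism into an arbitrary $F$-algebra by induction on the coproduct components. The first step is to observe that the associated structure map $[i_X,\delta_X]\colon I+X\t\xs\to\xs$ is nothing but a reindexing isomorphism of the coproduct. Since $X\t{-}$ is a left adjoint (to $[X,{-}]$) it preserves coproducts, so $X\t\xs\cong\coprod_{n<\omega}(X\t X^{\cnum{n}})=\coprod_{n<\omega}X^{\ecnum{n+1}}$, whence $FQ=I+X\t\xs\cong\coprod_{n<\omega}X^{\cnum{n}}=\xs$. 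Unwinding the definitions from~\ref{def:asso} — $i_X$ is the $0$th coproduct injection and $\delta_X=m_X\o(\eta_X\t\xs)$, where $m_X$ has the $(j,k)$th injection as its $(j,k)$-component and $\eta_X$ is the first injection — one checks that $[i_X,\delta_X]$ sends the summand $I$ to $X^{\cnum{0}}$ and the summand $X\t X^{\cnum{n}}$, via the canonical identification $X\t X^{\cnum{n}}=X^{\ecnum{n+1}}$, to the $(n+1)$st coproduct injection $X^{\ecnum{n+1}}\ra\xs$. In particular it is an isomorphism.

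Given any $F$-algebra $(Q,\delta,i)$, write a candidate morphism $h\colon\xs\to Q$ in terms of its components $h_n\colon X^{\cnum{n}}\to Q$, so that $h=[h_n]_{n<\omega}$. Spelling out the homomorphism condition $h\o[i_X,\delta_X]=[i,\delta]\o Fh=[i,\delta\o(X\t h)]$ and splitting it along the coproduct, the first step shows it is equivalent to the two equations $h\o i_X=i$ and $h\o\delta_X=\delta\o(X\t h)$, which on components read precisely $h_0=i$ and $h_{n+1}=\delta\o(X\t h_n)$ for all $n<\omega$. This recursion determines the family $(h_n)_{n<\omega}$, and hence $h$, uniquely; conversely the morphism defined by this recursion is an $F$-algebra homomorphism by construction. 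Thus $\xs$ is the initial $F$-algebra. (As a sanity check, when $(Q,\delta,i)$ is the $F$-algebra associated to a $\D$-monoid and a map $e\colon X\to Q$ as in Definition~\ref{def:asso}, this recursion reproduces the inductive formula for $e^{+}$ recorded in the remark following Proposition~\ref{prop:freemon}.)

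An alternative, more abstract route is the standard initial-algebra construction: since $F=I+X\t{-}$ preserves colimits of $\omega$-chains (as $X\t{-}$ preserves all colimits and $I+{-}$ preserves colimits), the colimit of the chain $0\to F0\to F^{2}0\to\cdots$ is the initial algebra. Here $F^{n}0\cong\coprod_{k<n}X^{\cnum{k}}$ with the connecting maps being coproduct injections, so the colimit is exactly $\xs$. I expect the main obstacle in either approach to be purely bookkeeping with the coherence isomorphisms of the monoidal structure: in the direct proof, the identification in the first step of the components of $\delta_X=m_X\o(\eta_X\t\xs)$ with coproduct injections (involving the unit isomorphism $X\cong X\t I$ and associativity), and in the abstract route, verifying that the structure map induced on the colimit agrees with the map $[i_X,\delta_X]$ of Definition~\ref{def:asso}. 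I would therefore present the direct inductive argument as the primary proof, since it makes the recursion $h_{n+1}=\delta\o(X\t h_n)$ explicit and keeps the coherence juggling to a minimum.
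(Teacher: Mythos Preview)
Your proof is correct and is precisely the standard direct argument: the paper does not actually prove this proposition (it is attributed to Goguen), but your inductive construction of $h=[h_n]_{n<\omega}$ with $h_0=i$ and $h_{n+1}=\delta\o(X\t h_n)$ is exactly the recursion that the paper records (for the closely related $\delta^\cst$) in Remark~\ref{rem:aux}(1). The only thing worth noting is that Remark~\ref{rem:concept}(1) also supplies a closed-form description of the unique homomorphism via the $\D$-monoid $[Q,Q]$ and the free extension $(\lambda\delta)^+$, which is an alternative packaging of the same content rather than a different proof of initiality.
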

\takeout{
In more detail, for $\xs = \coprod_{n<\omega} X^{\cnum{n}}$ we have, since $\mathord{-}\t X$ preserves coproducts (being a left adjoint), that 
$X \t \xs \cong \coprod_{n\geq 1} X^{\cnum{n}}$. Then $\xs$ is an $F$-algebra due to the following isomorphism (where $I=X^{\cnum{0}}$):
\[ F\xs \cong I + X\t \xs \cong \coprod_{n<\omega} X^{\cnum{n}} = \xs. \]
For every $F$-algebra $Q$ the unique homomorphism $e_Q: \xs = \coprod_{n<\omega} X^{\cnum{n}} \ra Q$ has components $e_Q^{\cnum{n}}\ra Q$ defines by the following recursion:
\[ e_Q^{\cnum{0}}= i: I\ra Q\quad\text{and}\quad e_Q^{\ecnum{n+1}} = (X \t X^\cnum{n} \xra{X \t e_Q^\cnum{n}} X\t Q \xra{\delta} Q). \]
}
\begin{remark}\label{rem:concept}
\begin{enumerate}  \item For any $F$-algebra $(Q,\delta,i)$ the unique 
$F$-algebra homomorphism $e_Q: \xs \ra Q$ is constructed as follows: extend 
the morphism $\lambda \delta: X \to [Q,Q]$ to a $\D$-monoid morphism 
$(\lambda\delta)^+: \xs \to [Q,Q]$. Then
  \begin{equation}\label{eq:eA}
    e_Q = (\xs \cong \xs \t I \xra{(\lambda \delta)^+ \t i} [Q,Q] \t Q \xra{\ev} Q),
  \end{equation}
where $\ev$ is the `evaluation morphism', i.e.~the counit of the adjunction $- \t Q \dashv [Q,-]$. 

To see this, recall that in any symmetric monoidal closed category $\D$ the object $[Q,Q]$ can be equipped with a $\D$-monoid structure: the unit $j: I \to [Q,Q]$ and multiplication $m: [Q,Q] \t [Q,Q] \to [Q,Q]$ are the unique morphisms making the following diagrams commutative (where $\iota'_Q: Q \to I \t Q$ is the left unit isomorphism):
\[
\xymatrix{
  [Q,Q] \t Q \ar[r]^-{\ev} & Q \\
  I \t Q 
  \ar[u]^{j \t Q}
  \ar[ru]_{\iota'_Q}
}
\qquad
\xymatrix@C+1pc{
  [Q,Q] \t Q \ar[r]^-{\ev} & Q \\
  [Q,Q] \t [Q,Q] \t Q
  \ar[u]^{m \t Q}
  \ar[r]_-{[Q,Q]\t \ev}
  &
  [Q,Q] \t Q \ar[u]_{\ev}
}
\]
Then the following commutative diagram shows that $e_Q$ is an $F$-algebra homomorphism, as claimed:
\[
\xymatrix@C+2pc{
  X \t \xs 
  \ar[r]^-{\eta_X \t \xs} 
  \ar[d]^{X \t \iota_{\xs}}
  &
  \xs \t \xs 
  \ar[r]^-{m_X}
  \ar[d]^{\xs \t \iota_{\xs}}
  &
  \xs
  \ar[d]^{\iota_{\xs}}
  &
  I
  \ar[l]_-{i_X}
  \ar[d]_{\iota_I}
  \ar `r[d] `[ddd] [lddd]^(.4){i}
  \\
  X \t \xs \t I
  \ar[r]^-{\eta_X \t \xs\t I}
  \ar[d]^{X \t (\lambda\delta)^+ \t i}
  &
  \xs \t \xs \t I
  \ar[r]^-{m_X \t I}
  \ar[d]^{(\lambda\delta)^+ \t (\lambda\delta)^+ \t i}
  &
  \xs \t I
  \ar[d]^{(\lambda\delta)^+ \t i}
  &
  I \t I
  \ar[l]_-{i_X \t I}
  \ar[d]_{I \t i}
  \\
  X \t [Q,Q] \t Q
  \ar[r]^-{\lambda\delta \t [Q,Q] \t Q}
  \ar[d]^{X \t \ev}
  &
  [Q,Q] \t [Q,Q] \t Q
  \ar[r]^-{m \t Q}
  \ar[d]_{[Q,Q]\t \ev}
  &
  [Q,Q] \t Q
  \ar[d]^{\ev}
  &
  I \t Q
  \ar[l]_-{j \t Q}
  \ar[ld]^{(\iota'_Q)^{-1}}
  \\
  X\t Q
  \ar[r]^-{\lambda\delta \t Q}
  &
  [Q,Q]\t Q
  \ar[r]^-{\ev}
  &
  Q
  \ar@{<-} `d[l] `[ll]^\delta [ll]
  &
}
\]
\item If $\D$ is a commutative variety of algebras or ordered algebras, the 
monoid 
structure on $[Q,Q]$ is given by composition and the identity map. Let 
$\delta_x: 
Q\to Q$ denote the image of $x\in\xs$ under $(\lambda\delta)^+: \xs\to [Q,Q]$. 
Then the initial $F$-algebra homomorphism $e_Q$ in \eqref{eq:eA} sends $x$ to 
the state $\delta_x\o i: I\to Q$. This follows from the commutative 
diagram below, see Remark \ref{rem:tensorprod}:
\[
\xymatrix{
\under{\xs} \ar[r]^\cong \ar[dr]_{\langle \xs,c_1\rangle} & \under{\xs\t I} 
\ar[r]^{(\lambda\delta)^+\t i} & \under{[Q,Q]\t Q} \ar[r]^<<<<<\ev & 
\under{Q} \ar@{<-} `u[l] `[lll]_{e_Q} [lll]\\
& \under{\xs}\times \under{I} \ar[r]_{(\lambda\delta)^+\times i} \ar[u]_\eta 
& \under{[Q,Q]}\times \under{Q} \ar[ur]_{\ev'} \ar[u]^\eta &
}
\]
\end{enumerate}
\end{remark}

\begin{notation}\label{not:deltastar}
 $\delta^\cst: \xs\t Q\ra Q$ denotes the uncurried form of $(\lambda \delta)^+: \xs\ra[Q,Q]$.
\end{notation}
\begin{remark}
\label{rem:aux}
\begin{enumerate}[label=(\arabic*)]
  \item One can also define $\delta^\cst$ explicitly (using that $-\t Q$ preserves the coproduct $\xs = \coprod_{n < \omega} X^\cnum{n}$) as $\delta^\cst = [\delta^\cst_n]_{n < \omega}$ where $\delta^\cst_0 = (\xymatrix@1{I \t Q \ar[r]^-\cong & Q})$ and
\[
\delta^\cst_{n+1} = (\xymatrix@1{%
X^\cnum{(n+1)} \t Q = X \t X^\cnum{n} \t Q \ar[rr]^-{X \t \delta^\cst_n} && X \t Q \ar[r]^-\delta & Q%
}).
\]
From this definition we clearly have the equation
\[
\delta = (\xymatrix@1{X \t Q \ar[rr]^-{\eta_X \t Q} && \xs \t Q \ar[r]^-{\delta^\cst} & Q}).
\]
  \item By \cite[Theorem 3.3]{goguen75} the morphism $\delta^\cst: \xs \t Q \to Q$ defines an \emph{action} of the monoid $\xs$ on $Q$, i.e.~the following diagrams commute:
    \[
      \xymatrix@C+1pc{
        I \t Q \ar[r]^{i_X \t Q} \ar[rd]_{\cong} & \xs \t Q \ar[d]^{\delta^\cst} \\
        & Q
      }
      \quad\text{and}\quad
      \xymatrix@C+1pc{
        \xs \t \xs \t Q \ar[r]^-{m_X \t Q} \ar[d]_{\xs \t \delta^\cst} & \xs \t Q \ar[d]^{\delta^\cst} \\
        \xs \t Q \ar[r]_-{\delta^\cst} & Q
      }
    \]
 Moreover,  every $F$-algebra homomorphism $h$ from $(Q, \delta, i)$ to $(Q', \delta', i')$ is a morphism of the corresponding monoid actions, i.e.~the following diagram commutes:
     \[
       \xymatrix{
         \xs \t Q \ar[r]^-{\delta^\cst} \ar[d]_{\xs \t h} & Q \ar[d]^h & I \ar[l]_-i \ar[ld]^{i'}\\
         \xs \t Q' \ar[r]_-{(\delta')^\cst} & Q'
     }
     \]
  \item Furthermore, it is not difficult to prove that for $\delta_X : X \t \xs \to \xs$ coming from the $F$-algebra associated to $\xs$ we have that $\delta_X^\cst$ is the monoid multiplication: 
  \[ \delta_X^\cst = m_X : \xs \t \xs \to \xs.\]
\end{enumerate}
\end{remark}
\begin{remark}
 Recall from Rutten \cite{rutten} that the final coalgebra for the functor $TQ = \{0,1\}\times Q^X$ on $\Set$ is carried by the set $\Pow X^*\cong [X^*,\{0,1\}]$ of all  languages over $X$. The transitions are given by taking \emph{left derivatives}, 
 \[ L \xra{a} a^{-1}L = \{w\in X^*: aw\in L\}\quad\text{for $L\in\Pow X^*$ and $a\in X$},\]
 and the output map $\Pow X^* \to \{0,1\}$ sends $L$ to $1$ iff $L$ contains the empty word. Given any coalgebra $Q$, the unique coalgebra homomorphism from $Q$ to $\Pow\Sigma^*$ assigns to every state $q$ the language accepted by $q$ (as an initial state). 
 
  These observations generalize to arbitrary symmetric monoidal closed categories. The object $[\xs, Y]$ of $\D$  carries the following $T$-coalgebra structure: its transition morphism $\tau_{\fc}:\fc\ra [X,\fc]$ is the two-fold curryfication of 
\[ \fc\t X \t \xs \xra{\fc\t \eta_X \t \xs} \fc\t \xs \t \xs \xra{\fc\t m_X} \fc\t \xs \xra{\ev} Y, \]
and its output morphism $f_{\fc}: \fc\ra Y$ is 
\[ f_{[\xs, Y]} = (\fc\cong \fc\t I \xra{\fc\t i_X} \fc\t \xs  \xra{\ev} Y).\] 
\end{remark}

\begin{proposition}\label{prop:fincoalg}
For any symmetric monoidal closed category $\D$, the coalgebra $\fc$ is the final coalgebra for $T$.
\end{proposition}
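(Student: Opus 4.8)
The plan is to deduce finality of $\fc$ from initiality of $\xs$ (Proposition~\ref{prop:initalg}) by exploiting the self-duality of the internal-hom functor $[-,Y]\colon\D^{\mathrm{op}}\to\D$. The starting point is the chain of natural isomorphisms
\[
[FA,Y]\;\cong\;[I,Y]\times[X\t A,Y]\;\cong\;Y\times[X,[A,Y]]\;=\;T[A,Y],
\]
which uses that $-\t A$ preserves the coproduct $FA=I+X\t A$ (Remark~\ref{rem:tensorprod}(1)), that $I$ is the tensor unit, and that $[X\t A,Y]\cong[X,[A,Y]]$ by currying together with symmetry. Hence every $F$-algebra $(A,\alpha)$ induces a $T$-coalgebra structure on $[A,Y]$, and for $A=\xs$ with its initial-algebra structure one checks, using $\delta_X^\cst=m_X$ (Remark~\ref{rem:aux}(3)) and the explicit formulas for $\tau_{\fc}$ and $f_{\fc}$ in the Remark preceding the statement, that this coalgebra is exactly $\fc$. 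I also record the hom-set self-duality
\[
\theta_{A,Q}\colon\D(A,[Q,Y])\;\cong\;\D(A\t Q,Y)\;\cong\;\D(Q\t A,Y)\;\cong\;\D(Q,[A,Y]),
\]
natural in $A$ and $Q$.

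Next I reduce the universal property to a correspondence of homomorphisms. Given a $T$-coalgebra $(Q,\tau,f)$ with associated transition $\delta\colon X\t Q\to Q$, I equip $[Q,Y]$ with an $F$-algebra structure: the unit $I\to[Q,Y]$ is the name of $f$ (the transpose of $f\colon Q\to Y$ along $\D(I,[Q,Y])\cong\D(Q,Y)$), and the transition $X\t[Q,Y]\to[Q,Y]$ is obtained from $[\delta,Y]\colon[Q,Y]\to[X\t Q,Y]\cong[X,[Q,Y]]$ by uncurrying and applying symmetry. The key claim is that $\theta_{\xs,Q}$ restricts to a bijection between $F$-algebra homomorphisms $\xs\to[Q,Y]$ and $T$-coalgebra homomorphisms $Q\to\fc$. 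Since $\xs$ is the initial $F$-algebra, the left-hand set is a singleton, whence so is the right-hand set, which is precisely finality of $\fc$. Concretely, the unique homomorphism $h_Q\colon Q\to\fc$ is the transpose of
\[
Q\t\xs\;\xrightarrow{\ \cong\ }\;\xs\t Q\;\xrightarrow{\ \delta^\cst\ }\;Q\;\xrightarrow{\ f\ }\;Y,
\]
i.e.\ $h_Q$ sends a state to the language it accepts, run through $\delta^\cst$ and read off by $f$.

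Verifying the correspondence splits along the two components of $T$. Matching outputs with the algebra unit amounts to $f_{\fc}\circ h_Q=f$, which after transposition reduces to the fact that $\delta^\cst\circ(i_X\t Q)$ equals the left unit isomorphism $I\t Q\cong Q$ (the unit law of the monoid action, first diagram of Remark~\ref{rem:aux}(2)). Matching the $[X,-]$-component of the coalgebra with the algebra multiplication reduces, after uncurrying and using $\delta=\delta^\cst\circ(\eta_X\t Q)$ (Remark~\ref{rem:aux}(1)), to the associativity law $\delta^\cst\circ(m_X\t Q)=\delta^\cst\circ(\xs\t\delta^\cst)$ (second diagram of Remark~\ref{rem:aux}(2)), all threaded through the naturality of $\theta$.

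The main obstacle I expect is this second compatibility: unwinding the transition morphism $\tau_{\fc}$, which is defined as a double currying of a composite involving $\fc\t m_X$ and $\ev$, and lining it up on the nose with the multiplication of the induced $F$-algebra on $[Q,Y]$ under $\theta$. This is a sizeable but essentially routine diagram chase in the monoidal closed structure. If one prefers to avoid the abstract correspondence, uniqueness can instead be proved directly by induction on the coproduct $\xs=\coprod_{n<\omega}X^\cnum{n}$, showing that any coalgebra homomorphism $k\colon Q\to\fc$ must agree with $h_Q$ on each tensor power $X^\cnum{n}$ by repeatedly applying transition preservation, exactly as the left-derivative argument works in the classical case $\D=\Set$.
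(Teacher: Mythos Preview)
Your approach is correct and takes a genuinely different route from the paper's. The paper proceeds by direct construction: it writes down the candidate $\lambda h$ (your $h_Q$, with the same formula via $f\circ\delta^\cst$ and symmetry), verifies by two explicit diagram chases that it preserves outputs and transitions, and then proves uniqueness separately by induction over the coproduct $\xs=\coprod_n X^{\cnum n}$---precisely the alternative you mention in your last sentence. You instead factor everything through the observation that $[-,Y]$ carries $F$-algebras to $T$-coalgebras and that the hom-set self-duality $\theta$ matches $F$-algebra homomorphisms $\xs\to[Q,Y]$ with $T$-coalgebra homomorphisms $Q\to\fc$, so finality is a formal consequence of initiality of $\xs$. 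This buys you uniqueness for free and explains \emph{why} the final coalgebra has the shape $[\xs,Y]$; the paper's direct approach, in exchange, avoids having to set up and check the correspondence $\theta$ at the level of homomorphisms (your acknowledged ``main obstacle''). One small caution: your paragraph ``Verifying the correspondence'' is phrased for the specific candidate $h_Q$ rather than for an arbitrary pair $(g,k)$ related by $\theta$; for the bijection argument to deliver uniqueness you need the general statement, which is no harder but should be stated that way (both ``$g$ is an $F$-algebra homomorphism'' and ``$k$ is a $T$-coalgebra homomorphism'' transpose to the same pair of conditions on the common uncurried map $\xs\t Q\to Y$).
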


\begin{proof}
Given a $T$-coalgebra $(Q,\tau,f)$, consider the morphism $\delta = (X\t Q \xra{\cong} Q\t X \xra{\beta} Q)$ where $\beta$ is the uncurried form of $\tau: Q\ra[X,Q]$, and denote by $\delta^\cst: \xs\t Q\ra Q$ its extension, see Notation \ref{not:deltastar}. We claim that the unique coalgebra homomorphism into $\fc$ is $\lambda h: Q\ra \fc$, where 
\begin{equation}\label{eq:finalmorphism}
 h = (Q\t \xs \cong \xs \t Q \xra{\delta^\cst} Q \xra{f} Y).
 \end{equation}
Let us first prove that $h$ is indeed a coalgebra homomorphism. Preservation of outputs is shown by the following commutative diagram (for the upper left-hand part use Remark~\ref{rem:aux}(2)):
\[
\xymatrix@C+2pc{
Q \ar[ddr]^\cong \ar[rd]^\cong \ar@{=}[rr] \ar[ddd]_{\lambda h} & & Q \ar[r]^{f}  & Y\\
& I \t Q \ar[r]^-{i_X\t Q}& \xs\t Q   \ar[u]_{\delta^\cst}&\\
& Q\t I \ar[r]_{Q\t i_X} \ar[d]^{\lambda h\t I} \ar[u]^\cong & Q\t \xs \ar[dr]^{\lambda h\t \xs} \ar[uur]_h \ar[u]_{\cong}  &\\
\fc \ar[r]_{\cong}& \fc\t I \ar[rr]_{\fc\t i_X} && \fc\t \xs \ar[uuu]_\ev 
}
\]
For preservation of transitions it suffices to show that the following diagram commutes, where $\ol\tau: \fc\t  X\ra \fc$ is the uncurried coalgebra structure of $\fc$:
\[  
\xymatrix{
Q \t X \ar[d]_{\lambda h \t X} \ar[r]^{\beta} & Q \ar[d]^{\lambda h}\\
\fc \t X \ar[r]_{\ol\tau} & \fc
}
\]
But the above diagram is precisely the curried version of the following one, which commutes by the properties listed in Remark \ref{rem:aux}. (We omit writing $\otimes$ for space reasons.)
\[
\xymatrix{
Q X %\ar@{}[ddrrrr]|{(\ast)} 
\xs \ar[rrrrr]^{\beta\xs} \ar[ddr]_{Q\eta_X\xs} \ar[ddd]_{\lambda h X\xs} \ar[rd]^\cong
& & & && Q \xs \ar[dl]_{\cong} \ar[ddd]^h\\
&
XQ\xs \ar[rrrru]^-{\delta \xs} \ar[r]_-{\eta_X Q \xs} 
&
\xs Q \xs \ar[ld]^(.4)\cong \ar[r]_\cong \ar[rrru]_(.6){\delta^\cst \xs}
&
\xs\xs Q 
\ar[r]_-{\xs \delta^\cst}
& \xs Q \ar[d]^{\delta^\cst} &\\
& Q \xs \xs \ar[d]^{\lambda h\xs\xs} \ar[r]^{Q m_X} & Q \xs \ar[drrr]^h \ar[d]^{\lambda h \xs} \ar[r]^\cong & \xs Q \ar[r]^{\delta^\cst} & Q \ar[dr]^{f} &\\
\fc X\xs \ar[r]^-{\begin{turn}{90}$\labelstyle\fc\eta_X\xs$\end{turn}} & \fc \xs\xs \ar[r]^(.6){\begin{turn}{45}$\labelstyle\fc m_X$\end{turn}} & \fc\xs \ar[rrr]_{\ev} & && Y
}
\]
For the uniqueness, suppose that any coalgebra homomorphism $\lambda h:Q\ra \fc$ is given. We show that $h: Q\t \xs\ra Y$ is determined by the composites $(Q \t X^{\t n} \xra{Q\t i_n} Q\t X^\cst \xra{h} Y)$, $n<\omega$,
where $i_n: X^{\t n}\ra \xs$ is the $n$-th coproduct injection. This proves the uniqueness of $\lambda h$: since $\t$ preserves coproducts, the morphisms $(Q \t i_n)_{n<\omega}$ form a coproduct cocone. For $n=0$, the claim is proved by the following diagram:
\[
\xymatrix{
Q\t I \ar[ddd]_{Q\t i_0} \ar[rrr]^{\cong} \ar[dr]^{\lambda h \t I} & & & Q \ar[ddd]^{f} \ar[dl]_{\lambda h}\\
& \fc \t I \ar@<2pt>[r]^-\cong \ar[d]_{\fc \t i_0} & \fc \ar@<2pt>[l]^-\cong \ar[rdd]_{f_{[\xs, Y]}}&\\
& \fc\t \xs \ar[drr]_{\ev} & &\\
Q\t \xs \ar[rrr]_h \ar[ur]^{\lambda h \t \xs} &&& Y 
}
\]
 And the following diagram shows that $h\o (Q\t i_{n+1})$ is determined by $h\o (Q\t i_n)$ (again we omit $\t$, in particular we write $X^n$ for $X^\cnum{n}$):
 \[
 \xymatrix{
 QXX^n \ar[dddd]_{\beta X^n} \ar[rrr]^{Qi_{n+1}} \ar[dr]^(.6)*+{\labelstyle \lambda hXX^n} & & & Q\xs \ar[rr]^h \ar[d]^{\lambda h \xs} & & Y\\
   & \fc XX^n \ar[dd]_{\ol\tau X^n}  \ar[rr]^{\fc i_{n+1}} \ar[dr]^(.6)*+{\labelstyle \fc Xi_n} & & \fc\xs \ar[urr]^\ev & & \\
   & & \fc X \xs \ar[r]^(.65){\begin{turn}{25}$\labelstyle\fc \eta_X \xs$\end{turn}} \ar[dr]_{\ol\tau\xs} & \fc\xs\xs \ar[u]_{\fc m_X} \ar@{}[d]|{(*)}& &\\
   & \fc X^n \ar[rr]_{\fc i_n} & & \fc\xs \ar[uuurr]_\ev & &\\
 QX^n \ar[ur]_{\lambda h X^n} \ar[rrrrr]_{Qi_n} & & & & & Q\xs \ar[ull]_{\lambda h \xs} \ar[uuuu]_h  
 }
 \]
Note that part marked with $(*)$ commutes by the definition of the coalgebra structure on $[X^\cst, Y]$ and all other parts are easy to see.
\end{proof}

\begin{remark}\label{rem:finalmorphism}
If $\D$ is a commutative variety of algebras or ordered algebras, the final 
homomorphism $\lambda h: Q\to [\xs,Y]$ sends a state $q\in \under{Q}$ to the 
morphism $x\mapsto f\o \delta_x(q)$, with $\delta_x: Q\to Q$  defined as in 
Remark \ref{rem:concept}(2). To see this, consider the commutative diagram 
below, 
where $(\delta^\cst)'$ is the bimorphism corresponding to $\delta^\cst$:
\[
\xymatrix{
\under{Q\t \xs}  \ar[r]^\cong & \under{\xs \t Q} \ar[r]^{\delta^\cst} & 
\under{Q} \ar[r]^f & \under{Y} \ar@{<-} `u[l] `[lll]_h [lll]\\
\under{Q}\times\under{\xs} \ar[u]^\eta \ar[r]_\cong & \under{\xs}\times 
\under{Q} \ar[u]^\eta \ar[ur]_{(\delta^\cst)'} &&
}
\]
By the definition of $\delta_x$ and Remark \ref{rem:tensorprod}, the 
bimorphism 
$(\delta^\cst)'$ maps $(x,q)$ to $\delta_x(q)$. Thus $h\o \eta$ maps $(q,x)$ 
to $f\o\delta_x(q)$. This implies the claim since $\lambda h$ is the curried 
form of $h\o \eta$.
\end{remark}
Proposition \ref{prop:fincoalg} motivates the following definition:

\begin{definition}[Goguen \cite{goguen75}]
 A \emph{language} in $\D$ is a morphism $L:\xs\ra Y$.
\end{definition}

Note that if $X=\Psi X_0$ (and hence $\xs=\Psi X_0^*$) for some set $X_0$, one can identify a language $L: \xs  \to Y$ in $\D$ with its adjoint transpose $\tl L: X_0^*\ra \under{Y}$, using the adjunction $\Psi\dashv |\mathord{-}|: \D\ra \Set$. In the case where $\under{Y}$ is a two-element set, $\tl L$ is the characteristic function of a ``classical'' language $L_0 \seq X_0^*$.

\begin{example}\label{ex:language}
\begin{enumerate}[label=(\arabic*)]
\item In $\D = \Set$ (with $\xs = X^*$ and $Y = \{ 0, 1\}$) one represents $L_0\seq X^*$  by its characteristic function $L: X^*\ra\{0,1\}$. Analogously for $\D = \Pos$ with $Y=\{0<1\}$.

\item In $\D=\PSet$ (with $X=X_0 + \{\bot\}$, $\xs = X_0^* + \{\bot\}$ and $Y=\{\bot,1\}$) one represents $L_0\seq X_0^*$  by its extended characteristic function $L: X_0^*+\{\bot\}\ra \{\bot,1\}$ where $L(\bot)=\bot$.

\item In $\D=\Inv$ (with $X=X_0+\tl{X_0}$, $\xs=X_0^* + \tl{X_0^*}$ and $Y=\{0,1\}$) one represents $L_0\seq X_0^*$  by $L: X_0^* + \wt{X_0^*} \to \{0,1\}$ where $L(w) = 1$ iff $w \in L_0$ and $L(\wt w) = 1$ iff $w \not\in L_0$ for all words $w \in X_0^*$.

\item In $\D=\JSL$ (with $X=\Pow_f X_0$, $\xs = \Pow_f X_0^*$ and $Y=\{0,1\}$) one represents $L_0\seq X_0^*$ by $L: \Pow_f X_0^*\ra\{0,1\}$ where $L(U)=1$ iff $U\cap L_0 \neq \emptyset$.

\item In $\D=\SMod{\S}$ (with $X=\Psi X_0$, $\xs = \S[X_0]$ and $Y=\S$) an $\S$-weighted language $L_0: X_0^*\ra \S$ is represented by its free extension to a module homomorphism
\[
L:\S[X_0] \to \S,\quad L\left(\sum\limits_{i =1}^{n} c(w_i)w_i \right) = \sum_{i=1}^n c(w_i)L_0(w_i).
\]
\end{enumerate}
\end{example}

\begin{definition}[Goguen \cite{goguen75}]
The language \emph{accepted} by a $\D$-automaton $(Q,\delta,i,f)$ is $L_Q = (\xs \xra{e_Q} Q \xra{f} Y )$, where $e_Q$ is the $F$-algebra homomorphism of Remark \ref{rem:concept}.
\end{definition}

\begin{example}\label{ex:langacc}
\begin{enumerate}[label=(\arabic*)]
\item In $\D=\Set$ or $\Pos$ with $Y=\{0,1\}$, the homomorphism $e_Q: X^* \ra Q$ assigns to every word $w$ the state it computes in $Q$, i.e.~the state the automaton reaches on input $w$. Thus $L_Q(w)=1$ iff $Q$ terminates in a final state on input $w$, which is precisely the standard definition of the accepted language of an (ordered) automaton. For general $Y$, the function $L_Q: X^*\ra Y$ is the behavior of the (ordered) Moore automaton $Q$, i.e.~$L_Q(w)$ is the output of the last state in the computation of $w$.

\item For $\D=\PSet$ with $X=X_0+\{\bot\}$ and $Y=\{\bot,1\}$, we have $e_Q: X_0^* + \{\bot\}\ra (Q,\bot)$ sending $\bot$ to $\bot$, and sending a word in $X_0^*$ to the state it computes (if any), and to $\bot$ otherwise. Hence $L_Q: X_0^*+\{\bot\}\ra \{\bot,1\}$ defines (via the preimage of $1$) the usual language accepted by a partial automaton.

\item In $\D=\Inv$ with $X=X_0+\tl{X_0}$ and $Y= \{0,1\}$, the map 
$L_Q: X_0^*+ \tl{X_0^*} \ra \{0,1\}$ sends $w\in X_0^*$ to $1$ iff $w$ computes a final state, and it sends $\tl w\in \tl{X_0^*}$ to $1$ iff $w$ computes a non-final state. 

\item In $\D=\JSL$ with $X=\Pow_f X_0$ and $Y=\{0,1\}$, the map $L_Q: \Pow X_0^* \ra \{0,1\}$ assigns to $U\in \Pow_f X_0^*$ the value $1$ iff the computation of at least one word in $U$ ends in a final state.

\item In $\D=\SMod{\S}$ with $X=\Psi X_0$ and $Y=\S$, the map $L_Q: \S[X_0] \ra \S$ assigns to $\sum_{i=1}^n c(w_i)w_i$ the value $\sum_{i=1}^n c(w_i)y_i$, where $y_i$ is the output of the state the automaton computes on input $w_i$. Taking $Q = \S^n$ for some natural number $n$ yields a classical $n$-state weighted automaton; indeed, $i \in \S^n$ is an initial vector, the linear map $f: \S^n \to \S$ corresponds to an output vector $o \in \S^n$ and the linear transition map $\delta: X \t \S^n \to \S^n$ is given by a family of linear maps $(\delta_a: \S^n \to \S^n)_{a \in X_0}$, which can represented by a family $(M_a)_{a \in X_0}$ of $n \times n$ matrices over $\S$. It then follows that the restriction of $L_Q$ to a map $X_0^* \to \S$ is is the usual weighted language assigning to a word $w$ the element $o \cdot M_w \cdot i^T$ of $\S$, where $M_w$ is the obvious product of the matrices $M_a$.
\end{enumerate}
\end{example}

\begin{remark}
By Remark \ref{rem:algcoalg} every $\D$-automaton $(Q,\delta,i,f)$ defines an $F$-algebra as well as a $T$-coalgebra. Our above definition of $L_Q$ was purely algebraic. The corresponding coalgebraic definition uses the unique coalgebra homomorphism
$c_Q: Q\ra \fc$ into the final $T$-coalgebra and precomposes with $i: I\ra Q$ to get a morphism $c_Q \o i: I\ra \fc$ (choosing a language, i.e.~an element of $\fc$). Unsurprisingly, the results are equal:
\end{remark}

\begin{proposition}\label{prop:lang}
The language $L_Q: \xs \ra Y$ of an automaton $(Q,\delta,i,f)$ is the uncurried form of the morphism $c_Q \o i: I\ra \fc$.
\end{proposition}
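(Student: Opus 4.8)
The plan is to combine the explicit description of the final coalgebra homomorphism obtained in the proof of Proposition~\ref{prop:fincoalg} with the formula for $e_Q$ from Remark~\ref{rem:concept}, thereby reducing the claim to a coherence identity in the symmetric monoidal category $\D$. First I would recall that, applied to the $T$-coalgebra $(Q,\tau,f)$ associated with the automaton (whose $\delta$ is recovered exactly by the construction in the proof, cf.\ Remark~\ref{rem:algcoalg}), Proposition~\ref{prop:fincoalg} gives the unique coalgebra homomorphism as $c_Q = \lambda h$, the curried form of
\[
h = (Q \t \xs \cong \xs \t Q \xra{\delta^\cst} Q \xra{f} Y),
\]
with $\delta^\cst$ as in Notation~\ref{not:deltastar}. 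Uncurrying $c_Q \o i: I \to \fc$ along the adjunction $-\t\xs \dashv [\xs,-]$ and using that transposition is compatible with composition, $\widehat{c_Q \o i} = \widehat{c_Q} \o (i \t \xs) = h \o (i \t \xs)$, the transpose is
\[
h \o (i \t \xs) : I \t \xs \to Q \t \xs \to Y .
\]
Under the left unit isomorphism $I \t \xs \cong \xs$ this becomes a morphism $\xs \to Y$, and its equality with $L_Q$ is what must be shown.

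Next I would rewrite $L_Q = f \o e_Q$ in terms of $\delta^\cst$. Since $\delta^\cst$ is by definition the transpose of $(\lambda\delta)^+$, i.e.\ $\delta^\cst = \ev \o ((\lambda\delta)^+ \t Q)$, the formula~\eqref{eq:eA} for $e_Q$ factors, via bifunctoriality of $\t$ (writing $\iota_{\xs}: \xs \cong \xs \t I$ for the right unit isomorphism), as
\[
e_Q = (\xs \xra{\iota_{\xs}} \xs \t I \xra{\xs \t i} \xs \t Q \xra{\delta^\cst} Q).
\]
Hence, since both morphisms under comparison are of the form $f \o \delta^\cst \o (-)$, the claim reduces to the equality of the two morphisms $\xs \to \xs \t Q$
\[
(\xs \t i) \o \iota_{\xs}
\qquad\text{and}\qquad
(Q \t \xs \cong \xs \t Q) \o (i \t \xs) \o \iota'_{\xs},
\]
where $\iota'_{\xs}: \xs \cong I \t \xs$ is the left unit isomorphism.

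Finally, this last equality is pure coherence. By naturality of the symmetry isomorphism applied to $i: I \to Q$ and $\mathrm{id}_{\xs}$ one has $(Q \t \xs \cong \xs \t Q) \o (i \t \xs) = (\xs \t i) \o (I \t \xs \cong \xs \t I)$, so the right-hand composite becomes $(\xs \t i) \o (I \t \xs \cong \xs \t I) \o \iota'_{\xs}$; and the triangle coherence relating the symmetry to the unitors (the symmetry carries the left unitor to the right unitor) collapses $(I \t \xs \cong \xs \t I) \o \iota'_{\xs}$ to $\iota_{\xs}$, giving exactly the left-hand composite. I expect the only delicate point to be the bookkeeping of unit and symmetry isomorphisms and the invocation of the correct coherence law; the remainder is a routine diagram chase. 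As a sanity check (and an immediate alternative in the present variety setting) one may argue on elements: by Remark~\ref{rem:finalmorphism} the morphism $c_Q \o i$ selects the language $x \mapsto f(\delta_x(i_0))$, where $i_0$ is the initial state, while by Remark~\ref{rem:concept}(2) we have $e_Q(x) = \delta_x \o i$, so that $L_Q(x) = f(\delta_x(i_0))$ as well.
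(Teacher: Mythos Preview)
Your argument is correct. Both your proof and the paper's start from the explicit formula $c_Q = \lambda h$ with $h = f \o \delta^\cst \o (\text{symmetry})$ and compute the uncurrying of $c_Q \o i$, but they diverge in how the resulting composite is identified with $L_Q = f \o e_Q$. You rewrite $e_Q$ directly as $\delta^\cst \o (\xs \t i) \o \iota_{\xs}$ by unfolding~\eqref{eq:eA} together with $\delta^\cst = \ev \o ((\lambda\delta)^+ \t Q)$, and then finish with a pure monoidal coherence identity (naturality of the symmetry plus the unit--symmetry triangle). The paper instead takes a detour through the $\xs$-action picture of Remark~\ref{rem:aux}: it uses that $e_Q$ is a morphism of $\xs$-actions (so $e_Q \o \delta_X^\cst = \delta^\cst \o (\xs \t e_Q)$ and $e_Q \o i_X = i$), that $\delta_X^\cst = m_X$, and the unit law of the monoid $\xs$, assembling these into a commuting diagram. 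Your route is shorter and entirely self-contained, invoking only the basic coherence axioms of a symmetric monoidal category; the paper's route has the modest advantage of exercising the monoid-action machinery already set up in Remark~\ref{rem:aux}. The element-level sanity check you give at the end is also valid in the variety setting and matches Remark~\ref{rem:finalmorphism} verbatim.
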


\begin{proof}
Recall from the proof of Proposition \ref{prop:fincoalg} that the uncurried version of $c_Q$ is the morphism
\[ Q\t \xs \cong \xs \t Q \xra{\delta^\cst} Q \xra{f} Y.\]
Hence $c_Q \o i: I\ra \fc$ determines the language
\[ \xs\cong \xs\t I \xra{\xs\t i} \xs\t Q \xra{\delta^\cst} Q \xra{f} Y,\]
and this is precisely $L_Q$, as shown by the diagram below (where $\delta_X$ is given by the $F$-algebra structure associated to $\xs$ and $\eta_X$, see Remark~\ref{def:asso}):
\[
\xymatrix@C+2pc{
\xs   \ar[r]^-\cong 
& 
\xs\t I \ar[r]^{\xs \t i_X} \ar[dr]_-{\xs\t i} 
& 
\xs\t \xs \ar[d]^{\xs \t e_Q} \ar[r]^-{\delta_X^\cst} 
& 
\xs \ar[d]^{e_Q} \ar[r]^{L_Q}
\ar@{<-} `u[l] `[lll]_\id [lll]  
& 
Y
\\
&  & \xs \t Q \ar[r]_{\delta^\cst} & Q \ar[ur]_{f} &
}
\]
Indeed, the right-hand triangle commutes by the definition of $L_Q$ and the left-hand one and the inner square commute since $e_Q$ is an $F$-algebra homomorphism (see Remark~\ref{rem:aux}(2)). The upper part commutes since $\delta^\cst_X$ is the monoid multiplication $m_X$, see Remark~\ref{rem:aux}(3).
\end{proof}

\takeout{
\begin{example}
In $\Set$ the generators simply mean that every element of $M$ is a product of elements from $e_0[X]$. In $\Inv$ this means that every element of $M$ is a product of elements from $e_0[X]$ and their complements. In $\JSL$ an $X_0$-generated $\D$-monoid is an idempontent semiring with $e_0: X_0 \ra \under{M}$ such that every element of $M$ is a sum of products of elements from $e_0[X_0]$.
\end{example}
}

\section{Algebraic Recognition and Syntactic $\boldsymbol{\D}$-Monoids}
\label{sec:syn}

In classical algebraic automata theory one considers recognition of
languages by (ordinary) monoids in lieu of automata. The key concept
is the \emph{syntactic monoid} which is characterized as the smallest
monoid recognizing a given language. There are also related concepts of
canonical algebraic recognizers in the literature, e.g. the syntactic ordered monoid, the syntactic
idempotent semiring and the syntactic associative algebra. In this section we will give a uniform account of algebraic
language recognition in our categorical setting. Our main result is the
definition and construction of a minimal algebraic
recognizer, the \emph{syntactic $\D$-monoid} of a language. 

\begin{assumptions}
Throughout this section $(\D,\t,I)$ is an arbitrary symmetric monoidal closed category, and 
$\E$ is a class of epimorphisms in $\D$ that contains 
all isomorphisms and is closed under composition. In the case where $\D$ 
is a variety of algebras or ordered algebras, we always choose
\[ \E = \text{surjective homomorphisms}. \]
\end{assumptions}
\begin{definition}
\label{def:recog}
A $\D$-monoid morphism $e: \xs \ra M$ \emph{recognizes} the language $L: \xs \to Y$ if there exists a morphism $f: M\ra Y$ of $\D$ with $L=f\o e$.
\end{definition}

\begin{example}\label{ex:monrec}
We use the notation of Example \ref{ex:language}. 
\begin{enumerate}[label=(\arabic*)]
\item $\D = \Set$ with $\xs = X^*$ and $Y = \{ 0, 1\}$: given a monoid $M$, a 
function $f: M\ra\{0,1\}$ defines a subset $F=f^{-1}[1]\seq M$. Hence a monoid 
morphism $e: X^*\ra M$ recognizes $L$ via $f$ (i.e.~$L=f\o e$) iff $L_0 = 
e^{-1}[F]$. This is the classical notion of recognition of a language $L_0\seq 
X^*$ by a monoid, see e.g. \cite{pin15}.
\item $\D = \Pos$ with $\xs = X^*$ and $Y = \{ 0< 1\}$: given an ordered 
monoid $M$, a monotone map $f: M\ra\{0,1\}$ defines an upper set 
$F=f^{-1}[1]\seq M$. Hence a monoid morphism $e: X^*\ra M$ recognizes $L$ iff 
$L_0$ is the preimage of some upper set of $M$. This notion of recognition is 
due to Pin~\cite{pin95}.
\item $\D=\PSet$ with $X=X_0 + \{\bot\}$, $\xs = X_0^* + \{\bot\}$ and $Y=\{\bot,1\}$: given a monoid with zero $M$, a $\PSet$-morphism $f: M\ra \{\bot,1\}$ defines a subset $F=f^{-1}[1]$ of $M\setminus\{0\}$. A zero-preserving monoid morphism $e: X_0^*+\{\bot\}\ra M$ recognizes $L$ via $f$ iff $L_0 = e^{-1}[F]$.
\item $\D=\Inv$ with $X=X_0+\tl{X_0}$,  $\xs=X_0^* + \tl{X_0^*}$ and $Y=\{0,1\}$: for an involution monoid $M$ to give a morphism $f: M\ra \{0,1\}$ means to give a subset $F=f^{-1}[1]\seq M$ satisfying $m\in F$ iff $\tl{m}\not\in F$. Then $L$ is recognized by $e: X_0^*+\tl{X_0^*}\ra M$ via $f$ iff $L_0 = X_0^*\cap e^{-1}[F]$.
\item $\D=\JSL$ with $X=\Pow_f X_0$, $\xs = \Pow_f X_0^*$ and $Y=\{0,1\}$: for an idempotent semiring $M$ a morphism $f: M\ra Y$ defines  a prime upset $F=f^{-1}[1]$, see Example \ref{ex:automata}. Hence $L$ is recognized by a semiring homomorphism $e: \Pow_f X_0^* \ra M$ via $f$ iff $L_0 = X_0^*\cap e^{-1}[F]$. Here we identify $X_0^*$ with the set of all singleton languages $\{w\}$, $w\in X_0^*$. This is the concept of language recognition introduced by Pol\'ak \cite{polak01} (except that he puts $F=f^{-1}[0]$, so $0$ and $1$ must be swapped, as well as $F$ and $M\setminus F$).
\item $\D=\SMod{\S}$ with $X=\Psi X_0$, $\xs = \S[X_0]$ and $Y=\S$: given an associative algebra $M$, a language $L$ is recognized by $e: \S[X_0]\ra M$ via $f: M\ra \S$ iff $L= f\o e$. For the case where the semiring $\S$ is a ring, this notion of recognition is due to Reutenauer \cite{reu80}. 
\end{enumerate}
\end{example}

\begin{remark}\label{rem:Lalg}
\begin{enumerate}[label=(\arabic*)]
\item By an \emph{$X$-generated $\D$-monoid} we mean a morphism 
$e: \xs \epito M$ in $\DMon$ with $e\in \E$. Given two $X$-generated $\D$-monoids $e_i: \xs \epito M_i$, $i 
= 1,2$, we say, as usual, that $e_1$ is \emph{smaller or equal to} $e_2$ 
(notation: $e_1 \leq e_2$) if $e_1$ factorizes through $e_2$. Note that if 
$\D$ is a variety and 
$X=\Psi X_0$, the free $\D$-monoid $\xs=\Psi X_0^*$ on $X$ is also the free 
$\D$-monoid on the set $X_0$ (w.r.t. the forgetful functor $\DMon\ra\Set$), 
see Proposition~\ref{prop:freemon2}. In this case, to give a quotient 
$e:\xs\epito M$ is equivalent to giving an 
 $X_0$-indexed family $(m_x)_{x\in X_0}$ of generators for the $\D$-monoid $M$ -- which is why 
 $M$ may also be called an \emph{$X_0$-generated $\D$-monoid}.
\item   Let $e: \xs \epito M$ be an $X$-generated $\D$-monoid with unit $i: I \to M$ and multiplication $m: M \t M \to M$. Recall that $\eta_X: X \to \xs$ denotes the universal morphism of the free $\D$-monoid on $X$ and consider the $F$-algebra associated to $M$ and $X \xrightarrow{\eta_X} \xs \xrightarrow{e} M$ (see Definition~\ref{def:asso}). Thus, together with a given $f: M \to Y$ an $X$-generated $\D$-monoid induces an automaton $(M, \delta, i, f)$ called the \emph{derived automaton}.
\end{enumerate}
\end{remark}

\begin{lemma}
\label{lem:recog}
The language recognized by an $X$-generated $\D$-monoid $e: \xs \epito M$ via $f: M\ra Y$ is the language accepted by its derived automaton.
\end{lemma}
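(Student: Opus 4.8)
The claim is that the language recognized by an $X$-generated $\D$-monoid $e\colon \xs\epito M$ via $f\colon M\to Y$ coincides with the language accepted by its derived automaton. Let me unwind both sides.

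The language recognized via $f$ is, by Definition~\ref{def:recog}, simply $L = f\o e$. The language accepted by the derived automaton $(M,\delta,i,f)$ is, by definition, $L_M = f\o e_M$, where $e_M\colon \xs\to M$ is the unique $F$-algebra homomorphism into the derived automaton (Remark~\ref{rem:concept}) and $\delta$ is the structure of the $F$-algebra associated to $M$ and $e\o\eta_X$ (Definition~\ref{def:asso}, Remark~\ref{rem:Lalg}(2)). Since both languages are postcomposed with the same $f$, the whole statement reduces to the single identity
\[
 e_M = e\colon \xs\to M.
\]

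So the plan is: prove $e_M = e$. The clean way is to invoke initiality of $\xs$ as the initial $F$-algebra (Proposition~\ref{prop:initalg}). The morphism $e_M$ is, by construction, the unique $F$-algebra homomorphism from the initial algebra $(\xs,\delta_X,i_X)$ to the derived algebra $(M,\delta,i)$. Hence it suffices to show that $e$ itself is an $F$-algebra homomorphism from $(\xs,\delta_X,i_X)$ to $(M,\delta,i)$; uniqueness then forces $e=e_M$. Preservation of the unit is immediate: $e$ is a $\D$-monoid morphism, so $e\o i_X = i$. Preservation of transitions means the square
\[
 e\o\delta_X = \delta\o(X\t e)\colon X\t\xs\to M
\]
commutes. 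Unfolding the definitions, $\delta_X = m_X\o(\eta_X\t\xs)$ and $\delta = m\o((e\o\eta_X)\t M)$, so this square is exactly the naturality/compatibility of the multiplication: $e\o m_X = m\o(e\t e)$ precomposed with $\eta_X\t\xs$. That $e$ is a monoid morphism gives $e\o m_X = m\o(e\t e)$, and then chasing $\eta_X\t\xs$ through both sides yields the claim. This is a short diagram chase using only that $e$ respects the monoid structure and the defining formulas for $\delta_X$ and $\delta$.

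The only mild subtlety — and the step I would write out carefully — is checking that the transition square genuinely commutes once one substitutes the associated-$F$-algebra structures, since it requires matching $m\o((e\o\eta_X)\t M)\o(X\t e)$ against $e\o m_X\o(\eta_X\t\xs)$ via the monoid-morphism law $m\o(e\t e)=e\o m_X$ and functoriality of $\t$. No genuine obstacle arises; the entire argument is formal and rests on the universal property of the initial $F$-algebra.
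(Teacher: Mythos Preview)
Your proposal is correct and follows essentially the same approach as the paper: both reduce the claim to showing that $e$ itself is the unique $F$-algebra homomorphism $\xs\to M$ into the derived automaton, and both verify this by using that $e$ is a $\D$-monoid morphism (so it preserves the unit and satisfies $e\o m_X = m\o(e\t e)$, from which the transition square follows after precomposing with $\eta_X\t\xs$). The paper packages the verification into a single commutative diagram with $F$-algebra structure $[i,m\o((e\o\eta_X)\t M)]$, whereas you split it into the unit part and the transition part, but this is a purely presentational difference.
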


\begin{proof}
 By definition $e$ recognizes via $f$ the language $L = f \o e$. We are done once we prove that $e$ is the unique $F$-algebra homomorphism from $\xs$ to the $F$-algebra associated to $M$ and $e \o \eta_X$ (cf.~Remark~\ref{rem:Lalg}). Recall from Proposition~\ref{prop:initalg} that the initial $F$-algebra is the $F$-algebra associated to the free $\D$-monoid $\xs$ and $\eta_X$. Then the following diagram clearly commutes, since $e$ is a $\D$-monoid morphism:
  \[
  \xymatrix@C+3pc{
    F\xs = I + X \t \xs 
    \ar[r]^-{I + \eta_X \t \xs}
    \ar[d]_{Fe = I + X \t e}
    &
    I + \xs \t \xs 
    \ar[r]^-{[i_X,m_X]}
    \ar[d]_{I + e \t e}
    &
    \xs
    \ar[d]^e 
    \\
    FM = I + X \t M
    \ar[r]_-{I + (e\o \eta_X) \t M}
    &
    I + M \t M
    \ar[r]_-{[i,m]}
    &
    M
    }
  \]
  This completes the proof.
\end{proof}

We are now ready to give an abstract account of syntactic monoids in our setting. In classical algebraic automata theory the syntactic monoid of a language is characterized as the smallest monoid recognizing that language. We will use this property as our definition of the syntactic $\D$-monoid.
\begin{definition}
  \label{def:syn}
  The \emph{syntactic $\D$-monoid} of a language $L:\xs\ra Y$, denoted by $\Syn L$, is the smallest $X$-generated monoid recognizing $L$. 
\end{definition}
In more detail, the syntactic $\D$-monoid of $L$ is an $X$-generated $\D$-monoid $e_L: \xs \epito \Syn L$ together with a morphism $f_L: \Syn{L}\ra Y$ of $\D$ such that (i) $e_L$ recognizes $L$ via $f_L$, and (ii) for every $X$-generated $\D$-monoid $e: \xs \epito M$ recognizing $L$ via $f: M\ra Y$ we have $e_L \leq e$, that is, the left-hand triangle below commutes for some $\D$-monoid morphism $h$:
\[
\xymatrix{
  \xs
  \ar@{->>}[rr]^-e
  \ar@{->>}[rrd]_-{e_L}
  &&
  M
  \ar@{->>}[d]^h
  \ar[rr]^-f 
  &&
  Y
  \\
  &&
  \Syn L
  \ar[rru]_{f_L}
}
\]
Note that the right-hand triangle also commutes since $e$ is epimorphic and $f \o e = L = f_L \o e_L$. The universal property determines $\Syn L$, $e_L$ and $f_L$ uniquely up to isomorphism.

The above definition leaves open the question whether the syntactic $\D$-monoid of a 
language actually exists. In Section \ref{sec:existsynmon} we investigate the 
existence of syntactic $\D$-monoids in an abstract symmetric monoidal closed category $\D$. In 
Section 
\ref{sec:constsynmon} we show how to construct them in our 
algebraic setting, using the syntactic congruence of a language.

\subsection{Existence of syntactic $\boldsymbol{\D}$-monoids}\label{sec:existsynmon}

\begin{definition}
The category $\D$ is  \emph{stable} w.r.t. $\E$ if the following square is a pushout for all morphisms $a: A\epito A'$ and $b: B \epito B'$ in $\E$:
\[
\xymatrix{
& A\t B \ar[dr]^{A\t b} \ar[dl]_{a\t B} \ar[dd]^{a\t b}&\\
A'\t B \ar[dr]_{A'\t b} && A\t B' \ar[dl]^{a\t B}\\
& A' \t B'&
}
\]
\end{definition}

\begin{example}\label{ex:stable}
\begin{enumerate}[label=(\arabic*)]
\item $\Set$ is stable w.r.t.~$\E =$ surjective maps. The pushout of the surjections $a\times B$ and $A\times b$ is given by $A\times B/\mathord{\sim}$ where $\sim$ is the least equivalence relation such that
\[ a(x)=a(x') \text{ implies } (x,y)\sim (x',y) \text{ for all }  y\in B\]
as well as 
\[ b(y)=b(y') \text{ implies } (x,y)\sim (x,y') \text{ for all }  x\in A. \]
Obviously, $\sim$ is the kernel equivalence of $a\times b$.
\item More generally, every variety $\D$ of algebras is stable w.r.t. $\E =$ surjective homomorphisms. To see this, recall from Definition \ref{def:tensorproduct} that the forgetful functor from $\D$ to $\Set$
yields the tensor product $A\t B$ via the universal bimorphism $\eta_{A,B}: \und{A}\times\und{B}\to \und{A\t B}$. Suppose that $u: A'\t B \to C$ and $v: A\t B'\to C$ are given with $u\o (a\t B) = v \o (A\t b)$. Consider the following diagram:
\[
\xymatrix{
&& \und{A\t B} \ar[ddll]_{a\t B} \ar[ddrr]^{A\t b} &&\\
&& \und{A}\times\und{B} \ar[dl]_{a\times B} \ar[dd]^{a\times b} \ar[u]_\eta \ar[dr]^{A\times b} &&\\
\und{A'\t B} \ar[ddrr]_{A'\t b} \ar@/_1em/[dddrr]_u & \und{A'}\times\und{B} \ar[l]_\eta \ar[dr]_{A'\times b} && \und{A}\times\und{B'} \ar[r]^\eta \ar[dl]^{a\times B'} & \und{A\t B'} \ar[ddll]^{a\t B'} \ar@/^1em/[dddll]^v\\
&& \und{A'}\times\und{B'} \ar@/^3em/@{-->}[dd]^{w_0} \ar[d]_\eta &&\\
&& \und{A'\t B'} \ar@{-->}[d]^w&&\\
&& \und{C} &&
}
\]
By Example \ref{ex:stable}(1) above the inner square is a pushout in $\Set$, so there is a unique function $w_0: \und{A'}\times\und{B'}\to \und{C}$ with $u\o \eta_{A',B} = w_0\o (A'\times b)$ and $v\o \eta_{A,B'} = w_0 \o (a\times B')$. We claim that $w_0$ is a bimorphism of $\D$. Indeed, for each $x\in \und{A'}$ the map $w_0(x,-)$ carries a morphism of $\D$: choosing $\overline x\in \und{A}$ with $a(\overline x) =  x$, we have
\[ w_0(x,-) = w_0\o (a\times B')(\overline x, -) = v\o \eta_{A,B'}(\ol x,-) \]
and the last map is a morphism in $\D$ because $\eta_{A,B'}$ is a bimorphism. Symmetrically, $w_0(-,y)$ is a morphism of $\D$ for all $y\in \und{B'}$.

Consequently, the bimorphism $w_0$ induces a unique morphism $w: A'\t B'\to C$ in $\D$ with $w\o \eta_{A',B'} = w_0$. We have
\[  w\o (A'\t b) = u \]
because, by definition of $w_0$, this holds when precomposed with $\eta_{A',B}$. Analogously,
\[ w\o (a\t B') = v.\]
The uniqueness of $w$ follows from the uniqueness of $w_0$.
\item $\Pos$ is stable w.r.t. $\E =$ surjective monotone maps. The pushout of $a\times B$ and $A\times b$ is obtained from the pushout in $\Set$, i.e. $\und{A'}\times\und{B'}$, by taking the smallest preorder $\leq$ such that $(x,y)\leq (x',y)$ for $x\leq x'$ in $A'$ as well as $(x,y)\leq (x,y')$ for all $y\leq y'$ in $B'$ (and forming the corresponding quotient poset of the preordered set $A'\times B'$, cf. Remark \ref{rem:orderedalg} below). But this is just the product order; thus the pushout is the product $A'\times B'$ in $\Pos$ with the cocone $A'\times b$ and $a\times B'$.
\item More generally, every variety $\D$ of ordered algebras is stable w.r.t. 
$\E =$ surjective homomorphisms. The argument is completely analogous to 
Example \ref{ex:stable}(2).
%\item Varieties are in general not stable w.r.t. $\E =$  all epimorphisms.
%For example, in the category of commutative monoids consider the
%embedding $e: (\Int,\cdot,1)\monoto (\mathbb{Q},\cdot,1)$. This is well-known to be an 
%epimorphism. The pushout of
%$e\t \Int$ and $\Int\t e$ is generated by all pairs $n\t q$ and
%$q \t n$ with $n\in \Int$ and $q\in\mathbb{Q}$. This is
%obviously not true in $\mathbb{Q}\t\mathbb{Q}$.\smnote{This does not seem to be obvious!}
\end{enumerate}
\end{example}

\begin{theorem}\label{thm:creatpush}
Let $\D$ be stable w.r.t. $\E$, and suppose moreover that $\E$ is closed under wide pushouts and 
tensor products. Then the forgetful functor from $\Mon{D}$ to $\D$ creates 
wide pushouts of $\E$-carried morphisms.
\end{theorem}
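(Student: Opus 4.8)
The plan is to show that for a wide pushout of $\D$-monoid morphisms all of whose legs lie in $\E$, the pushout taken in the underlying category $\D$ automatically carries a unique $\D$-monoid structure making the colimit cocone into $\DMon$-morphisms, and that this is the pushout in $\DMon$. Concretely, suppose we are given a collection of $\E$-carried $\D$-monoid morphisms $e_k\colon N \epito M_k$ out of a common vertex $N$, and let $P$ together with the cocone $p_k\colon M_k \to P$ be the wide pushout of the underlying morphisms $\under{e_k}$ in $\D$. By hypothesis $\E$ is closed under wide pushouts, so each $p_k$ lies in $\E$; in particular each $p_k$ is epic and their joint image covers $P$. The goal is to equip $P$ with a multiplication $m_P\colon P\t P \to P$ and unit $i_P\colon I\to P$, to check the monoid axioms, and to verify the universal property.

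First I would construct the unit as $i_P = p_k\o i_{M_k}$ for any $k$ (this is independent of $k$ because the $p_k$ agree on the image of $N$, and all $i_{M_k}$ factor through the shared unit of $N$). The crux is the multiplication. Here is where stability enters: since each $p_k\in\E$ and $\E$ is closed under tensor products, the morphisms $p_k\t p_k\colon M_k\t M_k \to P\t P$ are available, and I would like to exhibit $P\t P$ as a wide pushout so that defining $m_P$ on $P\t P$ reduces to giving compatible maps $p_k\o m_{M_k}\colon M_k\t M_k \to P$. The key diagram-chase is to show these maps are compatible over the pushout cocone, i.e.\ that they agree on the appropriate ``overlaps'' $M_j\t M_k$. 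This is exactly what stability delivers: stability says the square expressing $a\t b$ as the diagonal of $(a\t B,\, A\t b)$ is a pushout, and iterating/combining this with closure of $\E$ under wide pushouts and tensoring lets me present $P\t P$ (and more generally the relevant tensor cells $M_j\t M_k$) as built from pushouts of the $\E$-morphisms $p_k\t p_k$. Granting that presentation, the family $\{p_k\o m_{M_k}\}$ is compatible precisely because each $e_k$ is a monoid morphism over the common $N$, so all the $m_{M_k}$ restrict consistently along the shared generators; the universal property of the pushout then yields a unique $m_P$ with $m_P\o(p_k\t p_k) = p_k\o m_{M_k}$.

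The monoid laws (associativity, left and right unit) for $(P,m_P,i_P)$ I would verify by the standard epi-cancellation argument: each law is an equation between morphisms out of a tensor power of $P$ (e.g.\ $P\t P\t P$ for associativity), and since the jointly-epic $\E$-cocone tensored with itself is again jointly epic (using closure of $\E$ under tensor products and the fact that $\t$ preserves the relevant colimits, Remark~\ref{rem:tensorprod}(1)), it suffices to precompose with the appropriate $p_k\t p_k\t p_k$ and reduce each law to the corresponding law in $M_k$, which holds since $M_k$ is a $\D$-monoid. This confirms $P\in\DMon$ and, by construction, every $p_k$ is a $\D$-monoid morphism, so the underlying cocone lifts to $\DMon$.

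Finally I would check that $(P,(p_k))$ is the pushout in $\DMon$ and that the forgetful functor \emph{creates} it. Given any competing cocone of $\D$-monoid morphisms $g_k\colon M_k \to R$ agreeing on $N$, the underlying maps induce a unique $\D$-morphism $u\colon P\to R$ with $u\o p_k = g_k$ by the pushout property in $\D$; that $u$ is a monoid morphism follows again by epi-cancellation, since $u\o m_P$ and $m_R\o(u\t u)$ agree after precomposition with each jointly-epic $p_k\t p_k$, and similarly $u$ preserves units. Uniqueness of the lifted structure is automatic because the $p_k$ are (jointly) epic, which gives creation rather than mere preservation. The main obstacle I anticipate is the middle step: rigorously presenting the tensor $P\t P$ (and the intermediate cells) as the appropriate wide pushout so that $m_P$ is well defined — this is the only place where the hypotheses ``stable w.r.t.\ $\E$'' and ``$\E$ closed under wide pushouts and tensor products'' are genuinely used in combination, and getting the pushout bookkeeping right (rather than the routine axiom-checking) is the real content.
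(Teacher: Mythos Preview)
Your proposal is correct and follows essentially the same route as the paper: take the wide pushout $P$ in $\D$, establish that $P\t P$ is the wide pushout of the $M_k\t M_k$'s (so that $m_P$ can be defined via the family $p_k\o m_{M_k}$), and then verify the monoid laws and the universal property by epi-cancellation. The ``main obstacle'' you flag is exactly the content of the paper's part~(1), and the paper dispatches it with precisely the ingredients you name: use that $-\t D$ preserves the wide pushout to handle one tensor factor, invoke stability (applied to $\ol{e_i}\in\E$, which is where closure of $\E$ under wide pushouts enters) to pass across the square, and then use that $\ol D\t -$ preserves colimits for the other factor.
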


\begin{proof}
\begin{enumerate}[label=(\arabic*)]
\item Given $e_i: D\to D_i$ ($i\in I$) in $\E$ with wide pushout $e = \overline{e_i}\o e_i: D\to \overline D$ in $\D$, we prove that the morphisms $e_i\t e_i$ have the wide pushout $e\t e = (\ol{e_i}\t \ol{e_i}) \o (e_i\t e_i)$ in $\D$. Indeed, given a compatible family
\[ c: D\t D \to C \text{ and } c_i: D_i\t D_i\to C \text{ with } c = c_i\o (e_i\t e_i), \]
the morphisms $e_i\t D$, $i\in I$, have the compatible family formed by $c$ and $c_i\o (D_i\t e_i)$. Since $-\t D$ is a left adjoint and thus preserves colimits, the wide pushout of $e_i\t D$ is formed by $e\t D$ and $\ol{e_i}\t D$. Consequently there is unique morphism
\[  \tilde{c}: \ol{D}\t D\to D \]
such that the following squares commute for all $i\in I$:
\[
\xymatrix{
D_i\t D \ar[r]^{\ol{e_i}\t D} \ar[d]_{D_i\t e_i} & \ol{D}\t D \ar[d]^{\tl c} \\
D_i\t D_i \ar[r]_{c_i} & C 
}
\]
Since by assumption $\ol{e_i}\in \E$, the stability of $\D$ gives a unique morphism $\tl{c_i}$ making the following diagram commutative:
\[ 
\xymatrix{
& D_i\t D \ar[dl]_{D_i\t e_i} \ar[dr]^{\ol{e_i}\t D} & \\
D_i\t D_i \ar[dr]^{\ol{e_i}\t D_i} \ar[ddr]_{c_i} & & \ol{D}\t D \ar[dl]_{\ol{D}\t e_i} \ar[ddl]^{\tl c} \\
& \ol{D}\t D_i \ar@{-->}[d]^{\tl{c_i}} & \\
& C & 
}
\]
Thus $\tl{c}$ and $\tl{c_i}$ form a cocone of the family $\ol{D}\t e_i$. Since the latter family has the wide pushout formed by $\ol{D} \t e$ and $\ol{D}\t\ol{e_i}$ (using that the left adjoint $\ol{D}\t -$ preserves colimits), we obtain a unique morphism $\ol{c}: \ol{D}\t\ol{D}\to C$ for which the following triangle commutes:
\[  
\xymatrix{
\ol{D}\t D_i \ar[r]^{\ol{D}\t \ol{e_i}} \ar[dr]_{\tilde{c_i}} & \ol{D}\t \ol{D} \ar@{-->}[d]^{\ol{c}} \\
& C
}
\]
Combined with the definition of $\tilde{c_i}$, we get
\[ c_i = \tilde{c_i}\o (\ol{e_i}\t D_i) =  \ol{c}\o (\ol{D}\t \ol{e_i}) \o (\ol{e_i}\t D_i) = \ol{c}\o (\ol{e_i}\t  \ol{e_i}) \]
for all $i\in I$, as desired.

The uniqueness of $\ol c$ follows from the fact that the morphisms $\ol{e_i}\t\ol{e_i}$ are epimorphisms (using that $\ol D \t -$ and $-\t D_i$ are left-adjoints and hence preserve epimorphisms).
 
\item Suppose that $\D$-monoid structures are given such that the above morphisms $e_i$ are $\D$-monoid morphisms
\[ e_i: (D,\mu,\eta)\to (D_i,\mu_i,\eta_i) \]
for all $i\in I$. We define a $\D$-monoid structure $(\ol{D},\ol{\mu},\ol{\eta})$ such that all $\ol{e_i}$ are $\D$-monoid morphisms. For the unit $\ol{\eta}$ put
\[ \ol{\eta}~\equiv~ I \xra{\eta} D\xra{{e}} \ol{D}. \]
Next, observe that the morphisms $e_i\t e_i$ have a compatible cocone consisting of $e\o \mu$ and $\ol{e_i}\o \mu_i$ ($i\in I$). Hence by (1) there exists a unique $\ol{\mu}: \ol{D}\t\ol{D}\to\ol{D}$ for which the following squares commute:
\[ 
\xymatrix{
D\t D \ar[r]^\mu \ar[d]_{e\t e} & D \ar[d]^e && D_i\t D_i \ar[r]^{\mu_i} \ar[d]_{\ol{e_i}\t\ol{e_i}} & D_i \ar[d]^{\ol{e_i}}\\
\ol{D}\t \ol{D} \ar[r]_{\ol{\mu}} & \ol{D} && \ol{D}\t\ol{D} \ar[r]_{\ol{\mu}} & \ol{D}
}
\]
We only need to prove that $(\ol{D},\ol{\mu},\ol{\eta})$ is a $\D$-monoid. Then the definitions of $\mu$ and $\eta$ immediately imply that $e$ and $\ol{e_i}$ are $\D$-monoid morphisms, and the verification that they form the wide pushout in $\Mon{D}$ is trivial.

\emph{Unit laws.} Due to symmetry we only verify the left unit law $\ol{\mu} \o (\ol{\eta}\t\ol{D}) = \lambda_{\ol{D}}$, where $\lambda_{\ol{D}}: I\t \ol{D} \cong \ol{D}$ is the left unit isomorphism. Consider the following diagram:
\[
\xymatrix{
I\t D \ar[rrr]^{\eta\t D} \ar[dr]^{I\t e} \ar[dddrr]_{\lambda_D}  & & & D\t D \ar[dl]_{e\t e} \ar[dddl]^\mu \\
& I\t \ol{D} \ar[r]^{\ol{\eta}\t\ol{D}} \ar[dr]_{\lambda_{\ol D}} & \ol{D}\t \ol{D} \ar[d]^{\ol{\mu}} & \\
& & \ol{D} & \\
& & D \ar[u]_{{e}} &
}
\]
The outside triangle is the unit law of $(D,\mu,\eta)$. The upper part commutes due to $\ol{\eta} = e\o \eta$, the left-hand one is the naturality of $\lambda$, and the right-hand one is the definition of $\ol{\mu}$.  Thus the inner triangle commutes when precomposed with $I\t e$. Since the latter is an epimorphism, the proof is complete.

\emph{Associative law.} Consider the following diagram, where $\alpha$ denotes the associativity isomorphism:
\[
\xymatrix{
 (D\t D)\t D \ar[rr]^\alpha \ar[ddd]_{\mu\t I} \ar[dr]^{( e\t  e)\t  e} & & D\t (D\t D) \ar[rr]^{D\t \mu} \ar[d]^{ e\t ( e \t  e)}& & D\t D \ar[dl]_{e\t  e} \ar[ddd]^\mu \\
 & (\ol D\t \ol D) \t \ol D \ar[r]^-\alpha \ar[d]_{\ol\mu\t \ol D} & \ol D\t (\ol D\t \ol D) \ar[r]^-{\ol D\t \ol \mu} & \ol D\t \ol D \ar[d]^{\ol\mu} & \\
 & \ol D\t \ol D \ar[rr]_{\ol\mu} & &  \ol D & \\
D\t D \ar[ur]_{ e\t  e} \ar[rrrr]_\mu & & & & D \ar[ul]_{ e}
}
\]
The outside is the associativity of $(D,\mu,\eta)$. All the inner parts 
except for the (desired) inner square commute: for the upper left-hand one use 
the naturality of $\alpha$, and all other parts follow from the definition of 
$\ol\mu$. Consequently the inner square commutes, as it commutes when 
precomposed with the epimorphism $(e\t e)\t e$.
\end{enumerate}
\end{proof}

\begin{corollary}\label{cor:syntmonexists}
Let $\D$ satisfy the assumptions of the previous theorem, and suppose moreover that  
$\D$ has wide 
pushouts of $\E$-morphisms. Then every language $L:\xs\to Y$ has a syntactic 
$\D$-monoid.
\end{corollary}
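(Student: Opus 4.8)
The plan is to build the syntactic $\D$-monoid directly as the wide pushout of \emph{all} recognizers of $L$, with Theorem~\ref{thm:creatpush} doing the heavy lifting of transporting this construction from $\D$ to $\DMon$. The guiding intuition is that in the poset of $X$-generated quotients the wide pushout computes the infimum, so the pushout of all recognizers is the most-collapsed quotient lying below each of them; the only thing to check is that this infimum still recognizes~$L$.

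First I would fix the data. Since $\E$-quotients $e\colon \xs \epito M$ of a fixed object form a set up to isomorphism, I may choose a representative set $(e_i\colon \xs \epito M_i)_{i\in I}$ of all $X$-generated $\D$-monoids recognizing $L$, together with a chosen witness $f_i\colon M_i \to Y$ with $f_i \o e_i = L$ for each $i$. This family is nonempty: the identity $\id_{\xs}$ is an isomorphism, hence in $\E$, and it recognizes $L$ via $L$ itself. Now form the wide pushout $e_L = \ol{e_i}\o e_i\colon \xs \to \Syn L$ of the family $(e_i)_{i\in I}$ in $\DMon$. It exists because $\D$ has wide pushouts of $\E$-morphisms by hypothesis, and by Theorem~\ref{thm:creatpush} the forgetful functor $\DMon \to \D$ creates wide pushouts of $\E$-carried morphisms; thus $\Syn L$ carries a unique $\D$-monoid structure making every coprojection $\ol{e_i}$ a $\D$-monoid morphism, and the underlying object of $\Syn L$ is the wide pushout computed in $\D$. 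Moreover $\ol{e_i}\in\E$ since $\E$ is closed under wide pushouts, so $e_L = \ol{e_i}\o e_i$ is a composite of $\E$-morphisms and therefore lies in $\E$ by closure under composition. Hence $e_L\colon \xs \epito \Syn L$ is a genuine $X$-generated $\D$-monoid.

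It then remains to verify the two defining properties of Definition~\ref{def:syn}. For minimality, note that $e_L = \ol{e_i}\o e_i$ factors through each $e_i$, so $e_L \leq e_i$ for every recognizer $e_i$, which is property~(ii). For recognition, observe that the equations $f_i \o e_i = L = f_j \o e_j$ say precisely that the witnesses $(f_i)_{i\in I}$ form a cocone in $\D$ over the very diagram whose colimit is the underlying object of $\Syn L$. The universal property of the wide pushout in $\D$ therefore yields a unique $\D$-morphism $f_L\colon \Syn L \to Y$ with $f_L \o \ol{e_i} = f_i$, whence $f_L \o e_L = f_L \o \ol{e_i} \o e_i = f_i \o e_i = L$. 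Thus $e_L$ recognizes $L$ via $f_L$, giving property~(i). Being a recognizer below every recognizer, $e_L$ is the syntactic $\D$-monoid of $L$.

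The genuine mathematical content has already been absorbed into Theorem~\ref{thm:creatpush}, so the remaining work is conceptual bookkeeping, and I expect no real difficulty. The two points deserving care are: (a) the set-theoretic legitimacy of indexing the pushout by ``all'' recognizers, which is handled by passing to a representative set of $\E$-quotients of $\xs$; and (b) the key observation that the output morphisms $f_i$ assemble into a single $\D$-cocone \emph{exactly because} each satisfies $f_i\o e_i = L$, which is what allows the witness $f_L$ to be extracted from the universal property. Everything else --- that $e_L$ lands in $\E$, and that it factors through each recognizer --- is immediate from the closure properties of $\E$ and the definition of the wide pushout.
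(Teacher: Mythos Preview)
Your proof is correct and follows essentially the same approach as the paper: form the wide pushout in $\D$ of all $X$-generated recognizers, lift it to $\DMon$ via Theorem~\ref{thm:creatpush}, and use the pushout's universal property to extract the output morphism $f_L$ from the compatible family $(L,f_i)$. You are in fact slightly more careful than the paper in two places---you address the set-theoretic issue of indexing by ``all'' recognizers, and you explicitly verify that $e_L\in\E$ using closure of $\E$ under wide pushouts and composition---whereas the paper leaves both points implicit.
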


\begin{proof}
Let $e_i: \xs \epito M_i$ ($i\in I$) be the family of all $X$-generated
$\D$-monoids recognizing $L$. Note that $I\neq\emptyset$ since the identity 
morphism of $\xs$ trivially recognizes $L$. Form the wide pushout $e: 
\xs\epito 
M$ and $\ol{e_i}: 
M_i\epito M$  of all $e_i$'s in $\D$. By Theorem \ref{thm:creatpush} there 
is a unique 
$\D$-monoid structure on $M$ making $e$ and $\ol{e_i}$ a wide pushout in 
$\Mon{\D}$. We claim that $e: \xs\epito M$ is the syntactic $\D$-monoid of 
$L$. To this end we verify the universal property of Definition \ref{def:syn}:

(a) Since $e_i$ recognizes $L$, there exists a morphism $f_i: M_i\to Y$ in 
$\D$ with $L=f_i\o e_i$ for each $i\in I$. Hence the morphisms $L$ and $f_i$ 
form a compatible family, so there is a unique morphism $f: M\to Y$ in $\D$ 
with $f_i = f\o \ol{e_i}$ for all $i$. Choosing an arbitrary $i\in I$, it follows that
\[ f\o e = f\o \ol{e_i}\o e_i = f_i\o e_i = L, \]
i.e. $e$ recognizes $L$ via $f$.

(b) By construction, for every $X$-generated $\D$-monoid $e_i: 
\xs\epito 
M_i$ recognizing $L$ we have the $\D$-monoid morphism $\ol{e_i}: 
M_i\epito 
M$ with $\ol{e_i}\o e_i = e$.
\end{proof}

\begin{corollary}
Every language $L\colon \xs\to Y$ in a variety $\D$ of algebras or ordered algebras admits a syntactic $\D$-monoid.
\end{corollary}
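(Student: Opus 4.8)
The plan is to derive this corollary directly from Corollary~\ref{cor:syntmonexists} by checking that every variety $\D$ of algebras or ordered algebras, equipped with $\E=$ surjective homomorphisms, satisfies all the hypotheses assembled there: stability w.r.t.\ $\E$, closure of $\E$ under wide pushouts and under tensor products, and the existence of wide pushouts of $\E$-morphisms. That $\E$ contains all isomorphisms and is closed under composition is immediate, since isomorphisms are surjective and a composite of surjections is surjective. Stability is exactly Example~\ref{ex:stable}(2) for algebras and Example~\ref{ex:stable}(4) for ordered algebras, so nothing new is needed there, and the existence of wide pushouts follows because every variety of algebras or ordered algebras is cocomplete.

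For the closure of $\E$ under wide pushouts I would describe the pushout explicitly. Given a family $e_i\colon D\epito D_i$ ($i\in I$) of surjections out of a common object $D$, presented as quotients $D_i\cong D/\theta_i$ by congruences $\theta_i$, the wide pushout is the quotient $e\colon D\epito D/\theta$ modulo the congruence $\theta=\bigvee_{i}\theta_i$ generated by all the $\theta_i$, with leg $\overline{e_i}\colon D_i\epito D/\theta$ the induced quotient (well defined since $\theta_i\subseteq\theta$). The cocone condition $\overline{e_i}\o e_i=e$ is clear, and universality follows since any compatible cocone $(c_i)$ has a common composite $c$ whose kernel contains every $\theta_i$, hence contains $\theta$, so $c$ factors uniquely through $D/\theta$. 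In particular all legs $\overline{e_i}$ and the composite $e$ are surjective, which is precisely the required closure; for ordered algebras one argues identically, replacing congruences by the order-quotients used in Example~\ref{ex:stable}(3,4).

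For the closure of $\E$ under tensor products I would invoke Remark~\ref{rem:tensorprod}(1): each functor $-\t B$ and $A'\t-$ is a left adjoint and therefore preserves colimits, in particular coequalizers. Since in a variety the surjective homomorphisms are precisely the regular epimorphisms (coequalizers of their kernel pairs), it follows that $a\t B$ and $A'\t b$ are surjective whenever $a\colon A\epito A'$ and $b\colon B\epito B'$ are. Writing $a\t b=(A'\t b)\o(a\t B)$ as a composite of surjections then shows $a\t b\in\E$. With all four hypotheses verified, Corollary~\ref{cor:syntmonexists} yields the syntactic $\D$-monoid of every language $L\colon\xs\to Y$.

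The one point that needs genuine care, rather than mere bookkeeping, is the distinction between epimorphisms and surjections. The functors $-\t B$ and the wide-pushout legs automatically preserve \emph{epimorphisms}, but the definition of $\E$ demands genuine surjectivity; the argument therefore rests on the fact that surjective homomorphisms coincide with regular epimorphisms in a variety, that regular epimorphisms are preserved by left adjoints, and that they are exactly the maps produced by the congruence-quotient construction above. Everything else is the routine verification of hypotheses already prepared in Example~\ref{ex:stable} and Remark~\ref{rem:tensorprod}.
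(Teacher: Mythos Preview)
Your proposal is correct and follows exactly the approach implicit in the paper: the corollary is stated there without proof, as an immediate consequence of Corollary~\ref{cor:syntmonexists} together with Example~\ref{ex:stable}, and you have simply spelled out the verification of the remaining hypotheses (closure of $\E$ under wide pushouts and tensor products, cocompleteness) that the paper leaves to the reader.
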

We shall now show that in the situation of the above corollary the syntactic $\D$-monoid of a language $L$ admits a more concrete construction via the \emph{syntactic congruence} of $L$. We first consider the case of varieties of algebras (Section \ref{sec:constsynmon}) and then turn to varieties of ordered algebras (Section \ref{sec:constsynmonord}).

\subsection{Syntactic congruences for varieties of algebras}\label{sec:constsynmon}
In this subsection, let $\D$ be a variety of algebras.

\begin{remark}\label{rem:congalg}
\begin{enumerate}[label=(\arabic*)]
\item Recall that a \emph{congruence} on a $\Sigma$-algebra $A$ is an 
equivalence 
relation $\equiv$ on $A$ that forms a subalgebra of $A\times A$. We 
denote by $A/\mathord{\equiv}$ the quotient algebra modulo~$\equiv$.
\item For any homomorphism 
$h: A\to B$ of $\Sigma$-algebras, the \emph{kernel congruence} of $h$ is the congruence on $A$ 
defined by
\[ a \equiv_h a' \quad\text{iff}\quad h(a)= h(a'). \]
\item We will frequently use the following \emph{homomorphism theorem}: given 
homomorphisms of $\Sigma$-algebras $e: A\epito B$ and $h: A\to C$, where $e$ 
is surjective, there exists a homomorphism $h': B\to C$ with $h = h'\o e$ iff, 
for all $a,a'\in A$, \[e(a)=e(a') \quad\text{implies}\quad h(a)=h(a').\]
\end{enumerate}
\end{remark}

\begin{definition}\label{def:syncong}
Let $\D$ be a variety of algebras. The \emph{syntactic congruence} of a 
language $L: \xs\to Y$ is the relation $\equiv_L$ on $\under{\xs}$ defined by
\begin{equation}\label{eq:syncong}
 u\equiv_L v \quad\text{iff}\quad \forall x,y\in\under{\xs}: L(x\bullet u\bullet y) = 
 L(x\bullet v\bullet y).
 \end{equation}
\end{definition}

\begin{theorem}\label{prop:synmon}
 $\equiv_L$ is a congruence on the free $\D$-monoid $\xs$, and
\[ \Syn{L} = \xs /\mathord{\equiv_L}. \]
\end{theorem}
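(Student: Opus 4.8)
The plan is to verify directly that $\equiv_L$ is a congruence on the $\D$-monoid $\xs$ and then identify the quotient $\xs/\mathord{\equiv_L}$ with $\Syn L$ by checking the universal property spelled out after Definition~\ref{def:syn}. The decisive structural observation throughout is that $\bullet$ is a bimorphism: hence for any fixed $x,y\in\under{\xs}$ the left and right translations $u\mapsto x\bullet u$ and $u\mapsto u\bullet y$ are morphisms of $\D$, so their composite $t_{x,y}\colon u\mapsto x\bullet u\bullet y$ is a $\D$-morphism, and composing with the morphism $L$ yields a homomorphism $\varphi_{x,y} = L\o t_{x,y}\colon \under{\xs}\to\under{Y}$.

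First I would show that $\equiv_L$ is a congruence. Being defined by equality of $L$-values, it is clearly an equivalence relation. For the $\D$-algebra structure, note from the definition that $\equiv_L = \bigcap_{x,y}\ker\varphi_{x,y}$, the intersection of the kernel congruences of all the homomorphisms $\varphi_{x,y}$. Each kernel congruence is a $\D$-congruence on $\xs$ (Remark~\ref{rem:congalg}(2)) and congruences are closed under intersection, so $\equiv_L$ is a subalgebra of $\xs\times\xs$, i.e.\ a $\D$-congruence. For compatibility with the monoid multiplication, suppose $u\equiv_L v$ and $u'\equiv_L v'$. Applying the first relation with outer elements $x$ and $u'\bullet y$, and the second with outer elements $x\bullet v$ and $y$, and using associativity, gives $L(x\bullet(u\bullet u')\bullet y) = L(x\bullet(v\bullet v')\bullet y)$ for all $x,y$, that is, $u\bullet u'\equiv_L v\bullet v'$. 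Thus $\equiv_L$ is a congruence on the $\D$-monoid $\xs$, and the quotient $e_L\colon \xs\epito \xs/\mathord{\equiv_L}$ is an $X$-generated $\D$-monoid.

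Next I would establish the two halves of the universal property using the homomorphism theorem (Remark~\ref{rem:congalg}(3)). For recognition, taking $x=y=\epsilon$ in the definition shows $u\equiv_L v$ implies $L(u)=L(v)$, so there is a unique $\D$-morphism $f_L\colon \xs/\mathord{\equiv_L}\to Y$ with $L = f_L\o e_L$. For minimality, let $e\colon \xs\epito M$ recognize $L$ via $f$. If $e(u)=e(v)$, then since $e$ is a $\D$-monoid morphism we have $e(x\bullet u\bullet y)=e(x\bullet v\bullet y)$ for all $x,y$, whence $L(x\bullet u\bullet y)=f(e(x\bullet u\bullet y))=f(e(x\bullet v\bullet y))=L(x\bullet v\bullet y)$, i.e.\ $u\equiv_L v$. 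By the homomorphism theorem $e_L$ factors as $e_L = h\o e$ for a $\D$-morphism $h$, and $h$ is automatically a $\D$-monoid morphism because $e$ is surjective. Hence $e_L\leq e$, and the uniqueness of the syntactic $\D$-monoid up to isomorphism gives $\Syn L = \xs/\mathord{\equiv_L}$.

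The only genuinely delicate point is the $\D$-congruence property of $\equiv_L$: a priori it is merely a set-theoretic relation on $\under{\xs}$, and what forces it to respect every $\Sigma$-operation is precisely that each translate-then-output map $\varphi_{x,y}$ is a $\D$-homomorphism, which in turn rests on $\bullet$ being a bimorphism. Everything else—the two-sided congruence computation, recognition, and minimality—is routine once this observation and the homomorphism theorem are available; the only additional care needed is to confirm that the minimality factorization $h$ preserves the monoid structure, and this follows formally from the surjectivity of $e$.
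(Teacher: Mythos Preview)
Your proposal is correct and follows essentially the same route as the paper: first realize $\equiv_L$ as the intersection of the kernel congruences of the translate-then-output maps $\varphi_{x,y}=L\circ(x\bullet-\bullet y)$ to obtain a $\D$-subobject of $\xs\times\xs$, then verify closure under the monoid multiplication by the same two-step substitution, and finally establish recognition and minimality via the homomorphism theorem. If anything, you are slightly more explicit than the paper in noting that the factorizing map $h$ is a $\D$-monoid morphism because $e$ is an epimorphism.
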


\begin{proof}
(a) Clearly $\equiv_L$ forms an equivalence relation on $\xs$. To show that it 
is  congruence, first observe that $\equiv_L$ is a subobject of $\xs\times\xs$ 
in $\D$. Indeed, we have $\mathord{\equiv_L} = \bigcap K_{x,y}$, where for 
fixed $x,y\in\xs$ the object $K_{x,y}$ is the kernel of the $\D$-morphism
     $\xymatrix@1{\xs \ar[r]^-{x \bullet -} & \xs \ar[r]^-{-\bullet y} & \xs 
     \ar[r]^-L & Y}$. Moreover, $\equiv_L$ is closed under the monoid 
     multiplication of $\xs \times \xs$: given $u \equiv_L v$ and $u'\equiv_L 
     v'$ we have for all $x, y \in \xs$ that
\begin{align*}
L(x \bullet u \bullet u' \bullet y) &=  L(x \bullet v \bullet u' \bullet y) & (\text{$y:= u'\bullet y$ in \eqref{eq:syncong}})\\
&= L(x \bullet v \bullet v' \bullet y) & (\text{$x:= x\bullet v$ in \eqref{eq:syncong}}).
\end{align*}
 Hence $u \bullet u' \equiv_L v \bullet v'$ and therefore 
$\equiv_L$ is a $\D$-submonoid of $\xs\times \xs$, i.e.~a congruence of $\xs$.
     
(b) Denote by $e_L: \xs\epito \xs /\mathord{\equiv_L}$ the projection. Then 
$e_L(u) = e_L(v)$ (i.e.~$u\equiv_L v$) implies $L(u)=L(v)$ by putting $x=y=1$ 
in \eqref{eq:syncong}, where $1$ is the unit of the monoid $\xs$. Hence the homomorphism theorem yields a morphism $f_L: 
\xs /\mathord{\equiv_L}\to Y$ in $\D$ with $f_L\o e_L = L$, i.e.~$e_L$ 
recognizes $L$ via $f_L$.

(c) In remains to verify the universal property of Definition 3.5. Let $e: 
\xs\epito M$ be a surjective $\D$-monoid morphism recognizing $L$ via $f: 
M\to Y$. To construct the morphism $h: M\to \xs /\mathord{\equiv_L}$, we again 
use the homomorphism theorem. Thus let $u,v\in\xs$ with $e(u)=e(v)$. Then, for 
all $x,y\in\xs$,
\begin{align*}
L(x\bullet u \bullet y) &= f(e(x\bullet u\bullet y)) & \text{($L=f\o e$)}\\
&= f(e(x)e(u)e(y)) & \text{($e$ preserves $\bullet$)}\\
&= f(e(x)e(v)e(y)) & \text{($e(u)=e(v)$)}\\
&= f(e(x\bullet v\bullet y)) & \text{($e$ preserves $\bullet$)}\\
&= L(x\bullet v \bullet y) & \text{($L=f\o e$)}
\end{align*}
Hence $u\equiv_L v$, i.e.~$e_L(u) = e_L(v)$. Thus the homomorphism theorem 
yields the desired $\D$-monoid morphism $h: M\to \xs /\mathord{\equiv_L}$ with 
$h\o e = 
e_L$.
\end{proof}

\begin{example} Using the notation of Example $\ref{ex:language}$ we obtain 
the following  syntactic algebras:
  \begin{enumerate}[label=(\arabic*)]
  \item In $\Set$ with $Y = \{0,1\}$, the syntactic monoid of a language 
  $L\seq X^*$ is the quotient monoid $X^*/ \mathord{\equiv_L}$, where for  
  $u,v\in X^*$,
  \[
  u \equiv_L v \quad\text{iff}\quad  \text{for all } 
  x, y \in X^*: xuy \in L \iff xvy \in L.
  \]
  This construction is due to Rabin and Scott \cite{rs59}.
  \item In $\PSet$ with $X=X_0+\{\bot\}$ and $Y=\{\bot,1\}$ the 
  \emph{syntactic monoid with zero} of a language $L_0\seq X_0^*$ is 
  $(X_0^*+\{\bot\}) /\mathord{\equiv_L}$ where for  $u,v\in 
  X_0^*+\{\bot\}$,
  \[ u\equiv_L v \quad\text{iff}\quad \text{for all $x,y\in X_0^*$}: xuy\in 
  L_0 
  \iff xvy \in L_0.\]
 The zero element is the congruence class of $\bot$.
  \item In $\Inv$ with $X = X_0 + \wt{X_0}$ and $Y = \{0,1\}$ the 
  \emph{syntactic involution monoid} of a language $L_0\seq X_0^*$ is the 
  quotient of $X_0 + \wt{X_0^*}$ modulo the congruence $\equiv_L$ defined for 
  words $u,v \in X_0^*$ as follows:
  \begin{enumerate}[label=(\roman*)]
  \item $u \equiv_L v\quad\text{iff}\quad\wt u \equiv_L \wt 
  v\quad\text{iff}\quad\text{for all $x, y \in X_0^*$}: xuy \in L_0 \iff xvy 
  \in L_0$;
\item
$u \equiv_L \wt v\quad\text{iff}\quad\wt u \equiv_L v 
\quad\text{iff}\quad\text{for 
all $x, y \in X_0^*$}: xuy \in L_0 \iff xvy \not\in L_0$.
\end{enumerate}
\item In $\SMod{\S}$  with $X = \Psi X_0$ and $Y = \S$ the \emph{syntactic 
associative $\S$-algebra} of a weighted language $L_0: X_0^*\ra\S$ is the 
quotient of $\S[X_0]$ modulo the congruence defined for $U,V \in \S[X_0]$ as 
follows: 
\begin{equation}
  \label{eq:syns}
  U \equiv_L V\quad \text{iff} \quad \text{for all $x, y \in X_0^*$}: L(xUy) = 
  L(xVy)
\end{equation}
 Indeed, since $L:\S[X_0]\ra \S$ is linear, \refeq{eq:syns} implies $L(PUQ) = 
 L(PVQ)$ for all $P, Q \in \S[X_0]$, which is the syntactic congruence of 
 Definition \ref{def:syncong}.
\item In particular, for $\D = \JSL$ with $X=\Pow_f X_0$ and $Y=\{0,1\}$, we 
get  the \emph{syntactic (idempotent) semiring} of a language $L_0 \subseteq 
X_0^*$ introduced by Pol\'ak \cite{polak01}: it is the quotient $\Pow_f 
X_0^*/\mathord{\equiv_L}$ where for $U,V\in\Pow_f X_0^*$ we have 
\[
U \equiv_L V\quad\text{iff}\quad\text{for all $x, y \in X_0^*$}: (xUy) \cap 
L_0 
\neq \emptyset \iff xVy \cap L_0 \neq \emptyset.
\]
\item For $\D = \Vect{\K}$ with $X=\Psi X_0$ and $Y=\K$, the \emph{syntactic 
$\K$-algebra} of a $\K$-weighted language $L_0: X_0^*\ra \K$ is the quotient 
$\K[X_0]/I$ of the $\K$-algebra of finite weighted languages modulo the ideal
\[
I = \{ V \in \K[X_0] \mid \text{for all $x,y \in X_0^*$}: L(xVy) = 0 \}.
\]
 Indeed, the congruence this ideal $I$ generates ($U\equiv_L V$ iff $U-V \in 
 I$) 
 is precisely~\refeq{eq:syns}. Syntactic $\K$-algebras were studied by 
 Reutenauer~\cite{reu80}.
\item Analogously, for $\D = \Ab$ with $X=\Psi X_0$ and $Y=\Int$, the 
\emph{syntactic ring} of a $\Int$-weighted language $L_0: X_0^*\ra \Int$ is 
the quotient $\Int[X_0]/I$, where $I$ is the ideal of all $V\in \Int[X_0]$ 
with $L(xVy) = 0$ for all $x, y \in X_0^*$.
\end{enumerate}
\end{example}

\subsection{Syntactic congruences for ordered algebras}\label{sec:constsynmonord}
Next, we consider the case where $\D$ is a variety of ordered algebras.
\begin{remark}\label{rem:orderedalg}
\begin{enumerate}[label=(\arabic*)]
\item A \emph{congruence} on an ordered $\Sigma$-algebra $(A,\leq)$ is a 
preorder $\preccurlyeq$ on $A$ such that $a\leq a'$ implies $a\preccurlyeq 
a'$, and all $\Sigma$-operations of $A$ are monotone w.r.t. 
$\preccurlyeq$ (equivalently, $\preccurlyeq$ forms a subalgebra of 
$A\times A$). Then $\preccurlyeq \cap \succcurlyeq$ is a 
congruence in the unordered sense, and the quotient algebra $e: A\epito 
A/(\preccurlyeq \cap \succcurlyeq)$ forms an ordered algebra with partial 
order induced by $e$:
\[ e(u) \preccurlyeq' e(v) \quad\text{iff}\quad u\preccurlyeq v.\]
This ordered algebra is called the \emph{quotient} of $A$ modulo 
$\preccurlyeq$,  and is denoted by $A/\mathord{\preccurlyeq}$.
\item For any 
homomorphism $h: A\to B$ between ordered $\Sigma$-algebras, the \emph{kernel congruence} of $h$ is the congruence on $A$ defined by
\[ a\leq_h a' \quad\text{iff}\quad h(a)\leq h(a'). \]
\item The \emph{homomorphism theorem} has the following version for ordered 
algebras, see e.g. \cite[Proposition 1.1]{pinweil}: given
homomorphisms of ordered $\Sigma$-algebras $e: A\epito B$ and $h: A\to C$, 
where 
$e$ 
is surjective, there exists a homomorphism $h': B\to C$ with $h = h'\o e$ iff, 
for all $a,a'\in A$, 
 \[e(a)\leq e(a') \quad\text{implies}\quad h(a)\leq h(a').\]
\end{enumerate}
\end{remark}

\begin{definition}
Let $\D$ be a variety of ordered algebras. The \emph{syntactic congruence} of 
a language $L: \xs\to Y$ is the relation $\leq_L$ on $\under{\xs}$ defined by
\begin{equation}\label{eq:syncongord}
 u\leq_L v \quad\text{iff}\quad \forall x,y\in\under{\xs}: L(x\bullet u\bullet y) \leq 
 L(x\bullet v\bullet y)
 \end{equation}
\end{definition}

\begin{theorem}
$\leq_L$ is a congruence of the free $\D$-monoid $\xs$, and
\[ \Syn{L} = \xs / \mathord{\leq_L}.\]
\end{theorem}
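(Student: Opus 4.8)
The plan is to follow the three-part argument of Theorem~\ref{prop:synmon} line for line, replacing every equality by an inequality and invoking the ordered versions of the kernel congruence and the homomorphism theorem collected in Remark~\ref{rem:orderedalg}.

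First I would verify that $\leq_L$ is a congruence of the $\D$-monoid $\xs$ in the ordered sense. Reflexivity and transitivity of $\leq_L$ are inherited from those of $\leq$ on $Y$, and $u\leq v$ forces $u\leq_L v$ because each map $x\bullet(-)\bullet y$ and $L$ is a monotone homomorphism, so $L(x\bullet u\bullet y)\leq L(x\bullet v\bullet y)$. To see that $\leq_L$ is a subalgebra of $\xs\times\xs$ I would write it as the intersection $\mathord{\leq_L}=\bigcap_{x,y\in\xs}\mathord{\leq_{h_{x,y}}}$ of the ordered kernel congruences (Remark~\ref{rem:orderedalg}(2)) of the homomorphisms $h_{x,y}=(\xymatrix@1{\xs\ar[r]^-{x\bullet -}&\xs\ar[r]^-{-\bullet y}&\xs\ar[r]^-L&Y})$; an intersection of congruences is again a congruence. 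Closure of $\leq_L$ under the monoid multiplication of $\xs\times\xs$ uses precisely the substitution trick of Theorem~\ref{prop:synmon}: from $u\leq_L v$ and $u'\leq_L v'$ one gets
\[
L(x\bullet u\bullet u'\bullet y)\leq L(x\bullet v\bullet u'\bullet y)\leq L(x\bullet v\bullet v'\bullet y)
\]
for all $x,y$, first substituting $u'\bullet y$ for $y$ and then $x\bullet v$ for $x$, whence $u\bullet u'\leq_L v\bullet v'$. Thus $\leq_L$ is a $\D$-submonoid of $\xs\times\xs$, i.e.\ a congruence of $\xs$.

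Next, writing $e_L:\xs\epito\xs/\mathord{\leq_L}$ for the quotient map, I would set $x=y=1$ in \eqref{eq:syncongord} to see that $u\leq_L v$ (equivalently $e_L(u)\leq e_L(v)$) implies $L(u)\leq L(v)$. The ordered homomorphism theorem (Remark~\ref{rem:orderedalg}(3)) then yields a morphism $f_L:\xs/\mathord{\leq_L}\to Y$ with $f_L\o e_L=L$, so $e_L$ recognizes $L$ via $f_L$.

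Finally, to establish the universal property of Definition~\ref{def:syn}, let $e:\xs\epito M$ be any $X$-generated $\D$-monoid recognizing $L$ via $f:M\to Y$. I would again apply the ordered homomorphism theorem to obtain $h:M\to\xs/\mathord{\leq_L}$ with $h\o e=e_L$; for this I must check that $e(u)\leq e(v)$ forces $u\leq_L v$, which follows from
\[
L(x\bullet u\bullet y)=f(e(x)\,e(u)\,e(y))\leq f(e(x)\,e(v)\,e(y))=L(x\bullet v\bullet y)
\]
for all $x,y\in\xs$. The only genuinely order-specific point—and hence the step to watch—is the inequality in this last display: it requires that $M$ is an \emph{ordered} $\D$-monoid, so that $m\mapsto e(x)\,m\,e(y)$ is monotone, and that $f$ is a monotone homomorphism. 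Everything else is the verbatim ordered analogue of Theorem~\ref{prop:synmon}.
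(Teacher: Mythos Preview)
Your proposal is correct and matches the paper's proof essentially step for step: both verify that $\leq_L$ is a preorder containing $\leq$, express it as the intersection of the ordered kernel congruences of the maps $L(x\bullet - \bullet y)$, check closure under multiplication via the two-step substitution, and then appeal to the ordered homomorphism theorem (Remark~\ref{rem:orderedalg}(3)) to construct $f_L$ and the factorizing morphism $h$. The paper even phrases the second and third parts exactly as you do, by saying they are ``completely analogous'' to Theorem~\ref{prop:synmon} with equations replaced by inequations.
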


\begin{proof}
Clearly $\leq_L$ is a preorder. Let us verify the conditions of
Remark \ref{rem:orderedalg}(1). First, if $u\leq v$ in $\xs$, then $x\bullet 
u\bullet y \leq x\bullet v\bullet y$ for all $x,y\in\xs$ because $\bullet$ is 
monotone. Since also $L$ is monotone, we conclude $L(x\bullet u\bullet y) \leq 
L(x\bullet v\bullet y)$ in $Y$ for all $x,y$, i.e.~$u\leq_L v$. Next, observe 
 that $\leq_L$ is a subobject of $\xs\times\xs$ in $\D$, namely 
$\mathord{\leq_L} = \bigcap \leq_{x,y}$, where for fixed $x,y\in\xs$ the 
object 
$\leq_{x,y}$ is the kernel congruence of the $\D$-morphism
     $\xymatrix@1{\xs \ar[r]^-{x \bullet -} & \xs \ar[r]^-{-\bullet y} & \xs 
     \ar[r]^-L & Y}$. Moreover, the monoid multiplication $\bullet$ is 
     monotone w.r.t. $\leq_L$: given $u \leq_L v$ and $u'\leq_L v'$ we have for 
     all $x, y \in \xs$ that
\[
L(x \bullet u \bullet u' \bullet y) \leq L(x \bullet v \bullet u' \bullet y) 
\leq L(x \bullet v \bullet v' \bullet y).
\]
 Hence $u\bullet u' \leq_L v\bullet v'$, and 
therefore $\leq_L$ is a $\D$-submonoid of $\xs\times \xs$, as required.
     
 The proof that $\Syn{L} = \xs / \mathord{\leq_L}$ is completely analogous to 
 that of Proposition \ref{prop:synmon}: replace equations by inequations, 
 and use the homomorphism theorem for ordered algebras to construct the 
 morphisms $f_L$ and $h$.
\end{proof}

\begin{example} In $\D = \Pos$ with $Y = \{0<1\}$, the \emph{syntactic ordered 
monoid} of a language $L\seq X^*$ is the ordered quotient monoid 
$X^*/\mathord{\leq_L}$ where for $u,v\in X^*$,
\[ u\leq_L v \quad\text{iff}\quad \text{for 
all 
$x,y\in X^*$: } xuy\in L \Ra xvy\in L. \]
This construction is due to Pin \cite{pin95}.
\end{example}

\section{Transition $\boldsymbol{\D}$-Monoids}
\label{sec:tran}

In this section we present another construction of the syntactic $\D$-monoid of a language: it is the transition $\D$-monoid of the minimal $\D$-automaton for this language. We continue to work under the Assumptions \ref{ass:global}. Recall from Remark~\ref{rem:concept} the $\D$-monoid $[Q,Q]$ and the $\D$-monoid morphism $(\lambda\delta)^+: \xs\to[Q,Q]$.
\begin{definition}
  \label{def:TA}
  The \emph{transition $\D$-monoid} $\T{Q}$ of an $F$-algebra $(Q,\delta, i)$ is the image of the $\D$-monoid morphism $(\lambda\delta)^+:\xs \ra [Q,Q]$:
  \[
  \xymatrix{
    \xs \ar@{->>}[dr]_{e_{\T Q}} \ar[rr]^{(\lambda\delta)^+} & & [Q,Q] \\
    & \T Q \ar@{>->}[ur]_{m_{\T Q}} 
  }
  \]
\end{definition}

\begin{example}\label{ex:42}
\begin{enumerate}[label=(\arabic*)]
\item In $\Set$ or $\Pos$ the \emph{(ordered) transition monoid} of an $F$-algebra $Q$, i.e.~an 
(ordered) automaton without final states, is the (ordered) monoid of all extended 
transition maps $\delta_w = \delta_{a_n}\o \ldots\o \delta_{a_1}: Q\ra Q$ for 
$w=a_1\cdots a_n\in X^*$. The unit is $\id_Q = \delta_\epsilon$ and the monoid multiplication is  composition.
\item In $\PSet$ with $X=X_0+\{\bot\}$ (the setting for partial automata) this is completely analogous, except that we add the constant endomap of $Q$ with value $\bot$.
\item In $\Inv$ with $X=X_0+\tl{X_0}$ we get the involution monoid of all $\delta_w$ and $\tl{\delta_w}$. Again the unit is $\delta_\epsilon$, and the multiplication is determined by composition plus the equations $x\tl y = \tl{xy} = \tl x y$.
\item In $\JSL$ with $X=\Pow_f X_0$ the \emph{transition semiring} consists of all finite joins of extended transitions, i.e.~all semilattice homomorphisms of the form
$\delta_{w_1}\vee\cdots\vee \delta_{w_n}$ for  $\{w_1,\ldots,w_n\}\in\Pow_f X_0^*$.
The transition semiring was introduced by Pol\'ak \cite{polak01}.
\item In $\SMod{\S}$ with $X=\Psi X_0$ the associative transition algebra  consists of all linear maps of the form
$\sum_{i=1}^n s_i \delta_{w_i}$ with $s_i\in \S$ and  $w_i\in X_0^*$.
\end{enumerate}
\end{example}

Recall from Definition~\ref{def:aut} that a $\D$-automaton is an $F$-algebra $Q$ together with an output morphism $f: Q \to Y$. Hence we can speak of the transition $\D$-monoid of a $\D$-automaton.
\begin{proposition}\label{prop:transmon}
The language accepted by a $\D$-automaton $(Q,\delta,f,i)$ is recognized by the $\D$-monoid morphism $e_{\T{Q}}: \xs\epito \T{Q}$.
\end{proposition}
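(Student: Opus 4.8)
The plan is to show that the language $L_Q$ accepted by the automaton factors through $e_{\T Q}$, which is exactly what it means for $e_{\T Q}$ to recognize $L_Q$. Recall that by definition $L_Q = (\xs \xra{e_Q} Q \xra{f} Y)$, where $e_Q$ is the initial $F$-algebra homomorphism described in Remark~\ref{rem:concept}(1). Concretely, $e_Q$ arises by currying: it is the composite $\xs \cong \xs \t I \xra{(\lambda\delta)^+ \t i} [Q,Q]\t Q \xra{\ev} Q$. The crucial observation is that $e_Q$ depends on $\xs$ only through the morphism $(\lambda\delta)^+ : \xs \to [Q,Q]$, and by Definition~\ref{def:TA} this morphism factors as $m_{\T Q} \o e_{\T Q}$ through the transition $\D$-monoid.

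First I would exploit this factorization directly. Substituting $(\lambda\delta)^+ = m_{\T Q}\o e_{\T Q}$ into the formula for $e_Q$ yields
\[
  e_Q = (\xs \cong \xs \t I \xra{e_{\T Q}\t i} \T Q \t Q \xra{m_{\T Q}\t Q} [Q,Q]\t Q \xra{\ev} Q).
\]
Defining the morphism
\[
  f' = (\T Q \cong \T Q \t I \xra{\T Q \t i} \T Q \t Q \xra{m_{\T Q}\t Q} [Q,Q]\t Q \xra{\ev} Q \xra{f} Y),
\]
a straightforward diagram chase (using naturality of the unit isomorphisms and that $\t$ is a bifunctor) shows that $L_Q = f \o e_Q = f' \o e_{\T Q}$. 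Thus $e_{\T Q}$ recognizes $L_Q$ via $f'$, as required.

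The key step, then, is simply to transport the factorization of $(\lambda\delta)^+$ through the explicit description of $e_Q$ in \eqref{eq:eA}, and I expect the only real work to be verifying the commuting diagram that identifies $f' \o e_{\T Q}$ with $f \o e_Q$. I anticipate no genuine obstacle here: the statement is essentially a bookkeeping consequence of the definition of $\T Q$ as the image of $(\lambda\delta)^+$ together with the formula for the accepted language. If one prefers an even cleaner argument, one can observe that $\T Q$ itself carries an $F$-algebra structure making $e_{\T Q}: \xs \to \T Q$ an $F$-algebra homomorphism (via the action of $\T Q$ on $Q$), so that $e_{\T Q}$ factors the initial homomorphism $e_Q$; then $f'$ is obtained by composing $f$ with the induced map $\T Q \to Q$, and recognition follows immediately.
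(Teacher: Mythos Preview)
Your proposal is correct and follows essentially the same approach as the paper: both use the explicit formula \eqref{eq:eA} for $e_Q$, substitute the factorization $(\lambda\delta)^+ = m_{\T Q}\o e_{\T Q}$, and read off the witnessing morphism (your $f'$ coincides with the paper's $f_{\T Q} = f\o \ev_i \o m_{\T Q}$ up to a routine application of bifunctoriality and naturality of the unit isomorphism). The alternative sketch in your final paragraph is unnecessary and a bit imprecise as stated, but the main argument stands.
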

\begin{proof}
Let $(Q,\delta, i, f)$ be a $\D$-automaton. By definition it accepts the language $L_Q = (\xs \xra{e_Q} Q \xra{f} Y)$ where $e_Q$ is the unique $F$-algebra homomorphism. Consider the morphism that evaluates any endomorphism of $Q$ at the initial state:
\[
\ev_i = ([Q,Q] \cong [Q,Q]\t I \xra{[Q,Q]\t i} [Q,Q]\t Q \xra{\ev} Q).
\]
Now let 
\[
f_{\T Q} = (\T Q \xra{m_{\T Q}} [Q,Q] \xra{\ev_i} Q \xra{f} Y).
\]
With this morphism $\T Q$ recognizes $L$; indeed, using the right unit isomorphism $\iota_Z: Z \to Z \t I$ we compute:
\begin{align*}
  L_Q & = f \o e_Q \\
    & = f \o \ev \o ((\lambda\delta)^+ \t i) \o \iota_{\xs} & \text{(Remark~\ref{rem:concept})}\\
    & = f \o \ev_i \o \iota_{[Q,Q]}^{-1}\o ((\lambda\delta)^+ \t I) \o \iota_{\xs} & \text{(def.~of $\ev_i$)} \\
    & = f \o \ev_i  \o (\lambda\delta)^+ & \text{(naturality of $\iota$)} \\
    & = f \o \ev_i \o m_{\T Q} \o e_{\T Q} & \text{(Definition~\ref{def:TA})} \\
    & = f_{\T Q} \o e_{\T Q} & \text{(def.~of $f_{\T Q}$)}
\end{align*}
This completes the proof.
\end{proof}

\begin{definition}\label{def:minaut}
A $\D$-automaton $(Q,\delta ,i, f)$ is called \emph{minimal} iff it is
\begin{enumerate}[label=(\alph*)]
\item \emph{reachable}: the unique $F$-algebra homomorphism $\xs\ra Q$ is surjective;
\item \emph{simple}: the unique $T$-coalgebra homomorphism $Q\ra [\xs, Y]$ is injective.
\end{enumerate}
\end{definition}

\begin{theorem}[Goguen \cite{goguen75}]\label{thm:minaut}
Every language $L: \xs \ra Y$ is accepted by a minimal $\D$-automaton $\Min{L}$, unique up to isomorphism. Given any reachable automaton $Q$ accepting $L$, there is a unique surjective automata homomorphism from $Q$ into $\Min{L}$.
\end{theorem}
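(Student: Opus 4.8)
The plan is to build $\Min L$ as the image of the initial $F$-algebra inside the final $T$-coalgebra. First I equip $\xs$ with the automaton structure $(\xs,\delta_X,i_X,L)$, using its initial $F$-algebra structure (Proposition~\ref{prop:initalg}) and taking $L$ as output; since the only $F$-algebra endomorphism of $\xs$ is the identity, this automaton is reachable, and it accepts $L$ because $L_{\xs}=L\circ\id=L$. Dually I equip $\fc$ with the final $T$-coalgebra structure (Proposition~\ref{prop:fincoalg}) and the point $\widehat L\colon I\to\fc$ obtained by currying $L$; the only $T$-coalgebra endomorphism of $\fc$ is the identity, so this automaton is simple, and by Proposition~\ref{prop:lang} it also accepts $L$. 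The unique $F$-algebra homomorphism $e_\fc\colon\xs\to\fc$ preserves outputs (since $f_\fc\circ e_\fc=L_\fc=L$), so by Remark~\ref{rem:algcoalg} it is an automata homomorphism, i.e.\ simultaneously an $F$-algebra and a $T$-coalgebra homomorphism. I then factor it as $e_\fc=m_L\circ e_L$ with $e_L\colon\xs\epito\Min L$ surjective and $m_L\colon\Min L\monoto\fc$ injective.

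The heart of the argument is to put a compatible automaton structure on $\Min L$. Because $X\t-$ is a left adjoint (Remark~\ref{rem:tensorprod}), the functor $F=I+X\t-$ preserves surjections, so the (surjective, injective) factorization lifts from $\D$ to $F$-algebras; applied to $e_\fc$ this makes $\Min L$ an $F$-algebra with $e_L,m_L$ as homomorphisms. Dually $[X,-]$ is a right adjoint, hence $T=Y\times[X,-]$ preserves injections and the factorization lifts to $T$-coalgebras, making $\Min L$ a $T$-coalgebra with $e_L,m_L$ as homomorphisms. The transition morphisms induced by these two structures coincide: precomposed with the surjection $X\t e_L$ both equal $e_L\circ\delta_X$ --- for the coalgebraic one this uses that $m_L$ is monic and that $e_\fc$ is a coalgebra homomorphism --- and $X\t e_L$ is epic. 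Thus $\Min L$ is a genuine $\D$-automaton accepting $L$; the unique $F$-algebra homomorphism $\xs\to\Min L$ is the surjection $e_L$ (reachability) and the unique $T$-coalgebra homomorphism $\Min L\to\fc$ is the injection $m_L$ (simplicity), so $\Min L$ is minimal.

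For the universal property, let $Q$ be reachable and accept $L$, with surjection $e_Q\colon\xs\epito Q$ and unique $T$-coalgebra homomorphism $c_Q\colon Q\to\fc$. Since $Q$ accepts $L$, Proposition~\ref{prop:lang} gives $c_Q\circ i_Q=\widehat L$, so $c_Q$ preserves initial states and is an automata homomorphism, hence an $F$-algebra homomorphism. Then $c_Q\circ e_Q$ and $m_L\circ e_L$ are both $F$-algebra homomorphisms $\xs\to\fc$ and therefore equal to $e_\fc$ by initiality of $\xs$; the diagonal fill-in of the image factorization (with $e_Q$ epic and $m_L$ monic) yields a unique $h\colon Q\to\Min L$ with $h\circ e_Q=e_L$ and $m_L\circ h=c_Q$, and one checks that $h$ preserves transitions, initial states and outputs using that $e_Q$ and $X\t e_Q$ are epic and $m_L$ is monic. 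This $h$ is surjective since $e_L=h\circ e_Q$ is, and it is unique among surjective automata homomorphisms because any such $h'$ satisfies $h'\circ e_Q=e_L$ (both being $F$-algebra homomorphisms out of the initial $\xs$) with $e_Q$ epic. Finally, any minimal automaton accepting $L$ is reachable, simple, and accepts $L$, so its canonical legs form an image factorization of $e_\fc$; uniqueness of image factorizations then supplies a (necessarily automata-)isomorphism with $\Min L$.

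The main obstacle is the second step: reconciling the $F$-algebra and $T$-coalgebra structures on the single object $\Min L$ into one $\D$-automaton. This hinges on $F$ preserving surjections and $T$ preserving injections --- so that both factorizations live over the same image in $\D$ --- together with the epi/mono argument forcing the two candidate transition morphisms to agree.
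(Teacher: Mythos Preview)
The paper does not supply a proof of this theorem; it is stated with attribution to Goguen and then used as a black box. Your construction is the standard one and is correct: take the image factorization of the canonical automata homomorphism from the initial $F$-algebra $(\xs,\delta_X,i_X,L)$ to the final $T$-coalgebra $(\fc,\tau_\fc,f_\fc)$ pointed by $\widehat L$, lift the factorization to both $F$-algebras and $T$-coalgebras using that $F$ preserves the epi-class and $T$ the mono-class, and reconcile the two induced transition maps via the epi/mono sandwich you describe. The universal property and uniqueness then follow from diagonal fill-in and the essential uniqueness of factorizations.

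One small remark: when $\D$ is a variety of \emph{ordered} algebras the $\mathcal M$-class of the factorization system consists of order-reflecting morphisms rather than merely injective ones, so what you actually need is that $T$ preserves this class. This is fine because $[X,-]$ and $Y\times-$ are right adjoints and hence preserve regular monomorphisms (equalizers), which in a variety of ordered algebras are precisely the order-reflecting maps. The paper's Definition~\ref{def:minaut} itself says ``injective'', so your phrasing matches it, but for the lifting argument to go through in the ordered case you should read ``injective'' as ``$\mathcal M$-morphism''.
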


This leads to the announced construction of syntactic $\D$-monoids via  transition $\D$-monoids. The case $\D=\Set$ and $\Pos$ are standard results of algebraic automata theory (see e.g. Pin \cite{pin15}), and the case $\D=\JSL$ is due to Pol\'ak \cite{polak01}. For the other instances in Example~\ref{ex:42} this appears to be new.

\begin{remark}
  Recall that in our algebraic setting the tensor unit $I$ is $\Psi 1$. Hence 
  elements $a \in A$ of a $\D$-object are in 1-1-correspondence with morphisms 
  $a: I \to A$; we shall henceforth not distinguish these. 
\end{remark}
\begin{theorem}\label{thm:tran}
The syntactic $\D$-monoid of a language $L: \xs\ra Y$ is isomorphic to the transition $\D$-monoid of its minimal $\D$-automaton:
\[ \Syn{L} \cong \T{\Min{L}}. \]
\end{theorem}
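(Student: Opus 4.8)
My plan is to show that the transition $\D$-monoid $\T{\Min L}$, equipped with the quotient $e_{\T{\Min L}}\colon \xs\epito\T{\Min L}$ of Definition~\ref{def:TA}, satisfies the universal property of the syntactic $\D$-monoid (Definition~\ref{def:syn}); since that property determines $\Syn L$ uniquely up to isomorphism, this gives $\Syn L\cong\T{\Min L}$. Condition~(i), that $e_{\T{\Min L}}$ recognizes $L$, is exactly Proposition~\ref{prop:transmon}. The work lies in condition~(ii): given any $X$-generated $\D$-monoid $e\colon\xs\epito M$ recognizing $L$ via some $f\colon M\to Y$, I must produce a $\D$-monoid morphism $k\colon M\epito\T{\Min L}$ with $k\o e=e_{\T{\Min L}}$, i.e.\ $e_{\T{\Min L}}\le e$.

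The argument rests on two observations about transition monoids. The first is that the transition $\D$-monoid of the \emph{derived automaton} of an $X$-generated $\D$-monoid $M$ (Remark~\ref{rem:Lalg}(2)) is $M$ itself. Indeed, for the derived automaton the transition $\delta$ sends an input to left multiplication by the corresponding generator, so $\lambda\delta=\lambda m\o e\o\eta_X$, where $\lambda m\colon M\to[M,M]$ is the left regular representation, i.e.\ the currying of the monoid multiplication $m$. Since $\lambda m$ is a $\D$-monoid morphism (it sends $\bullet$ to composition and the unit to $\id_M$), so is $\lambda m\o e$, and its restriction along $\eta_X$ is $\lambda\delta$; by uniqueness of the free extension we get $(\lambda\delta)^+=\lambda m\o e$. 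Now $\lambda m$ is a \emph{split} monomorphism: the map $\ev_i$ of Proposition~\ref{prop:transmon} (evaluation at the unit $i\colon I\to M$) is a left inverse, by the right unit law. Hence $(\lambda\delta)^+=\lambda m\o e$ is already a factorization into a surjection followed by a monomorphism, so the transition $\D$-monoid of the derived automaton of $M$ is $M$ itself, with comparison morphism $e$.

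The second observation is that a surjective automata homomorphism $g\colon Q\epito Q'$ induces a surjective $\D$-monoid morphism $\T g\colon\T Q\epito\T{Q'}$ with $\T g\o e_{\T Q}=e_{\T{Q'}}$. This follows by comparing kernel congruences: by Remark~\ref{rem:aux}(2) the map $g$ intertwines the two $\xs$-actions, so $g\o\delta_x=\delta'_x\o g$ for every $x\in\xs$; consequently $\delta_x=\delta_y$ forces $\delta'_x\o g=\delta'_y\o g$, whence $\delta'_x=\delta'_y$ because $g$ is epic. Thus the kernel congruence of $e_{\T Q}$ is contained in that of $e_{\T{Q'}}$, and the homomorphism theorem (Remark~\ref{rem:congalg}(3)) supplies $\T g$, which is surjective since $e_{\T{Q'}}$ is. To finish, I form the derived automaton $D_M$ of $M$; it is reachable, its reachability map being the surjection $e$ (by the proof of Lemma~\ref{lem:recog}), and it accepts $L=f\o e$. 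Theorem~\ref{thm:minaut} then yields a unique surjective automata homomorphism $g\colon D_M\epito\Min L$, and combining the two observations gives $e_{\T{\Min L}}=\T g\o e_{\T{D_M}}=\T g\o e$ under the identification $\T{D_M}\cong M$. So $k:=\T g$ witnesses $e_{\T{\Min L}}\le e$, completing condition~(ii).

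The main obstacle is the first observation: one must verify carefully that, for the derived automaton, $(\lambda\delta)^+$ coincides with the left regular representation $\lambda m$ precomposed with $e$, and that $\lambda m$ is genuinely the monomorphism part of the image factorization defining $\T{D_M}$ (i.e.\ that it is injective, which the retraction $\ev_i$ guarantees). Once this identification of $\T{D_M}$ with $M$ is in place, the remainder is bookkeeping: the intertwining property of Remark~\ref{rem:aux}(2) makes the second observation routine, and minimality of $\Min L$ supplies the connecting homomorphism $g$. A pleasant feature of this route is that it is uniform, relying only on Proposition~\ref{prop:transmon}, Theorem~\ref{thm:minaut}, and the monoid structure of $[M,M]$; it therefore applies verbatim to varieties of algebras and of ordered algebras alike, with no reference to the explicit syntactic congruences of Section~\ref{sec:constsynmon}.
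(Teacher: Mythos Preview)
Your argument is correct and takes a genuinely different route from the paper's proof. The paper proceeds by an elementwise computation: writing $Q=\Min L$, it shows directly that the kernel congruence of $e_{\T Q}\colon\xs\epito\T Q$ coincides with the syntactic congruence $\equiv_L$ (resp.\ $\leq_L$). The key chain of equivalences uses reachability of $Q$ (so that $e_Q$ is surjective) and simplicity of $Q$ (so that the final-coalgebra map $m_Q$ is injective/order-reflecting), unwinding Remark~\ref{rem:concept}(2) and Remark~\ref{rem:finalmorphism} to identify $\delta_u=\delta_v$ with $\forall x,y\colon L(x\bullet u\bullet y)=L(x\bullet v\bullet y)$. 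Your proof, by contrast, never touches the explicit syntactic congruence: you verify the universal property of Definition~\ref{def:syn} abstractly, via (a) the identification $\T{D_M}\cong M$ for the derived automaton of any $X$-generated $\D$-monoid, and (b) the functoriality of $\T$ along surjective $F$-algebra homomorphisms. Simplicity of $\Min L$ enters only indirectly, through the existence statement of Theorem~\ref{thm:minaut}. Two small points worth making explicit in a polished version: for the ordered case, the split mono $\lambda m$ is order-reflecting (any split mono is), so $(e,\lambda m)$ really is the $(\E,\M)$-factorization of $(\lambda\delta)^+$; and the factor $\T g$ supplied by the homomorphism theorem is automatically a $\D$-monoid morphism, since $e_{\T Q}$ is a surjective $\D$-monoid morphism and $\T g\o e_{\T Q}=e_{\T{Q'}}$ is one too. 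The paper's approach has the advantage of exhibiting the link to the concrete congruence of Section~\ref{sec:constsynmon}; yours is more structural and, as you note, entirely uniform across the ordered and unordered settings without case distinction.
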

\begin{proof}
Let $\Min{L}=(Q,\delta,i,f)$, and write $\delta_x: Q\ra Q$ for $e_{\T{Q}}(x)$ 
($x\in\xs$). Note that $\delta_{x\bullet y}= \delta_y\o \delta_x$ for all 
$x,y\in \xs$ since $e_{\T{Q}}$ is a $\D$-monoid morphism. Recall also that 
the unique $F$-algebra homomorphism $e_Q: \xs\ra Q$ assigns to $x\in\xs$ the 
element $\delta_x\o i: I\ra Q$ (see Remark \ref{rem:concept}(2)), and the 
unique $T$-coalgebra homomorphism 
$m_Q: Q\ra [\xs, Y]$ assigns to a state $q: I\ra Q$ the language $x\mapsto 
f\o\delta_x\o q$ (see Remark \ref{rem:finalmorphism}). It suffices to show 
that the kernel congruence of 
$e_{\T{Q}}$ coincides with the syntactic congruence of $L$. If $\D$ is a variety of 
algebras, this requires to prove that for all $u,v\in \under{\xs}$ 
one has
\[
\delta_u = \delta_v \quad\text{iff}\quad \forall x,y\in \xs: L(x\bullet u \bullet y)=L(x\bullet v \bullet y).
\]
To see this, we compute
\begin{align*}
\delta_u = \delta_v &\Lra \forall x: \delta_u \o e_Q(x)= \delta_v\o e_Q(x) & \text{($e_Q$ surjective)}\\
&\Lra\forall x: \delta_u\o \delta_x\o i = \delta_v\o \delta_x \o i & \text{(def. $e_Q$)}\\
& \Lra\forall x: m_Q\o \delta_u\o \delta_x\o i = m_Q\o \delta_v\o \delta_x \o i & \text{($m_Q$ injective)}\\
& \Lra \forall x, y: f \o \delta_y\o \delta_u\o \delta_x\o i = f\o \delta_y\o \delta_v\o \delta_x\o i & \text{(def. $m_Q$)}\\
& \Lra \forall x, y: f\o \delta_{x\bullet u\bullet y}\o i = f \o \delta_{x\bullet v \bullet y} \o i & \text{(def. $\delta_{(\mathord{-})}$)}\\
& \Lra \forall x, y: f\o e_Q(x\bullet u\bullet y) = f \o e_Q(x\bullet v\bullet y) & \text{(def. $e_Q$)}\\
& \Lra \forall x, y: L(x\bullet u \bullet y) = L(x\bullet v \bullet y) & \text{($L=L_Q$)}
\end{align*}
The case where $\D$ is a variety of ordered algebras is completely analogous: 
just replace equations by inequations.
\end{proof}

\section{$\boldsymbol{\D}$-Regular Languages}
\label{sec:rat}
Our results so far apply to arbitrary languages in $\D$. In the present section we focus on \emph{regular languages}, which in $\D=\Set$ are the languages accepted by finite automata, or equivalently the languages recognized by finite monoids. For arbitrary $\D$ the role of finite sets is taken over by finitely presentable objects. Recall that an object $D$ of $\D$ is \emph{finitely presentable} if the hom-functor $\D(D,\mathord{-}):\D\ra\Set$ preserves filtered colimits. Equivalently, $D$ can be presented with finitely many generators and relations. For example, the tensor unit $I=\Psi 1$ is finitely presentable, since it is presented with one generator and no relations.

\begin{definition}
A language $L: \xs\ra Y$ is called \emph{$\D$-regular} if it is accepted by some $\D$-automaton with a finitely presentable object of states.
\end{definition}
To work with this definition, we need the following

\begin{assumptions}\label{asm:finite}
We assume that the full subcategory $\D_{f}$ of finitely presentable objects of $\D$ is closed under subobjects, quotients and finite products.
\end{assumptions}

\begin{example}\label{ex:dregassumptions}
\begin{enumerate}[label=(\arabic*)]
\item Recall that a variety is \emph{locally finite} if all finitely presentable algebras (equivalently all finitely generated free algebras) are finite. Every locally finite variety satisfies the above assumptions. This includes our examples $\Set$, $\PSet$, $\Inv$ and $\JSL$.
\item A semiring $\S$ is called \emph{Noetherian} if all submodules of finitely generated $\S$-modules are finitely generated. In this case, as shown in \cite{bms13}, the category $\SMod{\S}$ satisfies our assumptions.
Every field is Noetherian, as is every finitely generated commutative ring, so  $\Vect{\K}$ and $\Ab=\SMod{\Int}$ are special instances. %The semiring $\Nat$ of natural numbers is not Noetherian: $\SMod{\Nat}$ is the category of commutative monoids, and submonoids of finitely generated commutative monoids are not necessarily finitely generated. Indeed,  the additive monoid $\Nat\times\Nat$ is generated by the elements $(0,1)$ and $(1,0)$, but the submonoid $\{(0,0)\}\cup\{(m,n):0\leq m<n<\omega\}$ has no finite set of generators.
\end{enumerate}
\end{example}

\begin{example}
  As we have mentioned already, for $\D = \Set$ the $\D$-regular languages are precisely the classical regular languages. The same is true for $\PSet$, $\Pos$, $\Inv$ and $\JSL$. For $\D = \SMod{\S}$ for a Noetherian semiring $\S$ it is known that the $\D$-regular languages are precisely the rational weighted languages, see e.g.~\cite{bms13}.
\end{example}

\begin{remark} 
\begin{enumerate}[label=(\arabic*)] 
\item The functor $FQ=I+X\t Q$ preserves surjective homomorphisms. Therefore the factorization system of $\D$ lifts to the category of $F$-algebras, that is, every $F$-algebra homomorphism factorizes into a surjective homomorphism followed by an injective (resp. order-reflecting) one. By the \emph{reachable part} $Q_r$ of a $\D$-automaton $(Q,\delta,f,i)$ we mean the image of the initial $F$-algebra homomorphism, i.e.  $\xymatrix{
e_Q = (\xs \ar@{->>}[r]^>>>>>{e_r} & Q_r \ar@{>->}[r]^{m_r} & Q). 
}$
Putting $f_r := f\o m_r: Q_r\ra Y$, the $F$-algebra $Q_r$ becomes an automaton, and $m_r$ an automata homomorphism.\qed
\item In the following automata, (co-)algebras and monoids with finitely presentable carrier are referred to as \emph{fp-automata}, \emph{fp-(co-)algebras} and \emph{fp-monoids}, respectively.
\end{enumerate}
\end{remark}

\begin{theorem}\label{thm:reg}
For any language $L: \xs \ra Y$ the following statements are equivalent:
\begin{enumerate}[label=(\alph*)]
\item $L$ is $\D$-regular.
\item The minimal $\D$-automaton $\Min{L}$ has a finitely presentable carrier.
\item $L$ is recognized by some $\D$-monoid with a finitely presentable carrier.
\item The syntactic $\D$-monoid $\Syn{L}$ has a finitely presentable carrier.
\end{enumerate}
\end{theorem}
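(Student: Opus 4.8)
The plan is to establish the cycle of implications $(a)\Ra(b)\Ra(d)\Ra(c)\Ra(a)$, which makes all four conditions equivalent. Two of these links are almost immediate. For $(d)\Ra(c)$, recall that $\Syn{L}$ recognizes $L$ by construction (Definition~\ref{def:syn}), so if its carrier is finitely presentable then $L$ is recognized by an fp-monoid. For $(c)\Ra(a)$, I would take a $\D$-monoid morphism $e\colon\xs\to M$ recognizing $L$ via $f\colon M\to Y$ with $M$ finitely presentable, and factor $e=m\o e_0$ into a surjection $e_0\colon\xs\epito M_0$ followed by a subobject $m\colon M_0\monoto M$. Since $\D_f$ is closed under subobjects (Assumptions~\ref{asm:finite}), $M_0$ is finitely presentable, and $e_0$ recognizes $L$ via $f\o m$. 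By Lemma~\ref{lem:recog} the derived automaton of the $X$-generated $\D$-monoid $M_0$ accepts $L$ and has the finitely presentable carrier $M_0$, witnessing $\D$-regularity.

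For $(a)\Ra(b)$ I would begin with any $\D$-automaton $Q'$ with finitely presentable carrier accepting $L$ and pass to its reachable part $Q'_r$, the image of the initial homomorphism $e_{Q'}\colon\xs\to Q'$. As $Q'_r$ is a subobject of $Q'$, it is finitely presentable, and it is a reachable automaton still accepting $L$ (the initial state and the output morphism restrict along the embedding $m_r$). By the universal property of the minimal automaton (Theorem~\ref{thm:minaut}) there is a surjective automata homomorphism $Q'_r\epito\Min{L}$; hence $\Min{L}$ is a quotient of the finitely presentable object $Q'_r$ and is therefore finitely presentable by closure under quotients.

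The crucial step is $(b)\Ra(d)$. By Theorem~\ref{thm:tran} we have $\Syn{L}\cong\T{\Min{L}}$, and by Definition~\ref{def:TA} the carrier of $\T{\Min{L}}$ is a subobject of $[Q,Q]$ via $m_{\T Q}$, where $Q$ is the finitely presentable carrier of $\Min{L}$. It therefore suffices to prove that $[Q,Q]$ is finitely presentable whenever $Q$ is. Here I would use that $Q$, being finitely presentable, is in particular finitely generated, say by $g_1,\dots,g_n$, so that a homomorphism $g\colon Q\to Q$ is determined by the tuple $(g(g_1),\dots,g(g_n))$. This yields an injective $\D$-morphism $[Q,Q]\to Q^n$, $g\mapsto(g(g_1),\dots,g(g_n))$, whose components are evaluations and hence morphisms because $[Q,Q]$ carries the pointwise $\D$-structure. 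In the ordered case this map is moreover order-reflecting: writing an arbitrary $q\in Q$ as a term $t(g_1,\dots,g_n)$ and using monotonicity of the operations, $g(g_i)\le h(g_i)$ for all $i$ forces $g(q)\le h(q)$ for all $q$, so $g\le h$ in $[Q,Q]$. Thus $[Q,Q]$ is a subobject of $Q^n$; since $\D_f$ is closed under finite products and under subobjects, $Q^n$ and then $[Q,Q]$ are finitely presentable, whence $\Syn{L}\cong\T{\Min{L}}\monoto[Q,Q]$ is finitely presentable as a subobject.

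I expect the embedding $[Q,Q]\monoto Q^n$ to be the main obstacle: it requires extracting finitely many generators from finite presentability and, for varieties of ordered algebras, verifying the order-reflection property rather than mere injectivity. Once $[Q,Q]$ is known to be finitely presentable, everything else reduces to the closure properties of $\D_f$ recorded in Assumptions~\ref{asm:finite} together with Theorems~\ref{thm:minaut} and~\ref{thm:tran} and Lemma~\ref{lem:recog}.
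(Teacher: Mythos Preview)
Your proof is correct and shares its technical core with the paper's: both hinge on the embedding $[Q,Q]\monoto Q^n$ for a finitely generated $Q$, together with the closure properties of $\D_f$ from Assumptions~\ref{asm:finite}. The organization differs slightly. The paper proves the bi-implications $(a)\Lra(b)$ and $(c)\Lra(d)$ separately (the latter via the universal property of $\Syn L$: any fp recognizer has an $X$-generated fp image, of which $\Syn L$ is a quotient), and then closes the loop with $(c)\Ra(a)$ via Lemma~\ref{lem:recog} and $(a)\Ra(c)$ via Proposition~\ref{prop:transmon}, which says the transition monoid of \emph{any} fp-automaton accepting $L$ recognizes $L$. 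You instead run the cycle $(a)\Ra(b)\Ra(d)\Ra(c)\Ra(a)$ and, for the crucial step $(b)\Ra(d)$, invoke the stronger Theorem~\ref{thm:tran} ($\Syn L\cong\T{\Min L}$) applied specifically to the minimal automaton. The paper's route is slightly lighter in that it avoids Theorem~\ref{thm:tran} and does not need $(b)$ established before reaching the monoid side; your route has the advantage of landing directly on $\Syn L$ rather than on an auxiliary recognizer. Both are fine, and your careful handling of the factorization in $(c)\Ra(a)$ and of order-reflection for $[Q,Q]\monoto Q^n$ in the ordered case fills in details the paper leaves implicit.
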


\begin{proof}
 (a)$\Lra$(b) follows from Theorem \ref{thm:minaut} and our assumption that $\D_{f}$ is closed under subobjects and  quotients. Similarly, $(c)\Lra(d)$ follows from the universal property of the syntactic monoid (see Definition \ref{def:syn}) and again closure of $\D_{f}$ under subobjects and quotients. (c)$\Ra$(a) is a consequence of Lemma \ref{lem:recog}. To prove (a)$\Ra$(c), let $Q$ be any fp-automaton accepting $L$. Then by Proposition \ref{prop:transmon} the transition monoid $\T{Q}\monoto[Q,Q]$ recognizes $L$, so by closedness of $\D_{f}$ under subobjects it suffices to show that $[Q,Q]$ is a finitely presentable object of $\D$. Assuming that $Q$ has $n$ generators as an algebra of $\D$, the map $[Q,Q]\ra Q^n$ defined by restriction to the set of generators is an injective (resp. order-reflecting) morphism of $\D$. Since $\D_{f}$ is closed under subobjects and finite products, it follows that $[Q,Q]$ is finitely presentable.
\end{proof}

Just as the collection of all languages is internalized by the final coalgebra $\fc$, see Proposition \ref{prop:fincoalg}, we can internalize the regular languages by means of the \emph{rational coalgebra}.

\begin{definition}
The \emph{rational coalgebra} $\rho T$ for $T$ is the colimit (taken in the category of $T$-coalgebras and homomorphisms) of all $T$-coalgebras with finitely presentable carrier.
\end{definition}

\takeout{
\begin{remark}
\begin{enumerate}[label=(\alph*)]
\item Observe that the colimit defining $\rho T$ is filtered since (i) colimits of coalgebras are formed on the level of $\D$ and (ii) $\D_f$ is closed under finite colimits.

\item In the case where the input object $X$ is finitely presentable the
  endofunctor $T$ is finitary. This implies that $\rho T$ is a
  fixpoint of $T$, see \cite{amv_atwork}. Moreover, in this case $\rho T$ is the final locally finite
  coalgebra \cite{m_linexp}, that is, every fp-coalgebra has a unique coalgebra homomorphism into $\rho T$. However, we will not use
  these properties in the present setting, so we can work with
  arbitrary input objects $X$.
\end{enumerate}
\end{remark} 
}

\begin{example}
In $\D=\Set$, the rational coalgebra is the subcoalgebra of the final coalgebra $\Pow X_0^*$ given by the set of all regular languages. Analogously for $\D=\Pos$ where the order is given by inclusion of languages. In general, $\rho T$ always has the $\D$-regular languages as states:
\end{example}

\begin{proposition}\label{prop:ratfix}
There is a one-to-one correspondence between $\D$-regular languages and elements $I\ra \rho T$ of the rational coalgebra.
\end{proposition}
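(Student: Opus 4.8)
The plan is to realize the bijection through the unique coalgebra homomorphism $m\colon \rho T\to\fc$ into the final coalgebra, which exists by Proposition~\ref{prop:fincoalg}. Recall first that elements $I\to\fc$ correspond bijectively to languages $L\colon\xs\to Y$ by currying, and that by Proposition~\ref{prop:lang} the language accepted by a $\D$-automaton $(Q,\delta,i,f)$ is exactly the one whose curried form is $c_Q\o i\colon I\to\fc$, where $c_Q\colon Q\to\fc$ is the unique $T$-coalgebra homomorphism. For each $T$-coalgebra $Q$ with finitely presentable carrier write $\kappa_Q\colon Q\to\rho T$ for its colimit injection into $\rho T$; since $m\o\kappa_Q$ is a coalgebra homomorphism into $\fc$, finality forces $m\o\kappa_Q=c_Q$. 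I would then define the correspondence by sending $p\colon I\to\rho T$ to the language represented by $m\o p$.

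I would first check that this lands in the $\D$-regular languages and is surjective onto them. If $L$ is $\D$-regular it is accepted by some fp-automaton $(Q,\delta,i,f)$, and then $m\o(\kappa_Q\o i)=c_Q\o i$ represents $L$, so $L$ is the image of $p:=\kappa_Q\o i$. For the converse I would use that the defining diagram of $\rho T$ is filtered: the full subcategory of fp-coalgebras is finitely cocomplete, because colimits of coalgebras are created in $\D$ and $\D_f$ is closed under finite colimits (finitely presentable objects being closed under finite colimits in any category), and a finitely cocomplete category is filtered. Hence $\rho T$ is a \emph{filtered} colimit in $\D$, and since $I$ is finitely presentable the functor $\D(I,-)$ preserves it; thus every $p\colon I\to\rho T$ factors as $p=\kappa_Q\o p_0$ for some fp-coalgebra $Q$ and point $p_0\colon I\to Q$. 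Then $m\o p=c_Q\o p_0$ is the language of the fp-automaton $(Q,\delta,p_0,f)$, which is $\D$-regular.

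The main work is injectivity, and here I would lean on the universal property of the minimal automaton. Suppose $p,p'\colon I\to\rho T$ represent the same language $L$, and factor them as $p=\kappa_Q\o p_0$, $p'=\kappa_{Q'}\o p_0'$ through fp-automata accepting $L$. Replacing each by its reachable part (still finitely presentable by closure of $\D_f$ under subobjects, and still containing the point, which is the image of the unit) I may assume $Q,Q'$ reachable. By Theorem~\ref{thm:minaut} there are then unique surjective automata homomorphisms $g\colon Q\to\Min L$ and $g'\colon Q'\to\Min L$, and $\Min L$ has a finitely presentable carrier by Theorem~\ref{thm:reg}, so it is an object of the defining diagram of $\rho T$. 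As coalgebra homomorphisms in that diagram, $g$ and $g'$ satisfy $\kappa_Q=\kappa_{\Min L}\o g$ and $\kappa_{Q'}=\kappa_{\Min L}\o g'$; as automata homomorphisms they preserve the initial state, so $g\o p_0=i_{\Min L}=g'\o p_0'$. Therefore
\[
p=\kappa_Q\o p_0=\kappa_{\Min L}\o i_{\Min L}=\kappa_{Q'}\o p_0'=p',
\]
which gives injectivity and completes the bijection.

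The step I expect to be most delicate is keeping every carrier that arises finitely presentable — the reachable parts and, crucially, the minimal automaton — so that the whole argument takes place inside the filtered diagram defining $\rho T$; this is precisely where Assumptions~\ref{asm:finite} and Theorem~\ref{thm:reg} are used. A secondary point requiring care is the filteredness of that diagram, which is what lets every element $I\to\rho T$ factor through an fp-coalgebra and the colimit injections compose as above.
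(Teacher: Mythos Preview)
Your proof is correct and rests on the same three ingredients as the paper's: filteredness of the diagram of fp-coalgebras (so that every $I\to\rho T$ factors through some $\kappa_Q$), the cocone compatibilities $\kappa_Q=\kappa_{Q'}\o h$ for coalgebra morphisms $h$ in the diagram, and the passage through $\Min L$ (using Theorem~\ref{thm:reg}) to identify different factorizations. The organization differs slightly: the paper writes down two maps $x\mapsto L_x$ and $L\mapsto x_L$ and checks that each is well-defined and that they are mutually inverse, whereas you define a single map $p\mapsto m\o p$ via the final-coalgebra morphism $m\colon\rho T\to\fc$ and verify injectivity and surjectivity. Your route buys you that well-definedness of the forward map is automatic (it is just a composite), so you avoid the paper's separate argument that $L_x$ does not depend on the chosen factorization of $x$; in exchange you must observe that $m\o\kappa_Q=c_Q$, which is immediate from finality. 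Your injectivity argument is exactly the paper's well-definedness argument for $x_L$, reduced to $\Min L$ via reachable parts. In short: same proof, slightly cleaner packaging.
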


\begin{proof}
We describe mutually inverse maps 
\[
(I\xra{x}\rho T)\mapsto (\xs\xra{L_x} Y) 
\qquad \text{and}\qquad 
(\xs\xra{L} Y) \mapsto (I\xra{x_L} \rho T)
\] 
between the elements of $\rho T$ and the $\D$-regular languages.
Let $h_Q: Q \to \rho T$ be the injections of the colimit $\rho T$, where $Q=(Q,\delta_Q, f_Q)$ ranges over all fp-coalgebras. Note that this colimit is filtered because $\D_f$ is closed under finite colimits in $\D$. Moreover, since colimits of coalgebras are formed in the underlying category, the morphisms $h_Q$ also form a colimit cocone in $\D$. Given an element $I\xra{x} \rho T$ of the rational coalgebra we define a $\D$-regular language $L_x: \xs \ra Y$ as follows: since $I=\Psi 1$ is finitely presentable, there exists an fp-coalgebra $Q$ and a morphism $i_Q: I\ra Q$ such that $x=h_Q\o i_Q$. For the $F$-algebra $(Q,\delta_Q, i_Q)$ we have the unique $F$-algebra homomorphism $e_{Q}: \xs \ra Q$, and we put $L_x := f_Q\o e_{Q}$. 

We need to show that $L_x$ is well-defined, that is, for any other factorization $x = h_{Q'}\o i_{Q'}$  we have $f_Q\o e_{Q}=f_{Q'}\o e_{Q'}$. Given such a factorization, since the morphisms $h_Q$ form a filtered colimit cocone, there exists an fp-coalgebra $Q''=(Q'',\delta_{Q''},f_{Q''})$ and  coalgebra homomorphisms $h_{QQ''}: Q\ra Q''$ and $h_{Q'Q''}: Q'\ra Q''$ with $h_{QQ''}\o i_Q = h_{Q'Q''}\o i_{Q'} =: i_{Q''}$. Then for the $F$-algebra $(Q'', \delta_{Q''}, i_{Q''})$ we have the unique homomorphism $e_{Q''}: \xs \ra Q''$. Moreover, $h_{QQ''}$ and $h_{Q'Q''}$ are also homomorphisms of $F$-algebras. If follows that
\begin{align*}
f_Q \o e_{Q} &= f_{Q''} \o h_{QQ''} \o e_{Q} & \text{($h_{QQ''}$ coalgebra homomorphism)}\\
&= f_{Q''} \o e_{Q''} & \text{($h_{QQ''}$ $F$-algebra hom., $\xs$ initial)}
\end{align*} 
and analogously $f_{Q'} \o e_{Q'}=f_{Q''} \o e_{Q''}$. Hence $f_Q\o e_{Q}=f_{Q'}\o e_{Q'}$, as claimed.

Conversely, let $L:\xs \ra Y$ be a $\D$-regular language. Then there exists an fp-automaton $(Q,\delta_Q,i_Q,f_Q)$ with $L = f_Q\o e_Q$. Put $x_L := h_Q\o i_Q: I \ra \rho T$. To prove the well-definedness of $x_L$, consider the automata homomorphisms 
\[ \xymatrix{
Q & Q_r \ar@{>->}[l]_-m \ar@{-->>}[r]^-e & \Min{L}
}  \]
of Theorem \ref{thm:minaut}.  Then
\begin{align*}
h_Q\o i_Q &= h_Q \o m\o i_{Q_r} & \text{($m$ algebra hom.)}\\
&= h_{Q_r} \o i_{Q_r} & \text{($(h_Q)$ cocone and $m$ coalgebra hom.)}\\
&= h_{\Min{L}} \o e \o i_{Q_r} & \text{($(h_Q)$ cocone and $e$ coalgebra hom.)}\\
&= h_{\Min{L}} \o i_{\Min{L}} & \text{($e$ algebra hom.)}
\end{align*}
Hence $x_L = h_Q\o i_Q$ is independent of the choice of $Q$. It now follows immediately from the definitions that $x\mapsto L_x$ and $L\mapsto x_L$ are mutually inverse and hence define the desired bijective correspondence.
\end{proof}

\section{Dual Characterization of Syntactic $\D$-Monoids}

We conclude this paper with a dual approach to syntactic monoids. This section is largely based on results from our papers \cite{ammu14,ammu15} where a categorical generalization of Eilenberg's variety theorem was proved. We work with the following

\begin{assumptions}
From now on $\D$ is a locally finite commutative variety of algebras or 
ordered algebras (cf. Example \ref{ex:dregassumptions}) whose epimorphisms are surjective.
Moreover, we assume that there is another locally finite variety $\C$ of 
algebras such that the full subcategories $\C_{f}$ and $\D_f$ of finite 
algebras are dually equivalent. (Two such varieties $\Cat$ and $\D$ are called 
\emph{predual}.)
\end{assumptions}

The action of the equivalence functor $\C_{f}^{op} \xra{\simeq} \D_f$ on objects and morphisms is denoted by $Q\mapsto \widehat{Q}$ and $f\mapsto \widehat{f}$. Letting $\ol{I}\in \C_f$ denote the free one-generated object of $\C$, we choose the output object $Y\in\D_f$ to be the dual object of $\ol{I}$. Moreover, let $\ol Y\in \C_f$ be the dual object of $I=\Psi 1\in \D_f$, the free one-generated object of $\D$.  Finally, we put $X = \Psi X_0$ for a finite alphabet $X_0$. Note that the underlying sets of $\ol{Y}$ and $Y$ are canonically isomorphic:
\begin{equation}\label{eq:canoniso}
\under{\ol{Y}}\cong \C(\ol{I},\ol{Y}) \cong \D(I,Y) \cong \under{Y}.
\end{equation}
To simplify the presentation, we will assume in the following that $\ol{Y}$ and $Y$ have a two-element underlying set $\{0,1\}$. This is, however, inessential; see Remark \ref{rem:twoel} at the end of this section.

\begin{example}
The categories $\C$ and $\D$ in the table below satisfy our assumptions.
\begin{center}
\begin{tabular}{|ll|}
\hline\rule[11pt]{0pt}{0pt}
$\C$ & $\D $ \\
\hline
$\BA$ & $\Set$ \\
$\DL$ & $\Pos$ \\
$\BR$ & $\PSet$\\
$\JSL$ & $\JSL$ \\
$\Vect{\Int_2}$ & $\Vect{\Int_2}$ \\
\hline
\end{tabular}
\end{center}
Here $\BA$, $\DL$ and $\BR$ are the categories of boolean algebras, 
distributive lattices with $0$ and $1$, and non-unital boolean rings (i.e.  
rings without $1$ satisfying the 
equation $x\o x= x$). The preduality of $\BA$ and $\Set$ is a restriction of 
the well-known Stone duality: the dual equivalence functor 
$\BA_f^{op}\xra{\simeq} \Set_f$ assigns to a finite boolean algebra $B$ the 
set $\At(B)$ of its atoms, and to a homomorphism $h: A\ra B$  the map $\At(h): 
\At(B)\ra\At(A)$ sending $b\in \At(B)$ to the unique atom $a\in\At(A)$ with 
$ha\geq b$. Using a similar Stone-type duality, we proved in \cite{ammu15} 
that the the category
$\BR$ of non-unital boolean rings is predual to $\PSet$. The preduality 
between $\DL$ 
and $\Pos$ is due to Birkhoff. The 
last two examples above correspond to the well-known self-duality of finite 
semilattices and finite-dimensional vector spaces, respectively. We refer to 
\cite{ammu15} for details.
\end{example}

On $\C$ we consider the endofunctor $\ol{T}Q = \ol{Y}\times Q^{X_0}$, where $Q^{X_0}$ denotes the $|X_0|$-fold power of $Q$ in $\C$. Its coalgebras are precisely deterministic automata in $\C$ without an initial state, represented as triples $(Q,\gamma_a,f)$ with transition morphisms $\gamma_a: Q\ra Q$ ($a\in X_0$) and an output morphism $f: Q\ra \ol{Y}$.

In the following we use the case $\C = \BA$ and $\D=\Set$ as our only running example. For details on the other examples see~\cite{ammu14,ammu15}
\begin{example}
	 In $\Cat=\BA$ a $\ol{T}$-coalgebra is a deterministic automaton with a boolean algebra $Q$ of states, transition maps $\gamma_a$ which are boolean homomorphisms, and an output map $f: Q\ra \{0,1\}$ that specifies (via the preimage of $1$) an ultrafilter $F\seq Q$ of final states.
\end{example}

The rational coalgebra $\rho \ol{T}$ of $\ol{T}$ (i.e.  the colimit of all finite $\ol T$-coalgebras) has as states the regular languages over $X_0$. The final state predicate $f: \rho\ol T\ra \ol Y=\{0,1\}$ sends a language to $1$ iff it contains the empty word $\epsilon$, and the transitions $\gamma_a: \rho \ol T \ra \rho\ol T$ for $a\in X_0$ are given by $\gamma_a(L)=a^{-1} L$.
Here $a^{-1}L = \{w\in X_0^*: aw\in L\}$ denotes the \emph{left derivative} of $L$ w.r.t. the letter $a$. Similarly, the \emph{right derivatives} of $L$ are defined by $La^{-1} = \{w\in X_0^*: wa\in L\}$ for $a\in X_0$. 

The coalgebra $\rho\ol T$ is characterized by a universal property: every finite $\ol T$-coalgebra has a unique coalgebra homomorphism into it (which sends a state to the language it accepts in the classical sense of automata theory). A finite $\ol T$-coalgebra is called a \emph{subcoalgebra} of $\rho \ol T$ if this unique morphism is injective, i.e.  distinct states accept distinct languages. In \cite{ammu14} we related finite $\ol T$-coalgebras in $\C$ to finite $F$-algebras in the predual category $\D$. Recall that $X=\Psi X_0$ implies $FA=I+ X\t A\cong I+\coprod_{X_0} A$, so $F$-algebras $FA\xra{\delta} A$  can be represented as triples $(A,\delta_a,i)$ with $\delta_a: A\ra A$ ($a\in X_0$) and $i: I\ra A$. They correspond to automata in $\D$ with inputs from the alphabet $X_0$ and without final states.

\begin{proposition}[see \cite{ammu14}]
\begin{enumerate}[label=(\alph*)]\item  The categories of finite $\ol{T}$-coalgebras and finite $F$-algebras are dually equivalent. The equivalence maps a finite $\ol{T}$-coalgebra $Q=(Q,\gamma_a,f)$ to its \emph{dual $F$-algebra} $\widehat Q = (\widehat Q, \widehat{\gamma_a},\widehat f)$:
\[ (\ol{Y} \xleftarrow{f} Q\xra{\gamma_a} Q) \quad\mapsto\quad (I\xra{\widehat{f}}\widehat{Q}\xleftarrow{\widehat{\gamma_a}} \widehat{Q}). \]
\item A finite $\ol{T}$-coalgebra $Q$ is a subcoalgebra of $\rho \ol{T}$ iff its dual $F$-algebra $\widehat{Q}$ is a quotient of the initial $F$-algebra $\xs$.
\end{enumerate}
\end{proposition}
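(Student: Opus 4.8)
The plan is to prove the two parts separately. For (a) I would show that the dual-equivalence functor $\widehat{(\mathord{-})}\colon\C_f^{op}\to\D_f$ transforms the coalgebra functor $\ol T$ into the algebra functor $F$ up to a natural isomorphism, and then invoke the standard principle that a dual equivalence commuting with the two endofunctors lifts to a dual equivalence between the corresponding categories of coalgebras and algebras. For (b) I would translate each side into a statement about a jointly injective, resp.\ jointly epimorphic, family of morphisms indexed by all words, and then connect the two families through the finite duality.

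For (a), since $\widehat{(\mathord{-})}$ is a dual equivalence it sends finite products in $\C_f$ to finite coproducts in $\D_f$. Using $\widehat{\ol Y}=I$ together with $\widehat{Q^{X_0}}\cong\coprod_{X_0}\widehat Q\cong X\t\widehat Q$ (the last isomorphism because $X=\Psi X_0$ gives $X\t A\cong\coprod_{X_0}A$, as recalled just before the proposition), one obtains a natural isomorphism
\[
\phi_Q\colon\widehat{\ol T Q}=\widehat{\ol Y\times Q^{X_0}}\cong\widehat{\ol Y}+\widehat{Q^{X_0}}\cong I+X\t\widehat Q=F\widehat Q .
\]
A $\ol T$-coalgebra $\gamma\colon Q\to\ol T Q$ then reverses to $\widehat\gamma\colon\widehat{\ol T Q}\to\widehat Q$, and $\widehat\gamma\o\phi_Q^{-1}\colon F\widehat Q\to\widehat Q$ is the associated $F$-algebra. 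Writing $\gamma=\langle f,(\gamma_a)_{a\in X_0}\rangle$ and using that $\widehat{(\mathord{-})}$ turns a pairing into a copairing, this $F$-algebra is $[\widehat f,(\widehat{\gamma_a})_a]\colon I+X\t\widehat Q\to\widehat Q$, i.e.\ it has initial state $\widehat f\colon I\to\widehat Q$ and transitions $\widehat{\gamma_a}$, exactly as claimed. As $\widehat{(\mathord{-})}$ is an equivalence and $\phi$ is a natural isomorphism, the induced functor on (co)algebras is again a dual equivalence; this routine lifting is the argument carried out in \cite{ammu14}.

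For (b), recall that the unique coalgebra homomorphism $m_Q\colon Q\to\rho\ol T$ sends a state to the language it accepts, so $m_Q$ is injective iff the family $(f\o\gamma_w\colon Q\to\ol Y)_{w\in X_0^*}$ is jointly injective, where $\gamma_w$ is the evident composite of the $\gamma_a$. Dually, the unique $F$-algebra homomorphism $e_{\widehat Q}\colon\xs\to\widehat Q$ sends a word $u$ to $\delta_u\o i$ (Remark~\ref{rem:concept}(2)); since $\xs=\Psi X_0^*$ is generated by $X_0^*$, the morphism $e_{\widehat Q}$ is surjective iff the elements $(\delta_u\o i\colon I\to\widehat Q)_{u\in X_0^*}$ generate $\widehat Q$, i.e.\ form a jointly epimorphic family. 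The two families correspond under the duality: by contravariant functoriality $\widehat{f\o\gamma_w}=\widehat{\gamma_w}\o\widehat f=\delta_{w^R}\o i$, where $w^R$ is the reversed word, and the reversal only re-indexes the families (this is the ``reversed language'' phenomenon) and is therefore harmless. Because $Q$ and $\widehat Q$ are finite, both joint conditions are witnessed by a single finite set $W\seq X_0^*$ of words, which we may take closed under reversal; for that $W$ the induced morphism $\langle f\o\gamma_w\rangle_{w\in W}\colon Q\to\ol Y^{W}$ is injective, hence a monomorphism in $\C_f$, iff its dual $[\,\delta_w\o i\,]_{w\in W}\colon\coprod_{W}I\to\widehat Q$ is an epimorphism in $\D_f$, hence surjective. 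Chaining these equivalences gives
\[
m_Q\text{ injective}\iff e_{\widehat Q}\text{ surjective}\iff\widehat Q\text{ is a quotient of }\xs,
\]
which is precisely (b).

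The main obstacle is the passage to the infinite index set $X_0^*$: the finite duality $\widehat{(\mathord{-})}$ applies only to finite objects, whereas the families above run over all words and the ambient objects $\rho\ol T$ and $\xs$ are infinite. The observation that unlocks the argument is that, since $Q$ and $\widehat Q$ are finite, the joint kernel of $(f\o\gamma_w)_w$ and the subalgebra generated by $(\delta_u\o i)_u$ both stabilise after finitely many words, so a common finite $W$ suffices and $\widehat{(\mathord{-})}$ can be applied verbatim. The rest is bookkeeping: checking that pairings/copairings and monos/epis translate correctly under the dual equivalence, and that closing $W$ under reversal makes the two families match.
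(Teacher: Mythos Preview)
The paper does not actually prove this proposition; it is imported from \cite{ammu14} and stated without proof. So there is no ``paper's own proof'' to compare against, and the question is whether your argument stands on its own.

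Part (a) is correct and is the expected argument: the dual equivalence turns finite products into finite coproducts, $\widehat{\ol Y}=I$, and $X\t\widehat Q\cong\coprod_{X_0}\widehat Q$, giving a natural isomorphism $\widehat{\ol T Q}\cong F\widehat Q$; the lifting of a dual equivalence intertwining two endofunctors to their (co)algebras is routine.

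Part (b) is also correct, including the key reduction to a finite index set $W$. Two small points deserve to be made explicit. First, when you write ``epimorphism in $\D_f$, hence surjective'', this needs one line of justification: $\D_f$ is only a full subcategory of $\D$, so an epimorphism in $\D_f$ is not \emph{a priori} an epimorphism in $\D$. It is, however, because the cokernel pair in $\D$ of a morphism between finite algebras is a quotient of a finite coproduct, hence finite in a locally finite variety; thus the witnessing pair already lives in $\D_f$, and the assumption that epimorphisms in $\D$ are surjective applies. Second, the step ``$e_{\widehat Q}$ surjective iff $(\delta_u\o i)_u$ generate $\widehat Q$'' uses that $X_0^*$ generates $\xs=\Psi X_0^*$ and that surjectivity of a $\D$-morphism is detected on generators; you clearly have this in mind, but it is worth stating.

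A more conceptual variant of (b), likely closer to what \cite{ammu14} does, avoids the word-indexed families altogether: a finite $\ol T$-coalgebra is a subcoalgebra of $\rho\ol T$ iff it has no proper quotient coalgebra (factor $m_Q$ through its image, which is again a finite coalgebra since $\ol T$ preserves injections), and dually a finite $F$-algebra is a quotient of $\xs$ iff it has no proper $F$-subalgebra. Part (a) then interchanges quotient coalgebras with sub-$F$-algebras. This packaging is slicker but runs into the same subtlety about matching categorical epimorphisms/monomorphisms with surjections/injections on one side of the duality, so your explicit approach is no less rigorous.
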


\begin{example}
For a finite $\ol T$-coalgebra $(Q,\gamma_a,f)$ in $\BA$ the dual $F$-algebra $\widehat Q$ has as states the atoms of $Q$, and the initial state is the unique atomic final state of $Q$. Moreover, there is a transition $z\xra{a} z'$ for $a\in X_0$ in $\widehat Q$ iff $z'$ is the unique atom with $\gamma_a(z')\geq z$ in $Q$.	
\end{example}

 By a \emph{local variety of languages over $X_0$ in $\Cat$} we mean a subcoalgebra $V$ of $\rho \ol T$ closed under right derivatives (i.e. $L\in\under{V}$ implies $La^{-1}\in\under{V}$ for all $a\in X_0$). Note that a local variety is also closed under the $\Cat$-algebraic operations of $\rho \ol T$, being a sub\emph{algebra} of $\rho\ol T$ in $\Cat$, and under left derivatives, being a sub\emph{coalgebra} of $\rho \ol T$. 
 
 \begin{example}
 	A local variety of languages in $\BA$ is a set of regular languages over $X_0$ closed under the boolean operations (union, intersection and complement) as well as left and right derivatives. This concept was introduced by Gehrke, Grigorieff and Pin \cite{ggp08}.
\end{example}
For the following proposition recall that from every $X_0$-generated $\D$-monoid one can derive an $F$-algebra, see Definition \ref{def:asso}.

\begin{proposition}[see \cite{ammu14}]\label{prop:locvarmon}
  A finite subcoalgebra $V$ of $\rho \ol{T}$ is a local variety iff its dual $F$-algebra $\widehat{V}$ is derived from an $X_0$-generated $\D$-monoid.
\end{proposition}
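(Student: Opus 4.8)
The plan is to transport everything across the dual equivalence to the $F$-algebra side and reduce the two properties in question to a single congruence condition. By the preceding proposition, $V$ being a finite subcoalgebra of $\rho\ol T$ means precisely that its dual $F$-algebra $\widehat V$ is a quotient of the initial $F$-algebra $\xs$; let $e := e_{\widehat V}\colon \xs \epito \widehat V$ denote the unique (hence surjective) $F$-algebra homomorphism (Proposition \ref{prop:initalg}). I first claim that $\widehat V$ is derived from an $X_0$-generated $\D$-monoid if and only if $\widehat V$ carries a $\D$-monoid structure making $e$ a $\D$-monoid morphism. Indeed, if $\widehat V$ is the derived $F$-algebra of a monoid quotient $e'\colon\xs\epito M$, then by Lemma \ref{lem:recog} $e'$ is the unique $F$-algebra homomorphism $\xs\to\widehat V$, i.e.\ $e'=e$, so $e$ is a monoid morphism; conversely, if $e$ is a $\D$-monoid morphism for some structure on $\widehat V$, then $e(\eta_X a\bullet w)=e(\eta_X a)\bullet e(w)$ combined with $\delta^{\widehat V}_a\o e = e\o(\eta_X a\bullet(-))$ and surjectivity of $e$ shows that $\widehat V$ is exactly the derived $F$-algebra (Definition \ref{def:asso}) of this monoid with generators $e\o\eta_X$. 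Thus the whole statement reduces to: \emph{$\ker e$ is a congruence of the monoid $\xs$ iff $V$ is closed under right derivatives.}

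Next I observe that $\ker e$ is automatically a \emph{left} congruence. Since $e$ is an $F$-algebra homomorphism, Remark \ref{rem:aux}(2),(3) gives $e\o m_X = \delta^\cst_{\widehat V}\o(\xs\t e)$, so $e(u\bullet w)$ depends only on $u$ and $e(w)$; hence $\ker e$ is compatible with every left translation. Consequently $\ker e$ is a monoid congruence iff it is in addition compatible with the right translations $\bar\rho_a := (-\bullet\eta_X a)\colon\xs\to\xs$ for $a\in X_0$ — and since $\bullet$ is a bimorphism and $\ker e$ is a $\D$-subalgebra of $\xs\times\xs$, compatibility for the generators extends to all elements. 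By the homomorphism theorem (Remark \ref{rem:congalg}(3)), compatibility of $\ker e$ with $\bar\rho_a$ amounts to the existence of a $\D$-morphism $\rho_a\colon\widehat V\to\widehat V$ with $\rho_a\o e = e\o\bar\rho_a$. So the goal becomes: such $\rho_a$ exist for all $a\in X_0$ iff $V$ is closed under right derivatives.

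This is where the duality theory of \cite{ammu14} enters. The key facts I would invoke are that the underlying object of $V$ consists exactly of the languages recognized by the $F$-algebra $\widehat V$, that is $\under{V} = \{\,g\o e : g\in\D(\widehat V, Y)\,\}$, and that taking a right derivative is precomposition with a right translation, $(g\o e)a^{-1} = g\o e\o\bar\rho_a$. Granting these, the direction ``monoid $\Rightarrow$ local variety'' is immediate: if every $\rho_a$ exists then for $L = g\o e\in\under{V}$ we get $La^{-1} = (g\o\rho_a)\o e\in\under{V}$, so $V$ is closed under right derivatives. For the converse, assume $V$ is a local variety; then for each $g\in\D(\widehat V, Y)$ the language $g\o e\o\bar\rho_a$ again lies in $\under{V}$, i.e.\ equals $g_a\o e$ for some $g_a\colon\widehat V\to Y$.

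To recover $\rho_a$ from this I use that the family $\D(\widehat V, Y)$ is jointly monic on $\widehat V$ (order-reflecting in the ordered case): dually this says that the elements of $V$ generate $V$ as a $\C$-algebra, which holds for any algebra. Then, whenever $e(w)=e(w')$, we have $g(e(w\bullet\eta_X a)) = g_a(e(w)) = g_a(e(w')) = g(e(w'\bullet\eta_X a))$ for every $g$, so $e(w\bullet\eta_X a)=e(w'\bullet\eta_X a)$; hence $\ker e$ is compatible with $\bar\rho_a$ and the desired $\rho_a$ exists. This closes both directions. I expect the main obstacle to be precisely the two duality inputs of the third paragraph: proving that the abstract equivalence $\C_f^{op}\simeq\D_f$ is compatible with language acceptance (so that $\under{V}$ is the set of $\widehat V$-recognizable languages) and with the passage between derivatives and translations. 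Everything else is an application of the homomorphism theorem together with routine bookkeeping about bimorphisms.
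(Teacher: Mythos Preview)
The paper does not prove this proposition: it is stated with the citation ``see \cite{ammu14}'' and followed only by the brief restatement that $\widehat V$ carries a unique $\D$-monoid structure making $e_{\widehat V}$ a $\D$-monoid morphism. So there is no in-text argument to compare against.

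Your reconstruction is sound and is essentially the argument one finds in the cited work. The reduction to ``$\ker e$ is a two-sided congruence of $\xs$'' is exactly right, as is the observation that the left-congruence property is automatic from $e$ being an $F$-algebra homomorphism, so that only stability under the right translations $\bar\rho_a$ has to be checked. The two duality ingredients you flag---that $\under V$ consists of the languages $g\o e$ with $g\in\D(\widehat V,Y)$, and that right derivation corresponds to precomposition with $\bar\rho_a$---are indeed the substantive inputs from \cite{ammu14}; once granted, both implications go through via the homomorphism theorem and the joint monicity of $\D(\widehat V,Y)$, just as you outline. One small point worth making explicit in your write-up: the passage ``compatibility for the generators extends to all elements'' is most cleanly justified by noting that $\xs=\Psi X_0^*$ is free on $X_0^*$, so the $\D$-morphism $w\mapsto (e(u\bullet w),e(v\bullet w))$ lands in the diagonal of $\widehat V\times\widehat V$ as soon as it does so on the generating set $X_0^*$, which follows from the generator case by composition.
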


In other words, given a finite local variety $V\monoto \rho \ol T$ in $\Cat$,  there exists a unique $\D$-monoid structure on $\widehat V$ making the unique (surjective) $F$-algebra homomorphism $e_{\widehat{V}}: \xs\epito \widehat{V}$ a $\D$-monoid morphism. Hence the monoid multiplication on $\widehat V$ is (well-)defined by $e_{\widehat{V}}(x)\bullet e_{\widehat{V}}(y) := e_{\widehat{V}}(x\bullet y)$ for all $x,y\in\xs$, and the unit is the initial state of the $F$-algebra $\widehat V$.

\begin{remark}
A \emph{pointed} $\ol T$-coalgebra is a $\ol T$-coalgebra $(Q,\gamma_a,f)$ equipped with an initial state $i: \ol I\ra Q$. Observe that every finite pointed $\ol T$-coalgebra $(Q,\gamma_a,f,i)$ dualizes to a finite $\D$-automaton $(\widehat Q,\widehat \gamma_a, \widehat f, \widehat i)$. The \emph{language} of $(Q,\delta_a, f,i)$ is the function
\[L_Q: X_0^*\ra \under{\ol{Y}},\quad a_1\cdots a_n \mapsto f\o\delta_{a_n}\o\dots\o \delta_{a_1}\o i,\]
where $\under{\ol Y}$ is identified with $\under{Y}$ as in \eqref{eq:canoniso}.
Letting $m_Q: Q\ra \rho \ol T$ denote the unique coalgebra homomorphism, $L_Q$ is precisely the element of $\rho \ol T$ determined by $I\xra{i} Q \xra{m_Q} \rho \ol T$.
Since $\under{\ol{Y}}=\under{Y}$ and $\xs=\Psi X_0^*$,  the function $L_Q: X_0^*\ra \under{\ol Y}$ can be identified with its adjoint transpose $L_Q^@: \xs \ra Y$, i.e.  with a language in $\D$. The \emph{reversal} of a language $L: \xs\ra Y$ in $\D$ is $L^\rev = L\o \rev: \xs\ra Y$, where $\rev: \xs \ra \xs$ denotes the unique morphism of $\D$ extending the function $X_0^*\ra X_0^*$ that reverses words.
\end{remark}

\begin{proposition}[see \cite{ammu15}]
The language accepted by a finite pointed $\ol{T}$-coalgebra is the reversal of the language accepted by its dual $\D$-automaton.
\end{proposition}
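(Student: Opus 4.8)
The plan is to write both accepted languages as explicit composites of morphisms and then to exploit that the equivalence $\C_f^{op}\xra{\simeq}\D_f$ is \emph{contravariant}, so that dualizing a run turns it into the reversed run on the other side. First I would fix a finite pointed $\ol T$-coalgebra $(Q,\gamma_a,f,i)$ and a word $w=a_1\cdots a_n\in X_0^*$, and unravel its accepted language to the element
\[ L_Q(w) ~=~ (\,\ol I\xra{i} Q\xra{\gamma_{a_1}}\cdots\xra{\gamma_{a_n}} Q\xra{f}\ol Y\,)~\in~\under{\ol Y}. \]
On the dual side the $\D$-automaton is $(\widehat Q,\widehat{\gamma_a},\widehat f,\widehat i)$, with initial state $\widehat f:I\ra\widehat Q$, transitions $\widehat{\gamma_a}$ and output $\widehat i:\widehat Q\ra Y$. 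By Remark \ref{rem:concept}(2) its $F$-algebra homomorphism $e_{\widehat Q}$ sends $x\in\xs$ to $\delta_x\o\widehat f$, and since $\delta_{x\bullet y}=\delta_y\o\delta_x$ one has $\delta_w=\widehat{\gamma_{a_n}}\o\cdots\o\widehat{\gamma_{a_1}}$; so the accepted language $L_{\widehat Q}=\widehat i\o e_{\widehat Q}:\xs\ra Y$ will satisfy $L_{\widehat Q}(w)=\widehat i\o\widehat{\gamma_{a_n}}\o\cdots\o\widehat{\gamma_{a_1}}\o\widehat f$.

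The crucial step is to apply $\widehat{(\mathord{-})}$ to the composite defining $L_Q(w)$: since the equivalence reverses the order of composition, this will give
\[ \widehat{L_Q(w)}~=~\widehat i\o\widehat{\gamma_{a_1}}\o\cdots\o\widehat{\gamma_{a_n}}\o\widehat f~=~L_{\widehat Q}(a_n\cdots a_1)~=~L_{\widehat Q}(\rev(w)), \]
so that dualizing the coalgebra's length-$n$ run on $w$ yields exactly the dual automaton's run on the reversed word $\rev(w)$.

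It then remains to match up the two output objects. The canonical isomorphism $\under{\ol Y}\cong\under Y$ of \eqref{eq:canoniso} is, by its very construction, the action of $\widehat{(\mathord{-})}$ on the hom-sets $\C(\ol I,\ol Y)\cong\D(I,Y)$; under it $L_Q(w)\in\under{\ol Y}$ is identified with $\widehat{L_Q(w)}\in\under Y$. Combining this with the previous display would give $L_Q(w)=L_{\widehat Q}(\rev(w))$ for every $w\in X_0^*$, whence $L_Q^@$ and $(L_{\widehat Q})^\rev=L_{\widehat Q}\o\rev$ agree on all words and therefore coincide as morphisms $\xs\ra Y$ --- which is precisely the assertion $L_Q^@=(L_{\widehat Q})^\rev$. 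I expect the only real obstacle to be the bookkeeping: one must check that $i$, $f$ and the $\gamma_a$ dualize into the output, initial state and transitions of $\widehat Q$ in exactly the right roles, and that \eqref{eq:canoniso} really is the dualization map and not some incidental identification. Once these are pinned down, the reversal is forced by the contravariance.
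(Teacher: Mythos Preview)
The paper does not give its own proof of this proposition; it is stated with a citation to \cite{ammu15} and left unproved here. So there is nothing in the present paper to compare your argument against.

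That said, your proposal is correct and is precisely the standard argument. The two ingredients you isolate are exactly the right ones: (i) the contravariant equivalence $\widehat{(\mathord{-})}$ reverses composition, so the composite $f\o\gamma_{a_n}\o\cdots\o\gamma_{a_1}\o i$ dualizes to $\widehat i\o\widehat{\gamma_{a_1}}\o\cdots\o\widehat{\gamma_{a_n}}\o\widehat f$, which is the run of the dual automaton on $a_n\cdots a_1=\rev(w)$; and (ii) the identification $\under{\ol Y}\cong\under Y$ in \eqref{eq:canoniso} is by construction the action of the equivalence on hom-sets $\C(\ol I,\ol Y)\cong\D(I,Y)$, so it sends $L_Q(w)$ to $\widehat{L_Q(w)}$. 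Your use of $\delta_{x\bullet y}=\delta_y\o\delta_x$ (stated in the proof of Theorem~\ref{thm:tran}) to unwind $\delta_w$ as $\widehat{\gamma_{a_n}}\o\cdots\o\widehat{\gamma_{a_1}}$ is also correct. The bookkeeping concern you raise --- that $f$ dualizes to the initial state and $i$ to the output --- is settled by the paper's own description of the dual automaton in the remark preceding the proposition.
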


If a finite $X_0$-generated $\D$-monoid $e: \xs \ra M$ recognizes a language $L: \xs\ra Y$ via $f: M\ra Y$, i.e.  $L=(\xymatrix{\xs \ar@{->>}[r]^e& M \ar[r]^f & Y})$, we dually get the morphism $\xymatrix{\ol I \ar[r]^i & V \ar@{>->}[r]^m & \rho \ol T}$ (where $V$ is the local variety dual to $M$, $i$ is the dual morphism of $f$ and $m$ is the unique coalgebra homomorphism) choosing the element $L^\rev$ of $\rho \ol T$. Now suppose that $L$ is a regular language, and let $V_L$ be the finite local variety of languages dual to the syntactic $\D$-monoid $\Syn{L}$, see Proposition \ref{prop:locvarmon}. The universal property of $\Syn{L}$ in Definition \ref{def:syn} then dualizes as follows: (i) $V_L$ is a local variety containing $L^\rev$, and (ii) for every local variety  $V\monoto\rho \ol{T}$ containing $L^\rev$, the local variety $V_L$ is contained in $V$. In other words, $V_L$ is the \emph{smallest} local variety containing $L^\rev$.
\[
\xymatrix{
  \rho \ol{T}
  &
 V \ar@{>->}[l]
  &
  \ol{I} \ar[l] \ar[dl]
  \\
  &
  V_L \ar@{>->}[ul] \ar@{>-->}[u]
}
\]
In summary, we have proved the following dual characterization of syntactic $\D$-monoids:
\begin{theorem}\label{thm:syndual}
For every regular language $L$ the syntactic $\D$-monoid $\Syn L$ is dual to the smallest local variety of languages over $X_0$ in $\C$ containing $L^\rev$.
\end{theorem}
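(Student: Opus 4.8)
The plan is to obtain the theorem by dualizing the universal property of $\Syn L$ (Definition~\ref{def:syn}) through the preduality $\C_f^{op}\xra{\simeq}\D_f$, transporting it from finite $X_0$-generated $\D$-monoids to finite local varieties. First I would note that, since $L$ is regular and $\D$ is locally finite, Theorem~\ref{thm:reg} makes $\Syn L$ finite; as $\Syn L$ lies below every recognizing $\D$-monoid and is itself a finite recognizing monoid, it is in particular the smallest among the \emph{finite} recognizing monoids. This is the property I will dualize, and it keeps the whole argument inside $\D_f$ and $\C_f$, where the duality functor $\widehat{(\mathord{-})}$ is available.

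The central step is to match the two classes of objects together with their distinguished data. By Proposition~\ref{prop:locvarmon}, $V\mapsto\widehat V$ is a bijection between finite local varieties $V\monoto\rho\ol T$ and finite $X_0$-generated $\D$-monoids $e_{\widehat V}\colon\xs\epito\widehat V$. I would show that under this bijection $V$ contains $L^\rev$ iff $\widehat V$ recognizes $L$. Given $L^\rev\in\under V$, freeness of $\ol I$ yields a point $i\colon\ol I\to V$ with $m\o i=L^\rev$, making $V$ a finite pointed $\ol T$-coalgebra whose dual $\D$-automaton has the $F$-algebra derived from the $\D$-monoid $\widehat V$ and output $f:=\widehat i$. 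By Lemma~\ref{lem:recog} this derived automaton accepts exactly the language recognized by $e_{\widehat V}$ via $f$; by the proposition relating the language of a pointed $\ol T$-coalgebra to that of its dual automaton, the pointed coalgebra $(V,i)$ accepts the reversal of that language. Since $(V,i)$ accepts the element $m\o i=L^\rev$, taking reversals shows that $e_{\widehat V}$ recognizes $(L^\rev)^\rev=L$ via $f$. Running the same equalities backwards gives the converse: a witness $f\colon\widehat V\to Y$ of recognition dualizes to $i=\widehat f\colon\ol I\to V$ with $m\o i=L^\rev$, so $L^\rev\in\under V$.

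Next I would verify that this bijection is monotone in the correct sense. If $e_1\le e_2$ for recognizing monoids $M_1,M_2$, then $e_1$ factors through $e_2$ by a surjective $\D$-monoid morphism $h\colon M_2\epito M_1$, which (commuting with the $e_i$ and the units) is an $F$-algebra homomorphism; applying $\widehat{(\mathord{-})}$ turns $h$ into an injective $\ol T$-coalgebra homomorphism $\widehat h\colon V_1\to V_2$, and uniqueness of coalgebra homomorphisms into $\rho\ol T$ forces $m_1=m_2\o\widehat h$, i.e. $V_1\subseteq V_2$ as subcoalgebras of $\rho\ol T$; the equivalence gives the converse. Combining this with the previous paragraph, the smallest finite recognizing monoid $\Syn L$ corresponds to the smallest local variety containing $L^\rev$, namely $V_L=\widehat{\Syn L}$, which is precisely the diagram dual to Definition~\ref{def:syn} and hence the assertion of the theorem.

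I expect the main obstacle to be the bookkeeping of the two contravariant passages in the central step: one must keep the monoid/coalgebra duality and the output/point duality straight, and apply the reversal proposition in exactly the right place so that the element singled out in $\rho\ol T$ comes out as $L^\rev$ rather than $L$, with the initial-state and output data correctly swapped on the two sides. The reduction to finite monoids and the monotonicity of the correspondence are comparatively routine.
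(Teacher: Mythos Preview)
Your proposal is correct and follows essentially the same approach as the paper: dualize the universal property of $\Syn L$ through Proposition~\ref{prop:locvarmon} and the reversal proposition, so that ``$M$ recognizes $L$ via $f$'' becomes ``the dual local variety $V$ contains $L^{\rev}$ via the point $i=\widehat f$,'' and the factorization order on recognizing monoids becomes inclusion of local varieties. The paper compresses this into a single paragraph (before the theorem statement) and leaves the monotone-bijection bookkeeping implicit, whereas you spell it out explicitly; the content is the same.
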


\begin{example} For $\Cat=\BA$ and $\D=\Set$ the previous theorem gives the following construction of the syntactic monoid of a regular language $L\seq X^*$:
\begin{enumerate}[label=(\arabic*)]
	\item Form the smallest local variety of languages $V_L\seq \rho\ol T$ containing $L^\rev$. Hence $V_L$ is the closure of the (finite) set of all both-sided derivatives $u^{-1}L^\rev v^{-1} = \{w\in X^*: uwv\in L^\rev\}$ ($u,v\in X^*$)  under union, intersection and complement.
	\item Compute the $F$-algebra $\widehat{V_L}$ dual to the coalgebra $V_L$, which is a quotient $e_{\widehat{V_L}}: \xs \to \widehat{V_L}$. The states of $\widehat{V_L}$ are the atoms of $V_L$, and the initial state is the unique atom $i\in V_L$ containing the empty word. Given atoms $z,z'\in V_L$ and $a\in X$, there is a transition $z\xra{a}z'$ in $\widehat{V_L}$ iff $z'$ is the (unique) atom with $z\subseteq a^{-1}z'$.
	\item Define a monoid multiplication on $\widehat{V_L}$ as follows: given states $z,z'\in\widehat{V_L}$, choose words $w,w'\in X^*$ with $e_{\widehat{V_L}}(w) = z$ and $e_{\widehat{V_L}}(w') = z'$ (i.e.~$i\xra{w} z$ and $i\xra{w'} z'$ in $\widehat{V_L}$). Then put $z\bullet z' = e_{\widehat{V_L}}(ww')$; this is the state reached on input $ww'$, i.e.~$i\xra{ww'} z\bullet z'$. The resulting monoid (with multiplication $\bullet$ and unit $i$) is $\Syn{L}$.
\end{enumerate}
\end{example}

By dropping right derivatives and using the correspondence between finite subcoalgebras of $\rho \ol T$ and finite quotient algebras of $\xs$, one also gets the following dual characterization of minimal $\D$-automata; cf. also  \cite{bkp13} for a closely related dual prespective on minimal automata.

\begin{theorem}\label{thm:mindual}
For every regular language $L$ the minimal $\D$-automaton for $L$ is dual to the smallest subcoalgebra of $\rho \ol T$ containing $L^\rev$
\end{theorem}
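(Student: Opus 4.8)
The plan is to mirror the proof of Theorem~\ref{thm:syndual}, but working with reachable $\D$-automata in place of $X_0$-generated $\D$-monoids and dropping the closure under right derivatives. Concretely, I would characterize $\Min L$ by the universal property of Theorem~\ref{thm:minaut}: among all reachable $\D$-automata accepting $L$, ordered by the quotient order ($A_1 \geq A_2$ iff there is a surjective automata homomorphism $A_1 \epito A_2$), the minimal automaton $\Min L$ is the least element. Since $L$ is regular, Theorem~\ref{thm:reg} together with local finiteness of $\D$ guarantees that $\Min L$ has a finite carrier, so it lies in the finite part of this poset, and it suffices to compare it against \emph{finite} reachable automata accepting $L$, of which $\Min L$ is then the least element.

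Next I would set up the dual picture. A finite reachable $\D$-automaton accepting $L$ is the same datum as a finite quotient $F$-algebra $e\colon \xs \epito A$ equipped with an output $f\colon A \to Y$ satisfying $f\o e = L$ (the initial state being $e\o i_X$). Applying the finite duality between $\ol T$-coalgebras and $F$-algebras, the quotient $F$-algebra $A$ dualizes to a finite subcoalgebra $\widehat A \monoto \rho\ol T$ (this is precisely the characterization that quotients of $\xs$ correspond to subcoalgebras of $\rho\ol T$), the output $f\colon A\to Y$ dualizes to a point $\ol I \to \widehat A$, and the acceptance condition $f\o e = L$ dualizes---via the reversal proposition relating the language of a pointed $\ol T$-coalgebra to that of its dual $\D$-automaton---to the statement that this point, viewed in $\rho\ol T$, equals $L^\rev$. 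Thus finite reachable $\D$-automata accepting $L$ correspond bijectively to finite subcoalgebras of $\rho\ol T$ containing $L^\rev$, and since the duality is a contravariant equivalence it sends the quotient order to the subobject order.

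It then remains to track where the least element goes. Because the equivalence is contravariant, a surjection $A \epito \Min L$ dualizes to an inclusion $\widehat{\Min L} \monoto \widehat A$; hence the fact that every reachable automaton accepting $L$ surjects onto $\Min L$ dualizes to the fact that $\widehat{\Min L}$ is contained in every finite subcoalgebra of $\rho\ol T$ containing $L^\rev$. Since $\widehat{\Min L}$ is itself such a subcoalgebra, it is the smallest one, which is exactly the assertion that $\Min L$ is dual to the smallest subcoalgebra $W$ of $\rho\ol T$ containing $L^\rev$.

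I expect the main obstacle to be the bookkeeping of this correspondence rather than any deep new idea: one must verify carefully that the duality interchanges outputs with initial points and reachability with the subcoalgebra property, and---most delicately---that the acceptance condition $f\o e = L$ translates exactly into ``the selected point is $L^\rev$'' (this is where the reversal of the language enters, and where one invokes the description of $\rho\ol T$ as carrying the regular languages with left-derivative transitions). Once this dictionary is in place, the identification of the least quotient with the smallest subobject is formal. One should also record at the outset that $W$ is finite: $L^\rev$ is again regular, hence has finitely many left derivatives, and the $\C$-subalgebra of $\rho\ol T$ they generate is finite by local finiteness of $\C$, so that the finite duality applies throughout.
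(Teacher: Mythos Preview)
Your proposal is correct and follows essentially the same route as the paper: the paper does not give a detailed proof but simply remarks that the argument for Theorem~\ref{thm:syndual} goes through once one drops closure under right derivatives and uses the correspondence between finite subcoalgebras of $\rho\ol T$ and finite quotient $F$-algebras of $\xs$, which is precisely the dictionary you spell out. Your additional care in noting why the relevant subcoalgebra is finite (so that the finite duality applies) is a detail the paper leaves implicit.
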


\begin{remark}\label{rem:twoel}
Our assumption that $Y$ and $\ol Y$ have two elements is inessential. Without this assumption, the rational coalgebra $\rho\ol T$ is not carried by the set of regular languages, but more generally by the set of \emph{regular behaviors}, i.e, functions $b:X_0^*\ra \under{Y}$ realized by finite Moore automata with output set $\under{Y}$. The coalgebra structure is given by the output map $b\mapsto b(\epsilon)$, and transitions $b\xra{a} a^{-1}b$ for $a\in X_0$, where $a^{-1}b$ is the (generalized) left derivative of $b$ defined by $a^{-1}b(w) = b(aw)$. (Right derivatives are defined analogously.) A \emph{local variety of behaviors over $X_0$ in $\Cat$} is again a subcoalgebra of $\rho \ol T$ closed under right derivatives. All results of this section hold for this more general setting, see Section 5 of \cite{ammu15} for details. In particular, this allows us to cover the case $\C=\D=\Vect{\K}$ for an arbitrary finite field $\K$. In this case Theorem \ref{thm:syndual} states that the syntactic associative algebra of a rational weighted language $L: X_0^*\ra \K$ dualizes to the smallest set of rational weighted languages that contains $L^\rev$ and is closed under scalar multiplication, addition and left and right derivatives.
\end{remark}

\section{Conclusions and Future Work}
\label{sec:con}

We proposed the first steps of a categorical theory of algebraic language 
recognition. Despite our assumption that $\D$ be a commutative variety, we have seen that a number of our definitions, constructions and proofs work in any symmetric 
monoidal closed category with enough structure. The next step should be the treatment
of algebraic recognition beyond languages of finite words, including such 
examples as Wilke 
algebras \cite{wilke91} (representing $\omega$-languages) or forest algebras 
\cite{bw08} (representing tree languages). A first categorical approach to such structures appears in the 
work of Boj\'anczyk \cite{boj15} who works with monads on sorted sets rather than 
monoids in a variety $\D$. We expect that a unification of these two 
directions is possible. 

One of the leading themes of algebraic automata theory is the classification of languages in terms of their syntactic algebras. For instance, by Sch\"utzenberger's theorem a language is star-free iff its syntactic monoid is aperiodic. We hope that our conceptual view of syntactic monoids (notably their dual characterization in Theorem \ref{thm:syndual}) can contribute to a duality-based approach to such results.

\bibliographystyle{plain}% the recommended bibstyle
\bibliography{coalgebra,ourpapers}

\begin{thebibliography}{10}

\bibitem{ammu14}
Ji\v{r}\'\i\ Ad\'amek, Stefan Milius, Robert S.~R. Myers, and Henning Urbat.
\newblock Generalized {E}ilenberg {T}heorem {I}: {L}ocal {V}arieties of
  {L}anguages.
\newblock In Anca Muscholl, editor, {\em Proc.~Foundations of Software Science
  and Computation Structures (FoSSaCS)}, volume 8412 of {\em Lecture Notes
  Comput.~Sci.}, pages 366--380. Springer, 2014.

\bibitem{amu15}
Ji\v{r}\'i Ad\'amek, Stefan Milius, and Henning Urbat.
\newblock Syntactic monoids in a category.
\newblock In {\em Proc. 6th Conference on Algebra and Coalgebra in Computer
  Science (CALCO'15)}, Leibniz International Proceedings in Informatics
  (LIPIcs), 2015.

\bibitem{ammu15}
Ji\v{r}\'{\i} Ad\'{a}mek, Robert S.~R. Myers, Henning Urbat, and Stefan Milius.
\newblock {Varieties of languages in a category}.
\newblock In {\em 30th Annual ACM/IEEE Symposium on Logic in Computer Science}.
  IEEE, 2015.

\bibitem{bcr14}
Adolfo {Ballester-Bolinches}, Enric {Cosme-Llopez}, and Jan J.~M.~M. Rutten.
\newblock The dual equivalence of equations and coequations for automata.
\newblock {\em Inform.~and~Comput.}, 244:49--75, 2014.

\bibitem{bn76}
Bernhard Banaschewski and Evelyn Nelson.
\newblock Tensor products and bimorphisms.
\newblock {\em Canad.~Math.~Bull.}, 19:385--402, 1976.

\bibitem{bkp13}
Nick Bezhanishvili, Clemens Kupke, and Prakash Panangaden.
\newblock Minimization via duality.
\newblock In Luke Ong and Ruy de~Queiroz, editors, {\em Logic, Language,
  Information and Computation}, volume 7456 of {\em Lecture Notes in Computer
  Science}, pages 191--205. Springer Berlin Heidelberg, 2012.

\bibitem{blo76}
Stephen~L. Bloom.
\newblock {Varieties of ordered algebras}.
\newblock {\em J. Comput. Syst. Sci.}, 13(2):200--212, 1976.

\bibitem{boj15}
Miko\l{}aj Boj\'{a}nczyk.
\newblock Recognisable languages over monads.
\newblock In Igor Potapov, editor, {\em Developments in Language Theory},
  volume 9168 of {\em LNCS}, pages 1--13. Springer, 2015.
\newblock Full version: \url{http://arxiv.org/abs/1502.04898}.

\bibitem{bw08}
Miko\l{}aj Boj\'{a}nczyk and Igor Walukiewicz.
\newblock {Forest algebras}.
\newblock In {\em Automata and Logic: History and Perspectives}, pages
  107--132, 2006.

\bibitem{bms13}
Marcello~M. Bonsangue, Stefan Milius, and Alexandra Silva.
\newblock Sound and complete axiomatizations of coalgebraic language
  equivalence.
\newblock {\em ACM Trans.~Comput.~Log.}, 14(1:7), 2013.

\bibitem{ggp08}
Mai Gehrke, Serge Grigorieff, and Jean-{\'E}ric Pin.
\newblock Duality and equational theory of regular languages.
\newblock In {\em Proc.~ICALP 2008, Part II}, volume 5126 of {\em Lecture Notes
  Comput.~Sci.}, pages 246--257. Springer, 2008.

\bibitem{goguen75}
Joseph~A. Goguen.
\newblock Discrete-time machines in closed monoidal categories. {I}.
\newblock {\em J. Comput. Syst. Sci.}, 10(1):1--43, 1975.

\bibitem{kock70}
Anders Kock.
\newblock Monads on symmetric monoidal closed categories.
\newblock {\em Arch. Math.}, 21:1--10, 1970.

\bibitem{maclane}
Saunders~Mac Lane.
\newblock {\em Categories for the working mathematician}.
\newblock Springer, 2nd edition, 1998.

\bibitem{pin15}
Jean-{\'E}ric Pin.
\newblock Mathematical foundations of automata theory.
\newblock Available at
  \url{http://www.liafa.jussieu.fr/~jep/PDF/MPRI/MPRI.pdf}.

\bibitem{pin95}
Jean-\'Eric Pin.
\newblock A variety theorem without complementation.
\newblock {\em Russian Mathematics}, 39:80--90, 1995.

\bibitem{pinweil}
Jean-\'Eric Pin and Pascal Weil.
\newblock A {R}eiterman theorem for pseudovarieties of finite first-order
  structures.
\newblock {\em Alg. Univ.}, 35:577--595, 1996.

\bibitem{polak01}
Libor Pol\'ak.
\newblock Syntactic semiring of a language.
\newblock In Ji\v{r}\'\i\ Sgall, Ale\v{s}\ Pultr, and Petr Kolman, editors,
  {\em Proc.~~International Symposium on Mathematical Foundations of Computer
  Science (MFCS)}, volume 2136 of {\em Lecture Notes Comput.~Sci.}, pages
  611--620. Springer, 2001.

\bibitem{rs59}
Michael~O. Rabin and Dana~S. Scott.
\newblock Finite automata and their decision problems.
\newblock {\em IBM J. Res. Dev.}, 3(2):114--125, April 1959.

\bibitem{reu80}
Christophe Reutenauer.
\newblock S\'eries formelles et alg\`ebres syntactiques.
\newblock {\em J.~Algebra}, 66:448--483, 1980.

\bibitem{rutten}
Jan J.~M.~M. Rutten.
\newblock Universal coalgebra: a theory of systems.
\newblock {\em Theoret.~Comput.~Sci.}, 249(1):3--80, 2000.

\bibitem{wilke91}
Thomas Wilke.
\newblock {An Eilenberg Theorem for Infinity-Languages}.
\newblock In {\em Proc. ICALP 91}, pages 588--599. Springer, 1991.

\end{thebibliography}

\end{document}